\documentclass{article}
\usepackage{amssymb}
\usepackage[numbers]{natbib}
\usepackage{enumitem}
\usepackage{pdfpages}
\usepackage{amssymb}
\usepackage{float}
\usepackage{pifont}
\usepackage{graphicx}
\usepackage{epstopdf}
\setcounter{tocdepth}{3}
\usepackage{url}
\usepackage{amsmath}
\usepackage{amsthm}
\usepackage[utf8]{inputenc}
\usepackage[T1]{fontenc}
\usepackage{lmodern}
\usepackage{ae}
\usepackage[normalem]{ulem}

\newtheorem{theorem}{\bf Theorem}[section]
 \newtheorem{proposit}[theorem]{\bf Proposition}
 \newtheorem{coro}[theorem]{\bf Corollary}
\newtheorem{lem}[theorem]{\bf Lemma}

 \newtheorem{example}[theorem]{\bf Example}
\newtheorem{fait}{\bf Claim}

\def\thm#1\par{\medskip\par\noindent\begin{theorem} \strut \sl #1 \end{theorem}\par}
\def\propo#1\par{\medskip\par\noindent\begin{proposit} \strut \sl #1 \end{proposit}
\par}
\def\cor#1\par{\medskip\par\noindent\begin{coro} \strut \sl #1 \end{coro}\par}
\def\lm#1\par{\medskip\par\noindent\begin{lem} \strut \sl #1 \end{lem}\par}
\def\fct#1\par{\medskip\par\noindent\begin{fait} \strut \sl #1 \end{fait}\par}
\date{}

\title{Topologies for Error-Detecting Variable-Length Codes}

\author{Jean N\'eraud\\
{\small{\it Univ Rouen Normandie, LITIS UR 4108, F-76000 Rouen, France}}}

\begin{document}
\maketitle
\begin{abstract}
Given  a finite alphabet $A$ and a quasi-metric $d$ over $A^*$, we introduce the relation $\tau_{d,k}\subseteq A^*\times A^*$ such  that $(x,y)\in\tau_{d,k}$ holds whenever  
$d(x,y)\le k$.
The error detection capability of variable-length codes  is  expressed in term of   conditions over  $\tau_{d,k}$. 
With respect to   the prefix metric, the  factor one, and any quasi-metric  associated with some free monoid (anti-)automorphism, we prove that one can decide whether a  given regular variable-length code  satisfies any of those error detection constraints.
\end{abstract}
\section{Introduction}
\label{Intro}

In Computer Science, the transmission of finite sequences of symbols (the so-called {\it words}) via some  channel
constitutes one of the most challenging research fields.
With the notation of the free monoid, some classical models 
may be informally described as indicated in the following:

Two finite  {\it alphabets}, say  $A$ and $B$,  are required,
every information being modeled by a unique word, say  $u$,  in $B^*$ (the {\it free monoid} generated by $B$). 
Usually, in order to  facilitate the transmission, beforehand $u$ is transformed in  $w\in A^*$, the so-called {\it input word}: this is  done by applying some fixed one-to-one {\it coding} mapping  $\phi: B^*\longrightarrow A^*$. 
In numerous cases, $\phi$ is an injective {\it monoid homomorphism},
whence $X=\phi(B)$ is a  {\it  variable-length code} (for short, a {\it code}): equivalently 
every equation among the words of $X$ is necessarily trivial. Such a translation  is particularly illustrated by the well-known examples of the Morse and Huffman codes. 
Next, $w$ is transmitted via a fixed {\it channel} into $w'\in A^*$, the so-called {\it output word}: should $w'$ be altered by some
{\it noise}
and then the resulting word $\phi^{-1}(w')\in B^*$ could be different from the initial word $u$.
In the most general model of  transmission, the channel is represented by some {\it probabilistic transducer}.
However, in the framework of error detection, most of the models only require that highly likely errors need to be taken into account:
in the present paper, 
we assume the transmission channel modeled by some {\it binary word relation}, namely $\tau\subseteq A^*\times A^*$.
In order to retrieve $u$, the homomorphism $\phi$, and thus the code $X$, must satisfy specific
constraints, which of course depend on the channel $\tau$:
in view of some  formalization, we denote by $\widehat\tau$ the {\it reflexive closure} of $\tau$, and by  $\uline\tau$ its  {\it anti-reflexive restriction} that is, 
$\tau\setminus\{(w,w):w\in A^*\}$.

About the channel itself, 
the so-called 
{\it synchronization constraint} appears mandatory: it sets that,
for each input word  factorized $w=x_1\cdots x_n$, where $x_1,\cdots,x_n$ are {\it codewords} in $X$,
every output word has to be factorized 
$w'= x'_1\cdots x'_n$, with $(x_1,x'_1),\cdots, (x_n,x'_n)\in\widehat\tau$. 
In order to ensure such a constraint,  as for the Morse code, some pause symbol could be inserted after each codeword $x_i$.

Regarding the code $X$,  in order to minimize the number of errors, in most cases 
some close neighborhood constraint is applied over  $\widehat\tau(X)$, the set of the images of the codewords under $\widehat\tau$. 
In the most frequent use, such a constraint consists of some minimal distance condition:
the smaller the distance between the input codeword $x\in X$ and any of its  corresponding output words $x'\in \widehat\tau(X)$,
the more optimal is error detection. 
In view of that, we fix over $A^*$ a {\it quasi-metric} $d$, in the meaning of \cite{W31} (the difference with a {\it metric} is that $d$ needs not to satisfy the symmetry axiom).

As outlined in \cite{CP02}, given an error tolerance level  $k\ge 0$, a corresponding  binary word relation, denoted in the present paper by  $\tau_{d,k}$,  can be associated in such a way that 
we set $(w,w')\in\tau_{d,k}$ (or equivalently, $w'\in \tau_{d,k}(w)$), whenever $d(w,w')\le k$ holds.

Below, in the spirit of \cite{JK97,N21}, we draw some specification regarding  error detection capability.
Recall that a subset  $X$ of $A^*$ is {\it independent} with respect to $\tau\subseteq A^*\times A^*$ (for short, $\tau$-{\it independent})
iff.  $\tau(X)\cap X=\emptyset$ holds. This notion, which appears dual with the one  of  {\it closed set} \cite{N21},  relies  to the famous {\it dependence systems} \cite{C81,JK97}.
In addition, given a family of  codes, say ${\cal F}$, a code  $X\in\cal F$  is {\it maximal in $\cal F$}
whenever $X\subseteq Y$, with $Y\in {\cal F}$, implies $Y=X$.

Given a code $X\subseteq A^*$, we introduce the four following conditions:
\begin{enumerate} [label={\bf                                                                                                                                                                                                                                                                                                                                                                                                                                                                      (c\arabic*)},wide=10pt]
\item \label{1} 
{\it Error detection:}
 $X$ is $\uline{\tau_{d,k}}$-{\it independent}. 
\item
\itemsep2pt
\label{2}
{\it Error correction:}
$x,y\in X$ and $\tau_{d,k}\left(x\right)\cap\tau_{d,k}\left(y\right)\neq\emptyset$ implies  $x=y$.
\item \label{3}   {\it $X$ is maximal in the family of \uline{$\tau_{d,k}$}-independent  codes}.
\item \label{4}  {\it $\widehat{\tau_{d,k}}(X)$ is a code.}
\end{enumerate}
A few  comments on Conds. \ref{1}--\ref{4}:

 \noindent\hspace*{1mm}
-- By definition, Cond. \ref{1} is satisfied 
 iff.  the quasi-distance between pairs of different elements of $X$ is greater than $k$
that is,  $X$ is able to detect at most $k$ errors in the transmission of any codeword.

\noindent\hspace*{1mm}
-- Cond. \ref{2}  sets a classical definition: it expresses that if some transmission error has been detected in an output word, necessarily such a word comes from a unique input codeword.
 
\noindent\hspace*{1mm}
-- With Cond. \ref{3}, in the family of \uline{$\tau_{d,k}$}-independent codes, $X$ cannot be improved.
From this point of view,  fruitful investigations have been done in several famous classes determined by code properties \cite{JKK01,KKK14,L00,L01}.
%

\noindent\hspace*{1mm}
-- At last, Cond. \ref{4} 
expresses that the factorization of any output
message over the set  $\widehat{\tau_{d,k}}(X)$ is done in a unique way. 
Actually, since $d$ is a quasi-metric, the corresponding relation ${\tau_{d,k}}$ is reflexive, therefore Cond. \ref{4} is equivalent to ${\tau_{d,k}}(X)$ itself being a code.
\smallskip

Actually, in most of the cases it could be very difficult, even impossible, to satisfy all together Conds. \ref{1}--\ref{4}: 
for instance, as shown in \cite{JK97,N21}, there are regular codes satisfying \ref{1} that cannot satisfy  \ref{2}.
Furthermore some compromise has to be adopted:
in view of this, given a regular code $X$, a natural question consists in examining whether each of  those conds. is satisfied  in the frameworks of classical free monoid quasi-metrics.
From this point of view, in \cite{N21}, we considered  the so-called  {\it edit relations},
 some peculiar  compositions of one-character {\it deletion}, {\it insertion}, and {\it substitution}:
such relations involve  the famous Levenshtein and Hamming metrics \cite{H50,K83,L65}, which  are prioritary related to {\it subsequences} in words.
In the present paper, we  focuse on the following quasi-metrics, the two first  ones involving {\it factors}:

\smallskip
\noindent\hspace*{1mm}
-- The {\it prefix} metric is defined  by $d_{\rm P}(w,w')=|w|+|w'|-2|w\wedge w'|$,
where $|w|$ stands for the length of the word $w$, and  $w\wedge w'$ denotes  the maximum length common {\it prefix} of $w$ and $w'$:
we set ${\cal P}_k=\tau_{d_{\rm P},k}$.

\noindent\hspace*{1mm}
-- The {\it factor} metric, for its part,  is
defined by $d_{\rm F}(w,w')=|w|+|w'|-2|f|$, where $f$ is a maximum length  common factor of $w$, $w'$: we set ${\cal F}_k=\tau_{d_{\rm F},k}$.

\noindent\hspace*{1mm}
-- A third type of topology can be introduced in connection with {\it monoid automorphisms} or {\it anti-automorphisms} (for short, we write {\it (anti-)automorphisms}):
such a topology particularly involves the domain of  DNA sequence comparison. 
By anti-automorphisms of the free monoid, we mean any one-to-one mapping onto $A^*$, say $\theta$, 
st.   the equation  $\theta(uv)=\theta(v)\theta(u)$ holds for  any $u,v\in A^*$: as in the case of automorphisms, each of those mappings actually extends to $A^*$ some permutation of $A$
(for involvements in the framework of closed codes, see \cite{NS20}).
With every  (anti-)automorphism $\theta$ we associate  the quasi-metric $d_\theta$, defined as follows:

\smallskip
\noindent\hspace*{2mm}(1) $d_\theta(w,w')=0$ is equivalent to $w=w'$;

\noindent\hspace*{2mm}(2) we set $d_\theta(w,w')=1$ whenever  $w'=\theta(w)$ holds,  with $w\ne w'$;

\noindent\hspace*{2mm}(3)  in all other cases we set $d_\theta(w,w')=2$. 

\smallskip\noindent 
It can be easily verified that, $k\ge 2$ implies $\tau_{d,k}=A^*\times A^*$. In other words, wrt. error detection constraints only the cond. $k=1$ takes sense:
by definition we have $\tau_{{d_\theta},1}=\widehat{\theta}$ and $\uline{\tau_{_\theta,1}}=\uline\theta$.

\smallskip
This paper  relates an extended and augmented version of the study  we presented in \cite{N22}.
In particular, answers are provided to some of the open questions that were asked: they concern the behavior of regular codes  wrt.   ${\cal F}_k$.
We prove the following result:
{\flushleft
{\bf Theorem.}}
{\it With the preceding notation, given a regular code $X\subseteq A^*$, for every $k\ge 1$, 
it can be decided whether $X$ satisfies any of Conds.  \ref{1}--\ref{4} wrt.  ${\cal P}_k$, ${\cal F}_k$, and $\widehat\theta$.
}

\medskip\noindent  Some comments about the proof:

\noindent
\hspace*{1mm}-- Regarding Cond. \ref{1}, 
we establish that, for each of the mentioned quasi-metrics, $\uline{\tau_{d,k}}(X)$
 is a  {\it regular} subset of $A^*$. When $d$ is  the prefix metric, this is done by proving that $\uline{\tau_{d,k}}$ itself is a regular relation that is, a regular subset of the monoid $A^*\times A^*$
 in the sense of \cite{EM65}.

 In the case where $d$ corresponds to the factor metric, although the question of the regularity of $\uline{\tau_{d,k}}$ still remains open,
 the result is obtained thanks to the construction of a peculiar finite set covering for $\uline{{\cal F}_k}\subseteq A^*\times A^*$.
Actually $\uline{{\cal F}_k}$ is precisely the union of the sets in that finite family.
Moreover the so-called {\it conjugacy}, some concepts from combinatorics on words \cite{Lo1983} allows to prove that each of those sets is  regular.
Regarding $\widehat\theta=\tau_{d_\theta,1}$, we prove   that   in any case $X$ satisfies Conds. \ref{1}, \ref{2} .

\noindent\hspace*{1mm}
--  In the case of the  relation ${\cal P}_k$ (resp., ${\cal F}_k$), we prove that $X$ satisfies Cond. \ref{2} iff.  it satisfies Cond. \ref{1} wrt.  
${\cal P}_{2k}$ (resp., ${\cal F}_{2k}$). 

\noindent\hspace*{1mm}
-- Wrt. each of the quasi-metrics raised in the paper we established  that, given a regular code $X\subseteq A^*$, $X$ is maximal in the family of the codes independent wrt. $\uline{\tau_{d,k}}$
iff. it is {\it complete} that is, every word of $A^*$ is  a factor of some word in $X^*$, the free submonoid of $A^*$ generated by $X$. 
Actually this is done by proving that    
any non-complete  $\uline{\tau_{d,k}}$-independent code  can be embedded into some complete one: 
in other words it cannot be maximal.
 In order to establish such a property, 
in the spirit of  \cite{BWZ90,L03,N06,N08,NS20,ZS95}, we provide specific regularity-preserving embedding formulas:

their schemes are based upon the methodology from \cite{ER85}.
Notice that,  in   \cite{JKK01,KM15,L00,L01,VVH05}, wrt.   peculiar families of sets, algorithmic methods for embedding a set into  some maximal (but not necessarily complete) one were also provided.

\noindent\hspace*{1mm}
-- Regarding Cond. \ref{4}, for each of the preceding relations, the set $\widehat{\tau_{d,k}}(X)={\tau_{d,k}}(X)$ is regular, therefore in any case, 
by applying the famous Sardinas and Patterson algorithm \cite{SP53}, one can decide whether that cond. is satisfied.

\medbreak
\noindent We now shorty describe the contents of the paper:

\noindent\hspace*{5mm}
-- Section 2 is devoted to the preliminaries: we recall fundamental notions about words, word binary relations,  regular
sets, and codes. 
Taking account that $d$ is a quasi-metric,  wrt. $\tau_{d,k}$ some equivalent formulations of the error correction cond. \ref{2} are provided.

\noindent\hspace*{5mm}
-- The aim of Sect. 3 is to study the relation ${\cal P}_k$. 
In addition,  corresponding  results wrt.  the so-called  {\it suffix} metric are set.

\noindent\hspace*{5mm}
-- Sect. 4 is devoted to a preliminary study about ${\cal F}_k$  and $\uline{{\cal F}_k}$.

We construct the above-mentioned finite covering  for $\uline{{\cal F}_k}\subseteq A^*\times A^*$.

Futhermore  we prove that  $\uline{{\cal F}_k}$ is regularity-preserving.

\noindent\hspace*{5mm}
-- The decidability results involving the factor metric are established in Sect. 5.

\noindent\hspace*{5mm}
-- Sect. 6 is devoted to quasi-metrics associated to (anti-)automorphisms. 

\noindent\hspace*{5mm}
-- The paper concludes  with some possible directions for further research (Sect. 7). 

\noindent\hspace*{5mm}
-- In order to make the 
decidability results clearer, if needed,
an appendix is added at the end of the paper. It  provides some basic support in order to further implementing corresponding algorithms.
Regarding the main study, in no way that appendix can constitute any prerequisite.
\section{Preliminaries}
\label{prelim}
Several definitions and notations has already been settled.
In what follows, we bring precision about concepts such as words,  automata,  regular relations and variable-length codes. If necessary, we suggest the reader that he (she)  report to classical books such as \cite{BPR10,E74,HU79,S03}.
Some classical decidability results are also set (for corresponding schemes of implementation see the appendix).

\subsection{Words}
In the whole paper, we fix a finite alphabet $A$, with $|A|\ge 2$. We denote by $\varepsilon$ 
the {\it empty word} that is, the word with {\it length} $0$, and we set $A^+=A^*\setminus\{\varepsilon\}$.
Given two words $v,w\in A^*$, $v$ is a {\it prefix} (resp., {\it suffix}, {\it factor}) of $w$ if 
words $u,u'$ exist st.  $w=vu$ (resp., $w=u'v$, $w=u'vu$).
 In the case where the equation $w=uv$ holds, we set $u=wv^{-1}$ and  $v=u^{-1}w$.
We denote by ${\rm P}(w)$ (resp.,  ${\rm S}(w)$, ${\rm F}(w)$) the set of the words that are prefixes (resp., suffixes,  factors) of $w$.
In the case where we have $v\ne w$, with $v\in {\rm P}(w)$ (resp., $v\in{\rm S}(w)$), we say that $v$ is a {\it proper prefix} (resp., {\it proper suffix}) of $w$.
More generally,  given $X\subseteq A^*$, we denote by ${\rm P}(X)$ the union of the sets ${\rm P}(x)$, for all the words $x\in X$ 
(the sets  ${\rm S}(X)$ and ${\rm F}(X)$ are defined in a similar way).
Given a word $w\in A^*$, we denote by $w^R$ its {\it reversal} that is, for $a_1,\cdots,a_n\in A$,  we have $w^R=a_n\cdots a_1$ whenever $w=a_1\cdots a_n$ holds.

%

A word $w\in A^*$  is {\it overlapping} whenever some $v\in A^*$ exists st.  $wv\in A^*w$, with $1\le |v|\le |w|-1$;
otherwise, $w$ is {\it overlapping-free} that is,  $wv\in A^*w$ with $|v|\le |w|-1$ implies $v=\varepsilon$.
For instance, $w=ababa$ is overlapping: taking $v=ba$, we have $wv=(ababa)(ba)=(ab)(ababa)\in A^*w$;
on the contrary, $w=ababb$ is overlapping-free.
Note that $w$ is overlapping-free iff. $w^R$  itself is overlapping free.
Classically, the following property holds (see e.g.  \cite[Proposition 1.3.6]{BPR10}):
\begin{proposit}
\label{overlapping-free constr}
Given $z_0\in A^+$, let $a$ be the initial letter of $z_0$ and  let $b\in A\setminus\{a\}$.
Then the word $z_0ab^{|z_0|}$ is overlapping free.
\end{proposit}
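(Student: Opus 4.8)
The plan is to prove the proposition by a direct combinatorial argument, reducing an alleged overlap of $w = z_0 a b^{|z_0|}$ to a contradiction obtained by reading off individual letters. Let me fix notation: write $z_0 = a_1 \cdots a_n$ with $a_1 = a$, so $|z_0| = n$ and $w = a_1 \cdots a_n\, a\, b^n$, hence $|w| = 2n+1$. Suppose, for contradiction, that $w$ is overlapping, i.e.\ there is $v \in A^*$ with $wv \in A^* w$ and $1 \le |v| \le |w|-1$. Writing $wv = uw$ for some $u \in A^*$ with $|u| = |v| = p$ where $1 \le p \le 2n$, this says precisely that $w$ has a proper prefix of length $2n+1-p$ that is simultaneously a proper suffix of $w$; equivalently $w$ has a nonempty border. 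So the task is to show $w$ has no border of length $\ell$ with $1 \le \ell \le 2n$.

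First I would split into cases according to the length $\ell = |w| - p = 2n+1-p$ of the purported border, i.e.\ according to how far the overlap reaches into the trailing block $b^n$. The key structural fact is that $w$ ends in $b^n$ while $z_0$ begins with $a \ne b$, and the single letter $a$ sits at position $n+1$ acting as a separator. The main case is a long overlap, $\ell \ge n+1$: here the length-$\ell$ suffix of $w$ contains the distinguished letter $a$ at position $n+1$ (counting from the start of $w$), and comparing the suffix and the prefix of length $\ell$ letter by letter forces a letter $b$ to align against a letter of $z_0$ or against the separator $a$, yielding $b = a$ or an inconsistency, contradicting $b \ne a$. For a short overlap, $\ell \le n$, the length-$\ell$ suffix of $w$ lies entirely inside $b^n$, so it equals $b^\ell$; but the length-$\ell$ prefix of $w$ is a prefix of $z_0$ beginning with $a$, so its first letter is $a \ne b$, again a contradiction. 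The boundary value $\ell$ around $n$ should be handled so that the separator letter $a$ is precisely what breaks any near-coincidence.

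Concretely, for the long-overlap case I would index the suffix occurrence of $w$ inside $wv = uw$ and examine the letter of $w$ that lands at the position of the final $a$ of the prefix-copy $z_0$ (position $n+1$) versus the position of the first $b$ of $b^n$; because the border forces periodicity with period $p$, one reads that $b^n$ would have to overlap a region of $w$ that contains the letter $a$, and tracing the equality $w_i = w_{i+p}$ across $i = n+1$ exhibits two positions that the border identifies but whose letters differ ($a$ against $b$). I expect the bookkeeping of indices across the three blocks $z_0$, $a$, $b^n$ to be the only delicate part; the logic is elementary once one observes that the unique occurrence of the letter pattern ``$a$ immediately followed by $b^n$'' pins down the alignment.

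I expect the main obstacle to be organizing the index arithmetic cleanly rather than any conceptual difficulty: the statement is a classical fact and the hard part is choosing the right distinguished position (the separator $a$) so that every candidate border length is eliminated uniformly, without an explosion of subcases. A cleaner alternative I would keep in reserve is to argue via the border/period correspondence and the fine periodicity theorem, noting that a border of length $\ell$ gives a period $p = 2n+1-\ell$ of $w$; one then shows that no such period is compatible with $w$ containing the block $b^n$ preceded by the letter $a \ne b$, since a period would propagate a $b$ backward into the separator or into $z_0$. Either route reduces to the same one-letter contradiction $a = b$, so the write-up would follow whichever gives the shortest index manipulation.
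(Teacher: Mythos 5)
Your proposal is correct, but note that the paper itself offers no proof of this proposition: it is quoted as a classical fact with a pointer to \cite[Proposition 1.3.6]{BPR10}, so there is no internal argument to compare against. Your direct argument is essentially the standard textbook one, and it does go through. After your (correct) translation of the overlapping condition $wv\in A^*w$, $1\le|v|\le|w|-1$, into the existence of a border of length $\ell$ with $1\le\ell\le 2n$ (where $n=|z_0|$ and $|w|=2n+1$), the case split is exactly right: for $\ell\le n$ the suffix of length $\ell$ is $b^\ell$ while the prefix of length $\ell$ begins with $a\ne b$; for $\ell\ge n+1$ the border identity $w_i=w_{(2n+1-\ell)+i}$ instantiated at the separator position $i=n+1$ (legitimate since $n+1\le\ell$) gives $a=w_{n+1}=w_{3n+2-\ell}$, and the inequalities $n+1\le\ell\le 2n$ force $n+2\le 3n+2-\ell\le 2n+1$, so the right-hand side lies in the block $b^n$, yielding $a=b$, a contradiction. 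This single instantiation settles the long case, so the index bookkeeping you flagged as the delicate part is really a one-line computation; your fallback route via periods is the same argument in different clothing ($p=2n+1-\ell$ and $w_{n+1}=w_{n+1+p}$). The only caveat is presentational: your text is a plan rather than a finished proof, and in a final write-up the long case should state the instantiation above explicitly instead of asserting that the alignment ``forces'' a mismatch.
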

%
\noindent
Two words $w,w'\in A^+$ are {\it conjugate} iff. a pair of words $\alpha$, $\beta$ exist st.  $w=\alpha\beta$ and $w'=\beta\alpha$, with $\beta\ne\varepsilon$.
The following result brings additional information: 
\begin{proposit}
\label{conjugacy}
{\rm \cite[Proposition 1.3.4]{Lo1983}}
Given a pair of non-empty words  $w,w'$, the two following  conds. are equivalent:

{\rm (i)} $w$ and $w'$ are conjugate.

{\rm (ii)} $t\in A^*$ exists st.  $wt=tw'$.
More precisely $\alpha\in A^*$, $\beta\in A^+$, and $n\in {\mathbb N}$ exist st.  $w=\alpha\beta$, $w'=\beta\alpha$, and $t\in(\alpha\beta)^n\alpha$.
\end{proposit}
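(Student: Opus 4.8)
The plan is to prove the two implications separately; (i)$\Rightarrow$(ii) is immediate and (ii)$\Rightarrow$(i) carries all the weight, the extra ``more precisely'' clause being built into the induction hypothesis. For (i)$\Rightarrow$(ii), suppose $w=\alpha\beta$ and $w'=\beta\alpha$ with $\beta\ne\varepsilon$. Then choosing $t=\alpha$ gives $wt=\alpha\beta\alpha=tw'$, and $t=\alpha\in(\alpha\beta)^0\alpha$; so (ii) holds, witnessed by $n=0$.

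For (ii)$\Rightarrow$(i), assume $wt=tw'$ for some $t\in A^*$. First I would record that comparing lengths in $wt=tw'$ forces $|w|=|w'|$. Then I would argue by (strong) induction on $|t|$, equivalently on $\lfloor|t|/|w|\rfloor$, the inductive claim being exactly the precise statement: there exist $\alpha\in A^*$, $\beta\in A^+$ and $n\in{\mathbb N}$ with $w=\alpha\beta$, $w'=\beta\alpha$ and $t\in(\alpha\beta)^n\alpha$. The base case is $|t|<|w|$: since $t$ is a prefix of $tw'=wt$ and $|t|<|w|$, the word $t$ is a prefix of $w$; setting $\alpha=t$ and $\beta=t^{-1}w$ (so $|\beta|=|w|-|t|\ge 1$, whence $\beta\ne\varepsilon$), the equation $wt=tw'$ reads $\alpha\beta\alpha=\alpha w'$, and left cancellation of $\alpha$ yields $w'=\beta\alpha$, with $t=\alpha\in(\alpha\beta)^0\alpha$.

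The inductive step is $|t|\ge|w|$: here $w$ is a prefix of $t$, since both $w$ and $t$ are prefixes of the common word $wt=tw'$ and $|w|\le|t|$. I would write $t=ws$ with $s=w^{-1}t$ and $|s|=|t|-|w|<|t|$; substituting into $wt=tw'$ and cancelling $w$ on the left turns the hypothesis into $ws=sw'$, to which the induction hypothesis applies. It produces $\alpha,\beta,n'$ with $w=\alpha\beta$, $w'=\beta\alpha$, $\beta\ne\varepsilon$ and $s\in(\alpha\beta)^{n'}\alpha$. Reassembling, $t=ws\in(\alpha\beta)^{n'+1}\alpha$, which closes the induction with $n=n'+1$.

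The argument is elementary, so there is no genuine obstacle; the only points demanding care are, first, the systematic use of free-monoid cancellation together with the basic fact that two words which are both prefixes of a common word, one no longer than the other, are comparable as prefixes (this is what legitimates ``$w=t\beta$'' and ``$t=ws$''), and second, threading the exponent through the recursion so as to land on the exact normal form $t\in(\alpha\beta)^n\alpha$ rather than merely asserting that $w$ and $w'$ are conjugate. I would note in passing that the same content can be read off in one shot from the observation that $wt=tw'$ forces $wt$ to admit $|w|$ as a period, so that $t$ is a prefix of a sufficiently high power of $w$; the induction above is just the bookkeeping version of this remark.
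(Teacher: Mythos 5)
Your proof is correct: the induction on $|t|$ with left cancellation (base case $|t|<|w|$ giving $\alpha=t$, $\beta=t^{-1}w$; inductive step peeling off $w$ from $t$ and incrementing the exponent $n$) is precisely the classical argument. The paper itself gives no proof of this statement --- it cites it as \cite[Proposition~1.3.4]{Lo1983} --- and your argument is essentially the standard one found there, with the ``more precisely'' clause correctly threaded through the induction hypothesis.
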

\subsection{Words binary relations}
\noindent 
Let $M$ be an arbitrary monoid. The following basic concepts are involved by our study:
%
\paragraph {Operations among subsets of $M$}
 Given two sets $X,Y\subseteq M$, their {\it concatenation product} is  $XY=\{xy: x\in X, y\in Y\}$
and the {\it Kleene star} of $X$ is $X^*=\{x_1 \cdots x_n:x_1,\cdots x_n\in X, n\ge 0\}$. 
Union, concatenation product, and Kleene star constitute the so-called {\it regular operations} into $2^M$.
In addition, the {\it left quotient} (resp., {\it right quotient}) of $X$ by $Y$ is $Y^{-1}X=\{z\in M : (\exists x\in X) ,  (\exists y\in Y),  x=yz\}$ (resp.,  $XY^{-1}=\{z\in M : (\exists x\in X) ,  (\exists y\in Y),  x=zy\}$).
\paragraph{$M$-automata} 
A $M$-automaton, say ${\cal A}$, is defined on the basis of a  finite labelled graph. 
The vertices are the so-called {\it states} and the $M$-labeled edges are the {\it transitions}: let $Q$ and $E\subseteq Q\times M\times Q$ be the corresponding sets.
The automaton is {\it finite} whenever  $E$ is a finite set.
Two subsets of $Q$ are actually distinguished  namely $I$, the initial states, and $T$, the terminal ones. 
In the paper, the transition $(q,m,q')\in E$
is commonly  denoted by  $q\xrightarrow{m}q'$.
Denoting by $1_M$ the identity element in the monoid $M$,  for any $q\in Q$, $q\xrightarrow{1_M}q$ is a transition:
 accordingly we left it out in  any graphical representation of ${\cal A}$.
A {\it successful path} is a chain
$(i\xrightarrow{m_0}q_1,q_1\xrightarrow{m_1}q_2,  \cdots, q_n\xrightarrow{m_n} t)$, with $i\in I$ and $t\in T$. For convenience, we denote it by $ i\xrightarrow{m_0}q_1\xrightarrow{m_1}q_2\cdots q_n\xrightarrow{m_n} t$.
The  so-called {\it behavior} of ${\cal A}$, which we denote by $\left|{\cal A}\right|$, is the subset with elements all the labels $m_0m_1\cdots m_n\in M$  of the corresponding successfull  paths.
Classically, for any finite $A^*$-automaton ${\cal A}$  there is another finite automaton ${\cal A'}$, with transitions labeled by characters of $A$, st. $\left|{\cal A}\right|=\left|{\cal A'}\right|$.
\paragraph{Regular sets}  A set $X\subseteq M$ is {\it regular} (or equivalently {\it rational}) if  it belongs to the {\it regular closure} of the finite subsets of $M$ that is, the smallest (wrt.  the inclusion) subset of $2^M$ that contains the finite subsets and which is  closed under the  regular operations (see \cite[Sect. II.1]{S03}).
By definition, the family of regular sets is closed under the regular operations.
The following property is attributed to Elgot and Mezei \cite{EM65}:
\begin{theorem} 
\label{Elgot-Mezei-0}
Let $M$ be  a monoid and  $X\subseteq M$.  The following conds. are equivalent:

{\rm (i)} $X$ is regular.

{\rm (ii)} $X$ is the behavior of some finite $M$-automaton.
\end{theorem}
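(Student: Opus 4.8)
The plan is to prove the two implications separately; this is the monoid version of Kleene's theorem, so each direction rests on a standard inductive construction and the real content is careful bookkeeping with the labels.

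For the implication {\rm (ii)} $\Rightarrow$ {\rm (i)}, I would fix a finite $M$-automaton ${\cal A}$ with state set $Q$, transition set $E$, initial set $I$, terminal set $T$, and behavior $X=\left|{\cal A}\right|$, then extract a regular expression for $X$ by a transitive-closure recursion. Enumerate $Q=\{q_1,\cdots,q_n\}$ and, for $0\le\ell\le n$, let $L_{ij}^{(\ell)}\subseteq M$ be the set of labels $m_0\cdots m_r$ of the paths from $q_i$ to $q_j$ whose intermediate states all belong to $\{q_1,\cdots,q_\ell\}$. Since $E$ is finite, $L_{ij}^{(0)}$, which collects the labels of the single transitions $q_i\xrightarrow{m}q_j$ together with $1_M$ when $i=j$, is a finite subset of $M$, hence regular. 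Decomposing a path according to the number of times it visits $q_\ell$ yields the recurrence
\[
L_{ij}^{(\ell)}=L_{ij}^{(\ell-1)}\cup L_{i\ell}^{(\ell-1)}\bigl(L_{\ell\ell}^{(\ell-1)}\bigr)^*L_{\ell j}^{(\ell-1)},
\]
which involves only union, concatenation product, and Kleene star; by induction on $\ell$ each $L_{ij}^{(\ell)}$ is regular. Consequently $X=\bigcup_{q_i\in I,\,q_j\in T}L_{ij}^{(n)}$ is a finite union of regular sets, hence regular.

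For the implication {\rm (i)} $\Rightarrow$ {\rm (ii)}, I would argue by structural induction on the construction of $X$ inside the regular closure of the finite subsets of $M$. It is convenient to assemble automata in normalized form, with a single initial state $i_0$ and a single terminal state $t_0$, and to employ transitions labeled $1_M$ as neutral connectors: this is legitimate because $E$ may be any finite subset of $Q\times M\times Q$, and a $1_M$-labeled edge contributes the factor $1_M$ to every path label, leaving it unchanged. In the base case a finite set $\{m_1,\cdots,m_r\}$ is realized by the automaton with states $\{i_0,t_0\}$ and transitions $i_0\xrightarrow{m_s}t_0$ for $1\le s\le r$ (the empty set being obtained by taking $T=\emptyset$). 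For the inductive steps, given normalized automata ${\cal A}_1,{\cal A}_2$ with disjoint state sets and behaviors $X_1,X_2$: the union $X_1\cup X_2$ is realized by the disjoint union of ${\cal A}_1$ and ${\cal A}_2$ linked to a fresh initial state and a fresh terminal state through $1_M$-edges; the product $X_1X_2$ is realized by adjoining $1_M$-edges from the terminal state of ${\cal A}_1$ to the initial state of ${\cal A}_2$, keeping the initial state of ${\cal A}_1$ and the terminal state of ${\cal A}_2$; and the star $X_1^*$ is realized by adjoining a fresh state $s$, declared both initial and terminal, with $1_M$-edges from $s$ to the initial state of ${\cal A}_1$ and from the terminal state of ${\cal A}_1$ back to $s$. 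In each case the label of a path is preserved across the $1_M$-connectors, so the behavior of the assembled automaton is exactly the intended regular combination.

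The main obstacle is the verification, in the product and especially the star cases, that the $1_M$-connectors glue path labels without creating or destroying any element of $M$. In particular one must check that the empty product $1_M$ belongs to $X_1^*$ precisely because the fresh initial-terminal state $s$ carries the implicit self-loop $s\xrightarrow{1_M}s$, whose label is $1_M$, and that iterating the connectors reproduces every finite product of elements of $X_1$ and nothing more. On the converse side, the only point requiring care is that the finiteness of $E$ is exactly what makes each $L_{ij}^{(0)}$ a finite, hence regular, set, and that the single-visit decomposition underlying the recurrence is genuinely exhaustive; granting these, both implications close.
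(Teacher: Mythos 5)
Your proof is correct, but note that the paper itself offers no proof of this statement: it is quoted as a classical result, attributed to Elgot and Mezei \cite{EM65}, and used as a black box throughout. What you have written is the standard lift of Kleene's theorem from $A^*$ to an arbitrary monoid $M$, and both directions hold up. For {\rm (ii)} $\Rightarrow$ {\rm (i)}, the McNaughton--Yamada-style recursion is sound in full generality precisely because the label of a path is the ordered product of its transition labels, so splitting a path at its visits to $q_\ell$ factors the label, and conversely concatenating paths with matching endpoints multiplies the labels; the finiteness of $E$ makes each $L_{ij}^{(0)}$ finite hence regular, and including $1_M$ in $L_{ii}^{(0)}$ matches the paper's convention that $q\xrightarrow{1_M}q$ is always a transition (this also covers the case $I\cap T\ne\emptyset$ with a zero-length successful path). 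For {\rm (i)} $\Rightarrow$ {\rm (ii)}, the Thompson-style assembly with $1_M$-connectors is legitimate under the paper's definition of an $M$-automaton, since $E$ may be any finite subset of $Q\times M\times Q$; your star construction with the fresh state $s$ both initial and terminal correctly yields $1_M\in X_1^*$, and because the connectors are one-directional no unintended labels are created even when the normalized sub-automata are reused in later product or union steps. The only point worth making explicit, which you gesture at but do not state, is that the structural induction is well-founded: the regular closure of the finite subsets of $M$ coincides with the collection of sets admitting a finite derivation tree over the three regular operations, which is the standard smallest-closed-family argument. With that remark added, the proof is complete.
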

\begin{example}
{\rm Let $M=\{a\}^*\times \{b\}^*$, and $X=\{(a^{n+1},b^{2n}): n\ge 0\}$. We have $X=YZ$, with $Y=\{(a,b^2)\}^*$ and $Z=\{(a,\varepsilon)\}$, whence $X$ is regular.
Actually, $X$ is the behavior of the finite $M$-automaton represented in Figure \ref{example-reg-subset-M}.
\begin{figure}[H]
\begin{center}
\includegraphics[width=6.28cm,height=1.64cm]{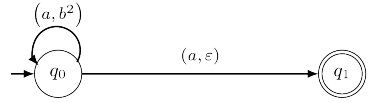}
\end{center}
\caption[]{\small A $M$-automaton with behavior $X$: 
by convention, the initial state is identified with an input arrow and the terminal one  with a double circle; transitions with label $1_M$ are left out.}
\label{example-reg-subset-M}
\end{figure}
}
\end{example}
\noindent In what follows, we recall some classical closure properties of regular sets:
\begin{proposit} 
\label{property-reg}
Given a pair of  monoids $M$, $N$,   and a   monoid homomorphism $h:M\rightarrow N$, the image of every regular subset of $M$ under $h$ is a regular subset of $N$.
\end{proposit}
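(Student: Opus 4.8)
The plan is to proceed by structural induction on the way the regular set $X\subseteq M$ is built from finite sets via the regular operations. Recall that, by the very definition of regularity recalled above, the family of regular subsets of $M$ is the smallest subfamily of $2^M$ that contains all finite sets and is closed under union, concatenation product, and Kleene star. Consequently, it suffices to verify that the image under $h$ of a finite set is regular, and that each of the three regular operations is preserved by $h$; the conclusion then follows by induction on the structure of a regular construction of $X$.

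For the base case, if $X$ is a finite subset of $M$ then $h(X)=\{h(x):x\in X\}$ is a finite subset of $N$, hence regular. For the inductive step I would establish the three commutation identities. The union case, $h(X\cup Y)=h(X)\cup h(Y)$, holds for an arbitrary map $h$. The concatenation case is precisely where the homomorphism hypothesis is used: since $h(xy)=h(x)h(y)$ for all $x,y\in M$, one obtains $h(XY)=\{h(xy):x\in X,\,y\in Y\}=\{h(x)h(y):x\in X,\,y\in Y\}=h(X)h(Y)$.

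For the Kleene star, I would first note that $h(X^n)=h(X)^n$ holds for every $n\ge 0$: this follows by induction on $n$ from the concatenation identity, the base case $n=0$ relying on the fact that $h$ is a \emph{monoid} homomorphism, so that $h(1_M)=1_N$ and thus $h(X^0)=\{1_N\}=h(X)^0$. Since $X^*=\bigcup_{n\ge 0}X^n$ and images commute with arbitrary unions, this yields $h(X^*)=\bigcup_{n\ge 0}h(X^n)=\bigcup_{n\ge 0}h(X)^n=h(X)^*$. Combining the base case with these three identities, a straightforward induction shows that $h(X)$ lies in the regular closure of the finite subsets of $N$, that is, $h(X)$ is regular.

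Finally, I would remark that the same result can be obtained, perhaps more concisely, through Theorem \ref{Elgot-Mezei-0}: writing $X=\left|{\cal A}\right|$ for some finite $M$-automaton ${\cal A}$, one replaces each transition label $m$ by $h(m)$ to obtain a finite $N$-automaton ${\cal A}'$ on the same underlying graph, with the same initial and terminal states. Since $h$ is a homomorphism, the label $m_0m_1\cdots m_n$ of any successful path is sent to $h(m_0)h(m_1)\cdots h(m_n)=h(m_0m_1\cdots m_n)$, its $h$-image; hence $\left|{\cal A}'\right|=h(\left|{\cal A}\right|)=h(X)$, and $h(X)$ is regular by Theorem \ref{Elgot-Mezei-0}. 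The only point requiring genuine care in either route is the empty-product ($n=0$) case of the star identity, which is exactly where the monoid—rather than merely semigroup—homomorphism hypothesis intervenes; everything else is routine verification.
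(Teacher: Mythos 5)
The paper states Proposition \ref{property-reg} without proof: it is recalled as a classical closure property of regular sets, so there is no in-paper argument to compare yours against. Your proof is correct on both routes you sketch. The structural induction is legitimate, since by the paper's definition the regular subsets of $M$ form the smallest family containing the finite sets and closed under union, product, and Kleene star; equivalently, and slightly more cleanly, one may observe that the family $\{X\subseteq M : h(X)\ \text{is regular in}\ N\}$ contains all finite sets and, by your three commutation identities $h(X\cup Y)=h(X)\cup h(Y)$, $h(XY)=h(X)h(Y)$, and $h(X^*)=h(X)^*$, is closed under the regular operations, hence contains every regular set. You are also right to flag the one delicate point: the case $n=0$ in $X^*=\bigcup_{n\ge 0}X^n$ requires $h(1_M)=1_N$, which is exactly where the monoid (rather than semigroup) homomorphism hypothesis intervenes. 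Your alternative route via Theorem \ref{Elgot-Mezei-0} is equally valid and arguably the one most in the spirit of the paper, since that theorem immediately precedes the proposition: relabelling each transition $q\xrightarrow{m}q'$ of a finite $M$-automaton by $h(m)$ preserves the underlying graph and the initial and terminal states, sends the label of every successful path to its $h$-image (the paper's implicit $1_M$-loops becoming $1_N$-loops), and so realizes $h(X)$ as the behavior of a finite $N$-automaton. Either argument would serve as a complete proof; no gap.
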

\noindent  In the most general case the family of regular subsets of $M$ is not closed neither under     intersection, nor complementation, nor inverse monoid homomorphism.
 However, in the case where $M=A^*$, the following noticeable result holds:
\begin{proposit}
\label{properties-reg-A}
The family of regular subsets of $A^*$ is closed under boolean operations, regular operations, left-quotient (resp. right-quotient),  and direct (resp. inverse) monoid endomorphism.
\end{proposit}
\noindent In particular the finite union (resp., intersection, concatenation) of regular sets is itself regular.

\paragraph{Regular relations}  Let  $M,N$ be two monoids.
A binary relation from $M$ into $N$ consists in any subset  $\tau$ of $M\times N$.
Since $M\times N$ itself is a monoid, 
those relations are directly involved by the preceding concept of regularity.
The {\it composition} in this order of  $\tau$ by $\tau'$ is defined by  $\tau\cdot\tau'\left(x\right)=
\tau'\left(\tau\left(x\right)\right)$ (the notation  $\tau^k$ refers to that operation).
The  {\it inverse} of $\tau$ is  the relation $\tau^{-1}\subseteq N\times M$  defined by  $(w,w')\in\tau^{-1}$ whenever $(w',w)\in\tau$; in addition
its {\it complement}  is  $\overline\tau=M\times N\setminus \tau$.
Composition and inverse preserve regularity among relations; however
in the most general case a regular relation is not preserved under complementation.  
In the paper the following result (see \cite[Sect. IX.3]{E74}  or  \cite[Sect. IV.1.3]{S03}) will be frequently applied:
\begin{proposit} 
\label{property-rec1} 
 Given a regular relation $\tau\subseteq A^*\times A^*$,
and a regular  set $X\subseteq A^*$, the set $\tau(X)$ is  regular.
\end{proposit}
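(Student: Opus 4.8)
The plan is to reduce the claim to the Elgot--Mezei characterization (Theorem \ref{Elgot-Mezei-0}) by building, from finite automata for $\tau$ and for $X$, a single finite $A^*$-automaton whose behavior is exactly
\[
\tau(X)=\{w'\in A^*:(\exists w\in X)\ (w,w')\in\tau\}.
\]
Conceptually the identity I want to realize is $\tau(X)=\pi_2\bigl(\tau\cap(X\times A^*)\bigr)$, where $\pi_2:A^*\times A^*\to A^*$ is the second-component projection, a monoid homomorphism. Thus, \emph{if} the intersection $\tau\cap(X\times A^*)$ were known to be regular, Proposition \ref{property-reg} would finish the argument at once. The obstacle is precisely that regular relations are \emph{not} closed under intersection in general; what saves the day is that the first factor $X$ is controlled by a finite automaton, so the intersection-and-projection can be performed together by a product construction, the recognizer of $X$ acting only on the first coordinate while the second coordinate is emitted as the label of the new automaton.

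Concretely, by Theorem \ref{Elgot-Mezei-0} the relation $\tau$ is the behavior of some finite $(A^*\times A^*)$-automaton $\mathcal{A}=(Q,E,I,T)$, whose transitions I write $q\xrightarrow{(u,v)}q'$ with $E$ finite. Likewise $X=|\mathcal{C}|$ for a finite $A^*$-automaton $\mathcal{C}=(P,F,J,U)$; using the normalization recalled just after Theorem \ref{Elgot-Mezei-0}, I may assume the transitions of $\mathcal{C}$ carry labels in $A$ (single characters). Writing $p\rightsquigarrow_u p'$ when $\mathcal{C}$ admits a path from $p$ to $p'$ with total label $u$, I define a finite $A^*$-automaton $\mathcal{B}$ on the state set $Q\times P$, with initial states $I\times J$, terminal states $T\times U$, and, for every transition $q\xrightarrow{(u,v)}q'$ of $\mathcal{A}$ and every pair $(p,p')\in P\times P$ with $p\rightsquigarrow_u p'$, a transition
\[
(q,p)\xrightarrow{\ v\ }(q',p').
\]
Since $E$ and $P$ are finite and each condition $p\rightsquigarrow_u p'$ is a finitary one, $\mathcal{B}$ has finitely many transitions.

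It then remains to check $|\mathcal{B}|=\tau(X)$, which is the real content. A successful path of $\mathcal{B}$ projects, via the first state-coordinate, to a successful path of $\mathcal{A}$ with some label $(u_0\cdots u_n,\,v_0\cdots v_n)$, and, via the second state-coordinate together with the chosen sub-paths $p_i\rightsquigarrow_{u_i}p_{i+1}$, to a successful path of $\mathcal{C}$ with label $u_0\cdots u_n$; hence $u_0\cdots u_n\in X$, while the label emitted by $\mathcal{B}$ is $v_0\cdots v_n$, so $|\mathcal{B}|\subseteq\tau(X)$. Conversely, given $w'\in\tau(X)$ there is $w\in X$ with $(w,w')\in\tau$; a successful $\mathcal{A}$-path labeled $(w,w')$ and a successful $\mathcal{C}$-path labeled $w$ can be cut at the letter boundaries matching the factorization of $w$ induced by the $\mathcal{A}$-transitions (here the character-labeling of $\mathcal{C}$ is used), and these combine into a successful $\mathcal{B}$-path emitting $w'$; thus $\tau(X)\subseteq|\mathcal{B}|$. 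By Theorem \ref{Elgot-Mezei-0}, $\tau(X)=|\mathcal{B}|$ is regular.

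I expect the correctness of the product construction --- the matching between successful paths of $\mathcal{B}$ and pairs consisting of an $\mathcal{A}$-path whose first component lies in $X$ --- to be the only delicate point, the asymmetry (track the first coordinate, emit the second) and the factorization alignment being where care is needed; the normalization of $\mathcal{C}$ to character labels removes the alignment difficulty. As an alternative route one could invoke a Nivat-style representation $\tau=\{(\alpha(c),\beta(c)):c\in L\}$ with $L\subseteq C^*$ regular and $\alpha,\beta:C^*\to A^*$ homomorphisms, and then write $\tau(X)=\beta\bigl(\alpha^{-1}(X)\cap L\bigr)$, so that regularity follows from closure under inverse homomorphism, intersection, and direct homomorphism (Propositions \ref{properties-reg-A} and \ref{property-reg}); this is shorter but relies on a representation theorem not developed in the present paper.
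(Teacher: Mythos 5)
Your proposal is correct, but it takes a genuinely different route from the paper's. The paper does not prove Proposition \ref{property-rec1} in the body at all --- it quotes the result from \cite[Sect. IX.3]{E74} and \cite[Sect. IV.1.3]{S03} --- and its appendix (item (iv)) sketches the intended proof via Nivat's decomposition \cite{Ni68}: $\tau=\phi^{-1}\cdot\iota_K\cdot\psi$ with $Z=A'\cup A''$, $K\subseteq Z^*$ regular, and $\phi,\psi:Z^*\rightarrow A^*$ homomorphisms, whence $\tau(X)=\psi\bigl(\phi^{-1}(X)\cap K\bigr)$ is regular by the closure properties of Propositions \ref{properties-reg-A} and \ref{property-reg}. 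So the ``alternative route'' you set aside at the end as relying on a representation theorem ``not developed in the present paper'' is in fact the paper's own argument, and the paper does develop it --- in the appendix, including the explicit construction of automata for $\iota_K$, $\phi$, and $\psi$. Your asymmetric product construction is sound on its own terms: $\mathcal{B}$ is finite since $E$ and $P$ are; $|\mathcal{B}|\subseteq\tau(X)$ follows from the two state-projections; and the converse inclusion works exactly because the character-labeling of $\mathcal{C}$ lets you cut a successful $\mathcal{C}$-path at the boundaries of the factorization $w=u_0\cdots u_n$ imposed by the chosen $\mathcal{A}$-path (the degenerate case $u_i=\varepsilon$, where $p\rightsquigarrow_\varepsilon p'$ forces $p=p'$, is harmless). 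What your route buys is self-containedness and directness: it uses only Theorem \ref{Elgot-Mezei-0} and the character-label normalization, both recalled in the paper's preliminaries, and it exhibits in one step the effective automaton for $\tau(X)$ that the paper's decidability results ultimately need. What the Nivat route buys is modularity: regularity of $\tau(X)$ falls out of three standard closure properties of regular languages over $A^*$ and $Z^*$ (inverse homomorphism, intersection, direct homomorphism), with no bespoke path-correspondence argument left to verify.
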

\noindent At last, $id_{A^*}=\{(w,w)|w\in A^*\}$ and its complement  $\overline {id_{A^*}}$ are regular relations 
(see Fig. \ref{Automaton-comp-idA*} ).
\begin{figure}[H]
\begin{center}
\includegraphics[width=4.5cm,height=5.5cm]{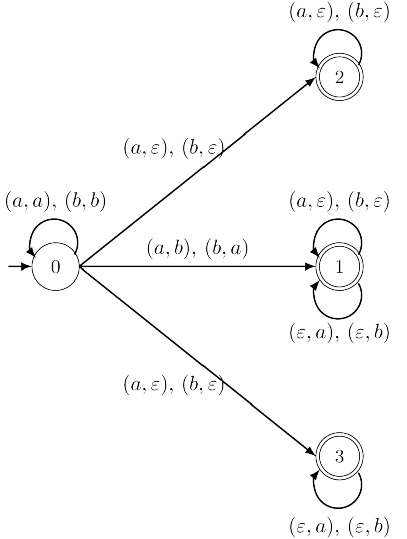}
\end{center}
\caption[]{\small An automaton with behavior $\overline{id_{A^*}}$ in the case where $A=\{a,b\}$ (arrows are multi-labelled).}
\label{Automaton-comp-idA*}
\end{figure}
\subsection{Variable-length codes}
\noindent Given a subset $X$ of $A^*$, and $w \in X^*$, let $x_1, \cdots, x_n\in X$ st.  $w$ is the result of the concatenation of the words $x_1, x_2,\cdots, x_n$, in this order. 
In view of specifying the factorization of $w$ over X, we use the notation $w =(x_1)(x_2)\cdots (x_n)$, or equivalently: $w =x_1\cdot x_2\cdots x_n$. 
For instance, over the set $X=\{a, ab, ba\}$, the word $bab\in X^*$ can be factorized as  $(ba)(b)$ or $(b)(ab)$ (equivalently denoted by $ba\cdot b$ or  $b\cdot ab$).

A set $X$ is a {\it variable-length code} (a {\it code} for short) if  for any pair of finite sequences of words in $X$, say  $(x_i)_{1\le i\le n}$, $(y_j)_{1\le j\le p}$, the equation
$x_1\cdots x_n=y_1\cdots y_p$ implies  $n=p$, and $x_i=y_i$ for each integer $i\in [1,n]$.
In other words, every element of $X^*$ has a unique factorization over $X$, equivalently the submonoid $X^*$ is {\it free}. In particular, 
$X\ne\{\varepsilon\}$ is a {\it prefix} (resp., {\it suffix}) {\it code} whenever, for every pair of words $x,y\in X$, the cond. $x\in {\rm P}(y)$ (resp., $x\in {\rm S}(y)$) implies $x=y$.
In addition, $X$ is a {\it bifix} code  if it is both a prefix code and a suffix one. Any set $X\ne\{\varepsilon\}$ whose elements have a common length is a complete  bifix code,
we say that it is a {\it uniform code}.

A positive {\it Bernoulli distribution} consists in a total mapping $\mu$ from the alphabet $A$ into ${\mathbb R}_+=\{x\in {\mathbb R}: x\ge 0\}$ (the set of the non-negative real numbers)  st.    $\sum_{a\in A}\mu(a)=1$.
Such a mapping  is extended into a unique  monoid homomorphism from  $A^*$ into $({\mathbb R}_+,\times)$, which is itself extended into a unique  positive {\it Bernoulli measure}  $\mu:2^{A^*}\longrightarrow {\mathbb R}_+$.
 In order to do so, for each word $w\in A^*$, we set  $\mu\left(\{w\}\right)=\mu(w)$; in addition given two disjoint subsets $X,Y$ of $A^*$, we set $\mu(X\cup Y)=\mu(X)+\mu(Y)$.
According to the famous {\it  Kraft inequality},  for every
 code $X$ we have $\mu(X)\le 1$.
In the whole paper, we take for $\mu$  the  so-called {\it uniform} Bernoulli measure: it is determined by $\mu(a)=1/|A|$, for each $a\in A$. 

The two following results are classical: the first one is due to Sch\"utzenberger (see \cite[Theorem 2.5.16]{BPR10}) and the second provides some answer to a question actually set in \cite{R77}.
\begin{theorem}
\label{classic}
Given a  regular code $X\subseteq A^*$,   the following conds. are equivalent:

{\rm (i)} $X$ is {\it complete}.

{\rm (ii)} $X$ is a maximal  code.

{\rm (iii)}    $\mu(X)=1$ holds.
\end{theorem}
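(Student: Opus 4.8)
The plan is to reduce each of the four conditions, for each of the three relations $\mathcal{P}_k$, $\mathcal{F}_k$, $\widehat\theta$, to an emptiness test or a code-recognition test for regular subsets of $A^*$, and to show each such test is decidable. The common engine has two parts: (a) for every relation under consideration, the images $\underline{\tau_{d,k}}(X)$ and $\tau_{d,k}(X)$ of a regular code $X$ are again regular; (b) a characterisation, inside the family of $\underline{\tau_{d,k}}$-independent codes, of maximality in terms of completeness. Once (a) is available, Cond.~\ref{1} is exactly the emptiness of the regular set $\underline{\tau_{d,k}}(X)\cap X$ (Proposition~\ref{properties-reg-A}), and Cond.~\ref{4} is settled by feeding the regular set $\tau_{d,k}(X)=\widehat{\tau_{d,k}}(X)$ to the Sardinas--Patterson test \cite{SP53}; once (b) is available, Cond.~\ref{3} reduces to deciding completeness of the regular code $X$, which by Theorem~\ref{classic} is the decidable test $\mu(X)=1$.

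First I would establish part~(a) relation by relation. For $\mathcal{P}_k$ the aim is stronger: exhibit a finite $A^*\times A^*$-automaton whose behaviour is exactly $\underline{\mathcal{P}_k}$ (and likewise $\widehat{\mathcal{P}_k}$), so that $\underline{\mathcal{P}_k}$ is a regular relation by Theorem~\ref{Elgot-Mezei-0}; then $\underline{\mathcal{P}_k}(X)$ and $\mathcal{P}_k(X)$ are regular by Proposition~\ref{property-rec1}. For $\mathcal{F}_k$ the regularity of $\underline{\mathcal{F}_k}$ as a relation is not at hand, so instead I would build a finite family of regular relations whose union is precisely $\underline{\mathcal{F}_k}$, the parameters of this covering being controlled through conjugacy (Proposition~\ref{conjugacy}), which is exactly what lets each piece be expressed with the regular operations. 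Since $\underline{\mathcal{F}_k}(X)$ is then a finite union of images of $X$ under regular relations, it is regular by Propositions~\ref{property-rec1} and~\ref{properties-reg-A}, and the same covering yields regularity of $\mathcal{F}_k(X)$. For $\widehat\theta$ the map $\theta$ is, up to reversal, a letter-to-letter morphism, so $\theta(X)$ and $\theta^{-1}(X)$ are regular, hence so is $\underline{\theta}(X)=\underline{\tau_{d_\theta,1}}(X)$.

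Next, for Cond.~\ref{2} I would exploit the quasi-metric structure. For $\mathcal{P}_k$ (resp. $\mathcal{F}_k$) the condition $\tau_{d,k}(x)\cap\tau_{d,k}(y)\neq\emptyset$ is equivalent, via the triangle inequality together with the equivalent reformulations of~\ref{2} announced for Section~\ref{prelim}, to $d(x,y)\le 2k$; hence Cond.~\ref{2} holds iff $X$ satisfies Cond.~\ref{1} wrt $\mathcal{P}_{2k}$ (resp. $\mathcal{F}_{2k}$), which is already decidable by the previous paragraph. For $\widehat\theta$, since $\tau_{d_\theta,1}(x)=\{x,\theta(x)\}$, Cond.~\ref{2} amounts to ruling out distinct $x,y\in X$ with $y=\theta(x)$ or $y=\theta^{-1}(x)$, i.e. to the off-diagonal emptiness of $\theta(X)\cap X$ and $\theta^{-1}(X)\cap X$, which is decidable because all the sets involved are regular.

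The heart of the argument, and the step I expect to be hardest, is part~(b): for a regular $\underline{\tau_{d,k}}$-independent code $X$, maximality in that family is equivalent to completeness. The direction ``complete $\Rightarrow$ maximal'' is immediate from Theorem~\ref{classic}, since a complete code admits no proper code extension at all, a fortiori none within the independent subfamily. For the converse I would show that a non-complete $\underline{\tau_{d,k}}$-independent code can always be properly embedded into a complete one that is still a code and still $\underline{\tau_{d,k}}$-independent, so $X$ cannot be maximal. Following the Ehrenfeucht--Rozenberg methodology \cite{ER85}, the construction seeds the extension with an overlapping-free word supplied by Proposition~\ref{overlapping-free constr}, taken long enough to sit at $\tau_{d,k}$-distance exceeding $k$ from $X$, and then closes it up by an explicit regularity-preserving embedding formula. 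The delicate point is to design, for each of $\mathcal{P}_k$, $\mathcal{F}_k$ and $\widehat\theta$, a formula that adds words while simultaneously preserving all three invariants --- being a code, being $\underline{\tau_{d,k}}$-independent, and being regular --- the factor metric being the most demanding because there the independence constraint is tied to the conjugacy-based covering rather than to a single regular relation. With~(b) in place, Cond.~\ref{3} is decided by the test $\mu(X)=1$, which completes the proof for all four conditions and all three relations.
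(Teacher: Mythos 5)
You have proved the wrong statement. The theorem you were asked to prove is Theorem~\ref{classic}, Sch\"utzenberger's classical equivalence for a regular code $X\subseteq A^*$ between (i) completeness, (ii) maximality as a code, and (iii) $\mu(X)=1$. What you have written is instead an outline of the paper's \emph{main} decidability result (Theorem~\ref{decidable}): decidability of Conds.~\ref{1}--\ref{4} wrt ${\cal P}_k$, ${\cal F}_k$, and $\widehat\theta$. Worse, your argument is circular as a proof of the target: you invoke Theorem~\ref{classic} itself twice, once to reduce Cond.~\ref{3} to the test $\mu(X)=1$ and once for the direction ``complete $\Rightarrow$ maximal''. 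Note also that the paper offers no proof to compare against --- it cites the result as classical, due to Sch\"utzenberger \cite[Theorem 2.5.16]{BPR10} --- so a blind proof attempt here must actually reprove the classical theorem, not the paper's applications of it.

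For the record, a correct proof would run roughly as follows, and none of its three steps appears in your text. For (iii)$\Rightarrow$(ii): if $X\subsetneq Y$ with $Y$ a code and $w\in Y\setminus X$, then since the uniform Bernoulli measure is positive, $\mu(Y)\ge \mu(X)+\mu(w)>1$, contradicting the Kraft inequality $\mu(Y)\le 1$; hence $X$ is maximal. For (ii)$\Rightarrow$(i), argue by contrapositive using the Ehrenfeucht--Rozenberg construction (Theorem~\ref{EhRz}): if $X$ is not complete, choose $z_0\notin {\rm F}(X^*)$, pass to the overlapping-free word $z=z_0ab^{|z_0|}$ of Proposition~\ref{overlapping-free constr}, and obtain a code $Z=X\cup (zU)^*z$ strictly containing $X$ (note $z\in Z\setminus X$ since $z\notin {\rm F}(X^*)$), so $X$ is not maximal. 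The genuinely nontrivial step is (i)$\Rightarrow$(iii): here one uses that a regular code is \emph{thin} (some word is not a factor of $X$) and that for a thin complete code the density/measure argument yields $\mu(X)\ge 1$, which combined with Kraft's $\mu(X)\le 1$ gives $\mu(X)=1$; this analytic step has no substitute anywhere in your proposal, and it cannot be extracted from the automata-theoretic machinery (Propositions~\ref{property-rec1}, \ref{properties-reg-A}, \ref{decidable-bases}) you rely on.
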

\begin{theorem}{\rm \cite{ER85}}.
\label{EhRz}
Let $X\subseteq A^*$ be a non-complete
code and let $z\notin {\rm F}(X^*)$ be an overlapping-free word. Set $U=A^*\setminus\left( X^*\cup A^*zA^*\right)$ and $Y=(zU)^*z$. Then $Z=X\cup Y$ is a   
complete code.
\end{theorem}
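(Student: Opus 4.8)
The plan is to prove the two defining properties of a complete code separately: first that $Z=X\cup Y$ is a code, and then that it is complete (every word of $A^*$ is a factor of some word in $Z^*$). Throughout I will rely on two structural facts that follow at once from the hypotheses. Since $z\notin{\rm F}(X^*)$, every word of $X^*$ — hence every element of $X$ — avoids $z$ as a factor, so $X^*\subseteq\overline{A^*zA^*}$; writing $V=\overline{A^*zA^*}=A^*\setminus A^*zA^*$ for the set of $z$-free words, the very definition of $U$ gives the \emph{disjoint} decomposition $V=X^*\sqcup U$. On the other hand, every element of $Y=(zU)^*z$ begins and ends with $z$, and between two consecutive occurrences of $z$ inside it sits a \emph{nonempty} word of $U$ (recall $\varepsilon\notin U$); in particular $X\cap Y=\emptyset$ and $\varepsilon\notin Z$. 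Note that no regularity of $X$ is assumed, so I must argue completeness directly rather than through Theorem~\ref{classic}.

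For completeness I would show the stronger statement that $zwz\in Z^*$ for every $w\in A^*$; since $w\in{\rm F}(zwz)$ this yields $A^*\subseteq{\rm F}(Z^*)$. Fix $w$ and list the occurrences of $z$ in $zwz$ from left to right. Because $z$ is overlapping-free these occurrences are pairwise non-overlapping, and the word between two consecutive ones is $z$-free (a straddling occurrence would overlap a neighbour, while an interior one would contradict consecutiveness); as $zwz$ starts and ends with $z$ there is no material outside the extreme occurrences. Hence $zwz=zg_1zg_2\cdots zg_{r-1}z$ with each gap $g_i\in V=X^*\sqcup U$. I then read off a factorization over $Z$: I cut the word at every gap lying in $X^*$, factoring such a gap into letters of $X$ (the $Y$-factor to its left terminating at the preceding $z$, a fresh one starting at the following $z$), while a maximal run of consecutive gaps in $U$, together with the surrounding $z$'s, forms a single element of $(zU)^*z=Y$. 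Since $\varepsilon\in X^*$ this rule covers adjacent occurrences of $z$ as well, and one checks it always produces a product of elements of $X\cup Y$.

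For the code property, the crux is an intrinsic description of the occurrences of $z$. Let $w\in Z^*$ and fix any factorization $w=c_1\cdots c_m$ over $Z$. Because the $X$-factors and the $U$-blocks inside $Y$-factors are $z$-free while $X^*$ avoids $z$, the only full occurrences of $z$ produced by the factorization are the ``framing'' ones delimiting the blocks of the $Y$-factors; I claim these are \emph{exactly} the occurrences of $z$ in $w$. Indeed, any occurrence is a contiguous block of length $|z|$, and since consecutive $z$-free gaps of this block decomposition are separated by at least one framing occurrence, it cannot fit inside the $z$-free part and must overlap a framing occurrence, whence by overlapping-freeness it coincides with it. Consequently the positions of the framing $z$'s depend only on $w$. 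Writing $w=w_0zw_1z\cdots zw_l$ for this intrinsic decomposition, each $w_i$ is $z$-free, $w_0,w_l\in X^*$, and each internal gap lies in exactly one of $X^*$ or $U$ by the disjointness $X^*\cap U=\emptyset$. A gap in $U$ is forced to be an internal block of a single $Y$-factor, while a gap in $X^*$ (possibly empty) is forced to be a maximal run of $X$-factors separating two $Y$-factors, and each such $X^*$-block factors uniquely because $X$ is a code. Thus any two factorizations of $w$ must agree, and $Z$ is a code.

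The main obstacle is precisely the claim that the framing $z$'s coincide with \emph{all} the occurrences of $z$ in $w$, independently of the chosen factorization: this is where both hypotheses are indispensable, $z\notin{\rm F}(X^*)$ guaranteeing that the non-framing parts (including any concatenation of $X$-factors) are $z$-free, and the overlapping-freeness of $z$ ruling out occurrences that straddle a boundary without coinciding with a framing $z$. Once this synchronization of occurrences is secured, both the existence of a factorization for $zwz$ and the uniqueness of factorizations in $Z^*$ reduce to the bookkeeping above, together with the disjointness $V=X^*\sqcup U$ and the fact that $X$ is itself a code.
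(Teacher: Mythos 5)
Your proof is correct, but there is no in-paper argument to compare it against: Theorem \ref{EhRz} is stated with a citation to \cite{ER85} and is never proved in the paper, which uses it purely as a black box in the embedding results (Props. \ref{Comp-Pref-Independent}, \ref{Com-Fact--Independent}, \ref{Embed-theta}). What you have written is, in substance, the classical Ehrenfeucht--Rozenberg argument, and both pillars check out. For completeness: the disjoint decomposition $A^*\setminus A^*zA^*=X^*\sqcup U$ (valid because $z\notin{\rm F}(X^*)$ and $\varepsilon\in X^*$), together with the fact that the occurrences of $z$ in $zwz$ are pairwise non-overlapping (overlapping-freeness) with $z$-free gaps, lets you parse $zwz$ into $Y$-factors over maximal runs of $U$-gaps and $X$-factorizations of the $X^*$-gaps, empty gaps being absorbed via $\varepsilon\in X^*$, $\varepsilon\notin U$; hence $zwz\in Z^*$ for all $w$, and $Z$ is complete. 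You are right that this direct route is necessary: no regularity of $X$ is assumed, so Theorem \ref{classic} is unavailable. For the code property, the crux is exactly your synchronization claim that in any $Z$-factorization the framing $z$'s coincide with \emph{all} occurrences of $z$ in the product, and your derivation is sound: the maximal $z$-free stretches are single words of $U$ or concatenations of $X$-factors lying in $X^*$ (this is where the full hypothesis $z\notin{\rm F}(X^*)$, not merely $z\notin{\rm F}(X)$, is used, as you flag), and an occurrence straddling a boundary must share a position with a framing $z$, which overlapping-freeness forbids unless the two coincide. Since the internal gaps are then intrinsically classified by $X^*\cap U=\emptyset$, the grouping into $Y$-factors is forced, and the residual $X^*$-stretches factor uniquely because $X$ is a code; so any two factorizations agree. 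In short: a complete and correct reconstruction of the cited proof, with the delicate points (non-overlapping occurrences, intrinsic position of the framing $z$'s, disjointness of $X^*$ and $U$) all handled explicitly.
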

\noindent With the cond. of Theorem \ref{EhRz}  if $X$ is   regular then the same holds for the resulting code $Z$.
In addition, since $z$ is an overlapping-free word in  $A^*\setminus{\rm F}(U)$, the following property holds: it will be applied further in the paper
(see  Prop.  \ref{Comp-Pref-Independent}).
\begin{lem}
\label{prefix-Uz}
With the preceding notation, the set $Uz$ is a prefix code.
\end{lem}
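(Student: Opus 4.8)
The plan is to unwind the definition of a prefix code directly: I must show that whenever $u_1 z$ and $u_2 z$ lie in $Uz$ (with $u_1,u_2\in U$) and $u_1 z\in {\rm P}(u_2 z)$, then in fact $u_1 z=u_2 z$. First I would assume toward a contradiction that $u_1 z$ is a \emph{proper} prefix of $u_2 z$, so that a non-empty word $w$ exists with $u_2 z=u_1 z w$. The whole argument then hinges on the single observation that $z$ occurs as a suffix of $u_2 z=u_1 z w$, and I would split into two cases according to whether $|w|\ge|z|$ or $1\le|w|\le|z|-1$, invoking a different hypothesis on $z$ in each.

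In the case $|w|\ge|z|$, the suffix $z$ of $u_1 z w$ is entirely contained in $w$, so $w=w_0 z$ for some $w_0\in A^*$; cancelling the final $z$ gives $u_2=u_1 z w_0$. But then $z\in{\rm F}(u_2)$, i.e.\ $u_2\in A^* z A^*$, which contradicts $u_2\in U=A^*\setminus(X^*\cup A^* z A^*)$. In the case $1\le|w|\le|z|-1$, the suffix $z$ of $u_1 z w$ is shorter than the suffix $zw$, hence $z$ is a suffix of $zw$; thus a word $t$ exists with $zw=tz$ and $1\le|t|=|w|\le|z|-1$. This is exactly the statement that $z$ is overlapping (take $v=w$ in the definition $zv\in A^* z$, $1\le|v|\le|z|-1$), contradicting the hypothesis that $z$ is overlapping-free. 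Since both cases are impossible, no element of $Uz$ is a proper prefix of another, and $Uz$ is a prefix code.

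I do not expect a genuine obstacle here: the result is a short combinatorics-on-words argument, and the only real decision is the length comparison $|w|$ versus $|z|$ that routes the two available hypotheses ($z\notin{\rm F}(U)$ on one side, $z$ overlapping-free on the other) to their respective contradictions. The one point that deserves care is recognizing the relation $zw=tz$ with $1\le|w|\le|z|-1$ as precisely the definition of overlapping, so that the hypothesis on $z$ applies verbatim; everything else is length bookkeeping and a single cancellation of a common suffix $z$.
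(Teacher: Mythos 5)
Your proof is correct, and it is exactly the argument the paper intends: the paper states Lemma~\ref{prefix-Uz} without a written proof, remarking only that it holds ``since $z$ is an overlapping-free word in $A^*\setminus{\rm F}(U)$,'' and your two cases route precisely those two hypotheses ($z\notin{\rm F}(U)$ when $|w|\ge|z|$, overlapping-freeness when $1\le|w|\le|z|-1$) to their contradictions. Nothing is missing; the case split on $|w|$ versus $|z|$ and the identification of $zw\in A^*z$ with the definition of overlapping are exactly the intended bookkeeping.
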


\noindent Finally, the following result is the basis of the decidability properties we establish in the paper: %
\begin{proposit}
\label{decidable-bases}
Given a monoid  $M$ and a regular set $X\subseteq M$, what follows holds:

{\rm (i)} In any case it can be decided whether $X=\emptyset$.

{\rm (ii)} If $M=A^*$ then it can be decided whether $X$ is a code.

{\rm (iii)} If $M=A^*$ then one can decide whether $\mu(X)=1$ holds.
\end{proposit}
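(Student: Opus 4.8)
The plan is to reduce each item to an effective computation on a finite automaton realizing $X$, which is available by Theorem~\ref{Elgot-Mezei-0}. For (i), I would fix a finite $M$-automaton $\mathcal{A}=(Q,E,I,T)$ whose behaviour is $X$. Every word of $X$ is the label of some successful path, and conversely, so $X=\emptyset$ holds precisely when $\mathcal{A}$ has no successful path, i.e.\ when no state of $T$ is accessible from a state of $I$ along the edges of $E$. Since $Q$ and $E$ are finite and the edge labels are irrelevant to this question, a routine accessibility search (marking the states reachable from $I$ and testing whether any lies in $T$) settles it. As this argument never inspects $M$, it is valid for an arbitrary monoid.

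For (ii), taking $M=A^{*}$, I would run the Sardinas--Patterson test \cite{SP53}: put $U_{1}=(X^{-1}X)\setminus\{\varepsilon\}$ and $U_{n+1}=(X^{-1}U_{n})\cup(U_{n}^{-1}X)$, the set $X$ being a code if and only if $\varepsilon\notin U_{n}$ for every $n\ge 1$. By Proposition~\ref{properties-reg-A} each $U_{n}$ is a regular set, effectively computable from a deterministic automaton for $X$, and membership of $\varepsilon$ is a routine test. Termination comes from the fact that every $U_{n}$ is a union of left quotients $w^{-1}X$: this is clear for $U_{n}^{-1}X=\bigcup_{u\in U_{n}}u^{-1}X$, while for the other term one uses the identity $x^{-1}(w^{-1}X)=(wx)^{-1}X$. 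Since a regular set possesses only finitely many distinct left quotients, the $U_{n}$ range over a finite family; one computes them until the sequence repeats and checks $\varepsilon$-membership throughout.

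For (iii), again with $M=A^{*}$, I would pass to a complete deterministic automaton $\mathcal{D}=(Q,\delta,q_{0},T)$ for $X$ and read the uniform measure from the associated random walk. Let $N$ be the $Q\times Q$ matrix with $N_{q,q'}=\#\{a\in A:\delta(q,a)=q'\}$, so that $P=|A|^{-1}N$ is stochastic, and let $c_{q}=1$ for $q\in T$ and $c_{q}=0$ otherwise. Determinism yields $\mu(X)=\sum_{n\ge 0}(P^{n}c)_{q_{0}}$, the expected number of visits to $T$ of the walk started at $q_{0}$. Decomposing $Q$ into strongly connected components, this series diverges exactly when some terminal state reachable from $q_{0}$ lies in a bottom component (a recurrent class): that is a decidable graph condition, and in this case $\mu(X)=+\infty\neq 1$. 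Otherwise all relevant terminal states are transient, the substochastic restriction $P_{R}$ to the transient states $R$ has spectral radius below $1$, and $\mu(X)=\bigl((I-P_{R})^{-1}c_{R}\bigr)_{q_{0}}$ is a rational number obtained by one finite linear solve; I then compare it with $1$.

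I expect the only genuine difficulty to lie in (iii): parts (i) and (ii) are standard graph- and quotient-level computations, whereas here one must first exclude the divergence of the defining series for a regular set that need not be a code, and then extract the exact rational value when it converges, the strongly-connected-component analysis being precisely what makes both halves effective. It is worth noting that in the situation used later in the paper $X$ is a code, so Kraft's inequality forces $\mu(X)\le 1<\infty$ and the divergent case cannot occur, reducing the computation to the single linear solve.
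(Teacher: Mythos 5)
Your proposal is correct. For (i) and (ii) it follows essentially the paper's own route (spelled out in the appendix): emptiness is an accessibility test on a finite $M$-automaton --- the paper phrases it as ``$X=\emptyset$ iff no path of length less than $n$ is successful'' for an $n$-state automaton, which is the same reachability computation you describe --- and the code test is Sardinas and Patterson's algorithm \cite{SP53}; where the paper merely asserts that $(U_n)_{n\ge 0}$ is ultimately periodic, your left-quotient argument (each $U_n$ is a union of sets $w^{-1}X$, of which a regular set has only finitely many) is the standard justification of that assertion, up to the harmless detail that $U_1$ has $\varepsilon$ removed, which disappears after one step since $\varepsilon\notin X$ may be assumed. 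For (iii) you take a genuinely different route: the paper converts an automaton for $X$ into an \emph{unambiguous} regular expression via McNaughton--Yamada \cite{MY60} and evaluates $\mu$ recursively with $\mu(R_1+R_2)=\mu(R_1)+\mu(R_2)$, $\mu(R_1R_2)=\mu(R_1)\mu(R_2)$, and $\mu(R^*)=(1-\mu(R))^{-1}$, whereas you read $\mu(X)$ off a complete DFA as the expected number of visits to terminal states of the uniform random walk, isolating divergence by a strongly-connected-component analysis and solving $(I-P_R)x=c_R$ in the convergent case; determinism plus completeness plays in your argument exactly the role that unambiguity plays in the paper's. Your version buys two things: it treats the divergent case honestly --- the paper's star rule tacitly assumes $\mu(R)<1$ and, taken literally, divides by zero on inputs such as $X=A^*$, whose unambiguous expression $(a+b)^*$ over $A=\{a,b\}$ has $\mu(a+b)=1$, so the paper's recipe in fact needs the very case distinction your SCC test makes explicit --- and it yields the exact rational value of $\mu(X)$ by one linear solve over $\mathbb{Q}$; the paper's version buys a compositional, purely symbolic evaluation that stays within its regular-expression toolkit. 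The only patch your write-up needs is the degenerate case where $q_0$ is itself recurrent yet the divergence test fails: then no terminal state is reachable at all, so $\mu(X)=0\neq 1$ directly, while the expression $\bigl((I-P_R)^{-1}c_R\bigr)_{q_0}$ is vacuous because $q_0\notin R$; one sentence covers it.
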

\subsection{Some equivalent formulation of the error correction condition.}
\label {deeper-quasi-metrics} 

Let  $A$ be an alphabet and $d: A^*\times A^*\longrightarrow {\mathbb R}^+$ be a quasi-metric over $A^*$.  
Given a positive integer $k$, by definition we have $\tau_{d,k}\subseteq \tau_{d,k+1}$. In addition, if $d$ is a metric then $\tau_{d,k}$ is a symmetric relation that is, the equation $\tau_{d,k}^{-1}=\tau_{d,k}$ holds.
We close Sect. \ref{prelim} by proving the following result, which will be applied several times in the rest of the paper:
\begin{lem}
\label{equiv-c2}
Given  a quasi-metric $d$ over $A^*$ and $X\subseteq A^*$, the following conds. are equivalent:

{\rm (i)} $X$ satisfies Cond. \ref{2} wrt.  $\tau_{d,k}$.

{\rm (ii)}  For every $x\in X$ we have $\tau_{d,k}^{-1}\left(\tau_{d,k}\left(x\right)\right)\cap X=\{x\}$.

{\rm (iii)} $X$ satisfies Cond. \ref{1} wrt.  $\tau_{d,k}\cdot\tau_{d,k}^{-1}$.
\end{lem}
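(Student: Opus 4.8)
The plan is to prove the chain of equivalences $\text{(i)}\Leftrightarrow\text{(ii)}\Leftrightarrow\text{(iii)}$ by unwinding the definitions of the relation $\tau_{d,k}$ and the two conditions involved. The key observation driving everything is the reformulation of the non-disjointness hypothesis in Cond. \ref{2}: for $x,y\in X$, the condition $\tau_{d,k}(x)\cap\tau_{d,k}(y)\neq\emptyset$ says precisely that some word $z$ satisfies $z\in\tau_{d,k}(x)$ and $z\in\tau_{d,k}(y)$, equivalently $(x,z)\in\tau_{d,k}$ and $(y,z)\in\tau_{d,k}$. The latter can be rewritten as $(z,y)\in\tau_{d,k}^{-1}$, so that $y\in\tau_{d,k}^{-1}(z)\subseteq\tau_{d,k}^{-1}\bigl(\tau_{d,k}(x)\bigr)$.

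First I would prove $\text{(i)}\Leftrightarrow\text{(ii)}$. Assume (i). Fix $x\in X$. Since $d$ is a quasi-metric we have $d(x,x)=0\le k$, hence $x\in\tau_{d,k}(x)$ and then $x\in\tau_{d,k}^{-1}\bigl(\tau_{d,k}(x)\bigr)$; as $x\in X$ this gives $x\in\tau_{d,k}^{-1}\bigl(\tau_{d,k}(x)\bigr)\cap X$. For the reverse inclusion, let $y\in\tau_{d,k}^{-1}\bigl(\tau_{d,k}(x)\bigr)\cap X$; by the reformulation above this yields $\tau_{d,k}(x)\cap\tau_{d,k}(y)\neq\emptyset$ with $x,y\in X$, so Cond. \ref{2} forces $y=x$. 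Thus $\tau_{d,k}^{-1}\bigl(\tau_{d,k}(x)\bigr)\cap X=\{x\}$, which is (ii). Conversely, assume (ii) and take $x,y\in X$ with $\tau_{d,k}(x)\cap\tau_{d,k}(y)\neq\emptyset$; the reformulation places $y$ in $\tau_{d,k}^{-1}\bigl(\tau_{d,k}(x)\bigr)\cap X=\{x\}$, so $y=x$, giving (i).

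Next I would prove $\text{(ii)}\Leftrightarrow\text{(iii)}$ by identifying the composite relation. By the definition of composition given in the preliminaries, $\tau_{d,k}\cdot\tau_{d,k}^{-1}(x)=\tau_{d,k}^{-1}\bigl(\tau_{d,k}(x)\bigr)$, so Cond. \ref{1} wrt. $\tau_{d,k}\cdot\tau_{d,k}^{-1}$ — namely $\underline{\tau_{d,k}\cdot\tau_{d,k}^{-1}}$-independence — concerns exactly the set in (ii). The only subtlety is the underline (anti-reflexive restriction): Cond. \ref{1} uses $\underline{\tau}$ rather than $\tau$, i.e. it removes the diagonal pairs $(w,w)$. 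Since $x$ always lies in $\tau_{d,k}^{-1}\bigl(\tau_{d,k}(x)\bigr)$ by reflexivity, stripping the diagonal is exactly what turns the equality ``$\cap X=\{x\}$'' in (ii) into the disjointness ``$\underline{\tau_{d,k}\cdot\tau_{d,k}^{-1}}(X)\cap X=\emptyset$'' defining (iii). I would therefore argue that $\underline{\tau_{d,k}\cdot\tau_{d,k}^{-1}}(x)=\tau_{d,k}^{-1}\bigl(\tau_{d,k}(x)\bigr)\setminus\{x\}$ for each $x$, so that independence wrt. the composite relation is equivalent to (ii) holding for every $x\in X$.

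The main obstacle — though it is conceptual rather than computational — is handling the anti-reflexive restriction carefully in the passage to (iii). One must check that removing the diagonal of the composite relation removes exactly the point $x$ from $\tau_{d,k}^{-1}\bigl(\tau_{d,k}(x)\bigr)$ for each $x\in X$, and no other element of $X$; this is where the reflexivity of $\tau_{d,k}$ (itself a consequence of the quasi-metric axiom $d(x,x)=0$) is essential and must be invoked explicitly. The remaining steps are straightforward definition-chasing.
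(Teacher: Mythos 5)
Your proof is correct and follows essentially the same route as the paper's: both rest on the reformulation of $\tau_{d,k}\left(x\right)\cap\tau_{d,k}\left(y\right)\neq\emptyset$ as $y\in\tau_{d,k}^{-1}\left(\tau_{d,k}\left(x\right)\right)$, together with the reflexivity of $\tau_{d,k}$ coming from the quasi-metric axiom $d(x,x)=0$. The only difference is organizational: the paper closes the loop with two contrapositive implications ((i)$\Rightarrow$(iii) and (iii)$\Rightarrow$(ii)), whereas you derive (ii)$\Leftrightarrow$(iii) directly from the pointwise identity $\uline{\tau_{d,k}\cdot\tau_{d,k}^{-1}}\left(x\right)=\tau_{d,k}^{-1}\left(\tau_{d,k}\left(x\right)\right)\setminus\{x\}$ --- a slightly tighter packaging of the same facts.
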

\begin{proof}
-- Beforehand, we
prove that any  singleton satisfies each of Conds. (i)--(iii) of the lemma.
Let  $X=\{x\}$.  By definition, trivially $X$  satisfies Cond. \ref{2}.
In addition, since $\tau_{d,k}$ is reflexive,  we have  
$x\in \tau_{d,k}^{-1}\left(x\right)$, thus $x\in \tau_{d,k}\cdot\tau_{d,k}^{-1}\left(x\right)$. We obtain  $\tau_{d,k}\cdot\tau_{d,k}^{-1}(X)\cap X=\{x\}$, whence $X$ satisfies Cond. (ii). 
Finally, $\tau_{d,k}\cdot\tau_{d,k}^{-1}\left(x\right)\cap X=\{x\}$ implies 
$\uline{\tau_{d,k}\cdot\tau_{d,k}^{-1}}(X)\cap X=\emptyset$, thus $X$ satisfies  Cond.\ref{1} wrt. $\uline{\tau_{d,k}\cdot\tau_{d,k}^{-1}}$.
In the rest of the proof we assume $|X|\ge 2$.

\smallbreak
\noindent \hspace*{2mm}-- In what follows we prove that Conds. (i) and (ii) are equivalent.
Firstly, assuming that $X$ satisfies Cond. (i),  
consider $x\in X$ and  
 $y\in\tau_{d,k}^{-1}\left(\tau_{d,k}\left(x\right)\right)\cap X$.
By construction $\tau_{d,k}\left(x\right)\cap\tau_{d,k}\left(y\right)\neq\emptyset$ holds, therefore we have  $x=y$, thus $\tau_{d,k}^{-1}(\tau_{d,k}\left(x\right))\cap X=\{x\}$.
Conversely, assuming that Cond. (ii) holds, 
let $x,y\in X$ st.  $\tau_{d,k}\left(x\right)\cap\tau_{d,k}\left(y\right)\ne\emptyset$. Some $z\in \tau_{d,k}\left(x\right)\cap\tau_{d,k}\left(y\right)$ exists, moreover
we have $y\in\tau_{d,k}^{-1}(z)\subseteq\tau_{d,k}^{-1}(\tau_{d,k}\left(x\right))$. It follows from $y\in X$ and 
 $\tau_{d,k}^{-1}\left(\tau_{d,k}\left(x\right)\right)\cap X=\{x\}$ that $y=x$.

\smallbreak
\noindent \hspace*{2mm}--  We prove that Cond. (i) implies Cond.  (iii) in  arguing by contrapositive.
Assuming that $X$ does not satisfy Cond. (iii), by definition  we have $\uline{\tau_{d,k}\cdot\tau_{d,k}^{-1}}(X)\cap X\ne\emptyset$, whence
$x,y\in X$  exist st.  $y\in\uline{\tau_{d,k}\cdot\tau_{d,k}^{-1}}\left(x\right)$. 
By definition, we have $y\in\tau_{d,k}\cdot\tau_{d,k}^{-1}\left(x\right)$, with $y\ne x$.
In other words $z\in A^*$ exists  st.  we have $z\in\tau_{d,k}\left(x\right)$ and $y\in\tau_{d,k}^{-1}(z)$ that is,
$z\in \tau_{d,k}\left(x\right)\cap\tau_{d,k}\left(y\right)$, thus $ \tau_{d,k}\left(x\right)\cap\tau_{d,k}\left(y\right)\ne\emptyset$.
It follows from $y\ne x$ that $X$ cannot satisfy Cond. \ref{2} wrt. $\tau_{d,k}$.

\smallbreak
\noindent \hspace*{2mm}--  Once more arguing by contrapositive, we prove that Cond. (iii) implies Cond. (ii). 
Assume that $x\in X$ exists st. $\tau_{d,k}^{-1}\left(\tau_{d,k}\left(x\right)\right)\cap X\ne\{x\}$. There is some $y\ne x$ st. 
$y\in \uline{\tau_{d,k}\cdot\tau_{d,k}^{-1}}(X)\cap X$: by definition $X$ cannot be  $\uline{\tau_{d,k}\cdot\tau_{d,k}^{-1}}$-independent.
\end{proof}
\section{Error detection and the prefix metric}
\label{Prefix-metric}

We start with a few examples:

\begin{example}
\label{Prefs2}
{\rm Over $A=\{a,b\}$, consider the finite prefix  code  $X=\{a,ba,b^2\}$.

\noindent\hspace*{2mm} -- $ X$ satisfies Cond. \ref{1} wrt.   ${\rm \cal P}_1$ (that is,  $X$ is $1$-error-detecting).
Indeed, it follows from $a\wedge ba=a\wedge b^2=\varepsilon$ and $ba\wedge b^2=b$ that $d_{\rm P}(a,ba)=|a|+|ba|=3$, $d_{\rm P}(a,b^2)=|a|+|b^2|=3$, and $d_{\rm P}(ba,b^2)=|ba|+|b^2|-2|b|=2$.
Consequently,  for each pair of different words $x,y\in X$, we have $(x,y)\notin {\cal P}_1$ that is,  $\uline{{\rm \cal P}_1}(X)\cap X=\emptyset$.

\noindent  \hspace*{2mm}--  Regarding Cond. \ref{2}, in view of Lemma \ref{equiv-c2}, firstly we compute 
$\uline{{\cal P}_1\cdot{\cal P}_1^{-1}}\left(ba\right)\cap X=\uline{{\cal P}_1^{2}}\left(ba\right)\cap X$.
It follows from ${\cal P}_1\left(ba\right)=\{b,ba,ba^2,bab\}$
that  ${\cal P}_1^{2}\left(ba\right)=\{\varepsilon,b,ba,b^2,ba^2,bab,ba^3,\\ ba^2b,baba,bab^2\}$, 
thus $\uline{{\cal P}_1^{2}}\left(ba\right)=\{\varepsilon,b,b^2,ba^2,bab,ba^3,ba^2b, baba,bab^2\}$. This implies
$\uline{{\cal P}_1^{2}}\left(ba\right)\cap X=\{b^2\}$, whence $X$ cannot satisfy Cond. \ref{1} wrt.  $\uline{{\cal P}_1\cdot{\cal P}_1^{-1}}$. According to Lemma \ref{equiv-c2},
$X$ cannot satisfy Cond. \ref{2} wrt. ${\cal P}_1$.

\noindent  \hspace*{2mm} --  We have  $\mu(X)=1/2+1/4+1/4=1$ therefore, according to  Theorem \ref{classic}, $X$ is a maximal code. Since it is $\uline{{\cal P} _1}$-independent,  $X$ is maximal   in the family of $\uline{{\cal P}_1}$-independent codes that is, it satisfies Cond. \ref{3}.

\noindent  \hspace*{2mm} -- Since we have $X\subsetneq {\cal P}_1(X)$, and since $X$  is a maximal code, the set $\widehat {{\cal P}_1}(X)={\cal P}_1(X)$ cannot be a code that is, $X$ cannot satisfy Cond. \ref{4} (we verify that we have $\varepsilon\in{\cal P}_1(X)$).
}
\end{example}
\begin{example}
\label{Prefs3}
{\rm 
Let  $n\ge 2$ and $k\in [1,n-1]$. Consider the   uniform code $X=A^n$. 

\noindent  \hspace*{2mm} 
-- For any $x\in X$ we have   $\uline{{\cal P}_k}(x)\subseteq \left(A^{n-k}\cup\cdots\cup A^{n-1}\right)\cup \left(A^{n+1}\cup\cdots\cup\\ A^{n+k}\right)$:
this implies $\uline{{\cal P}_k}(X)\cap X=\emptyset$, thus $X$ satisfies Cond. \ref{1} wrt.  ${\cal P}_k$. 

\noindent  \hspace*{2mm} --  However, Cond. \ref{2} is not satisfied by $X$:
indeed, given two different characters  $a,b$, we have $a^{n-1}\in{\cal P}_k(a^n)\cap {\cal P}_k(a^{n-1}b)$.

\noindent  \hspace*{2mm} --  As mentioned in the preliminaries $X$ is complete. According to Theorem \ref{classic} $X$ is a maximal (bifix) code,
hence it is maximal in the family of $\uline{{\cal P}_k}$-independent codes that is, $X$ satisfies  Cond. \ref{3} wrt. ${\cal P}_k$.

\noindent  \hspace*{2mm}--  We have $X\subsetneq {\cal P}_k(X)$: since $X$ is a maximal code, it  cannot satisfies Cond. \ref{4}.

}
\end{example}
\begin{example}
\label{Prefs4}
{\rm 
Over the alphabet $A=\{a,b\}$, consider the  regular bifix code $X=\{ab^na: n\ge 0\}\cup \{ba^nb: n\ge 0\}$ and the relation ${\cal P}_1$.

\noindent  \hspace*{2mm} --   It follows from ${\uline{\cal P}_1}(X)=\bigcup_{n\ge 0}\{ab^n,ab^na^2,ab^nab, ba^n, ba^nba,  ba^nb^2\}$,
 that ${\uline{\cal P}_1}(X)\cap X=\emptyset$, whence $X$ satisfies Cond. \ref{1}.

\noindent  \hspace*{2mm} --   For $n\ne 0$ we have 
 $\uline{{\cal P}_1\cdot{\cal P}_1^{-1}}\left(ab^na\right)={\cal P}_1^{2}\left(ab^na\right)\setminus\{ab^na\}=\{ab^n, ab^{n-1}, ab^{n+1}, ab^na^2, ab^na^3,\\ab^na^2b,ab^nab,ab^naba,ab^nab^2\}$. 
Similarly, for $n=0$ we have $\uline{{\cal P}_1\cdot{\cal P}_1^{-1}}\left(ab^na\right)= \uline{{\cal P}_1}^2(a^2)={\cal P}_1\left(\{a,a^2,a^3,a^2b\}\right)\setminus\{a^2\}=\{\varepsilon, a, ab, a^3,  a^2b, a^4, a^3b, a^2b,a^2ba,a^2b^2\}$.
In any case we obtain $X\cap \uline{{\cal P}_1\cdot{\cal P}_1^{-1}}\left(ab^na\right)=\emptyset$.
 Similarly,  we have $X\cap \uline{{\cal P}_1\cdot{\cal P}_1^{-1}}\left(ba^nb\right)=\emptyset$.
Consequently, $X$ is $\uline{{\cal P}_1\cdot{\cal P}_1^{-1}}$-independent therefore,
according to Lemma \ref{equiv-c2}, $X$ satisfies Cond. \ref{2} wrt.  ${\cal P}_1$.

\noindent  \hspace*{2mm} --  Regarding Cond. \ref{3}, we have $\mu(X)=2\cdot1/4\sum_{n\ge 0}(1/2)^n=1$:
according to Theorem \ref{classic}, $X$ is a maximal code, whence it is maximal in the family of $\uline{{\cal P}_1}$-independent codes.

\noindent  \hspace*{2mm} --  Since we have $X\subsetneq {\cal P}_1(X)$,  $X$ cannot satisfies Cond. \ref{4} (we verify that $a,a^2\in {\cal P}_1(X)$).
}
\end{example}
\subsection{A preliminary study of the relation ${\cal P}_k$}
Given a pair of words $w,w'$, let  $p=w\wedge w'$ and let  $u$, $u'$ be the unique pair of words st.   $w=pu$ and $w'=pu'$.
By definition   we have  $u\wedge u'=\varepsilon$ and  $d_{\rm P}(w,w')=|w|+|w'|-2|p|=|u|+|u'|$, therefore
the following property comes from the definition of ${\cal P}_k$ (see  Fig. \ref{Figure-prefix-distance}):
\begin{fait}
\label{claim0}
With the preceding notation, each of  the following properties holds:

{\rm (i)} $(w,w')\in{\cal P}_k$ is equivalent to $0\le |u|+|u'|\le k$.

{\rm (ii)} $(w,w')\in\uline{{\cal P}_k}$ is equivalent to $1\le |u|+|u'|\le k$.
\end{fait}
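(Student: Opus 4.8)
The plan is to reduce both items to the length identity displayed just above the claim, namely $d_{\rm P}(w,w')=|u|+|u'|$, which is itself an immediate consequence of $p=w\wedge w'$ and of the definition of $d_{\rm P}$. Once this identity is available, each item follows by merely unwinding the definitions of ${\cal P}_k$ and of its anti-reflexive restriction $\uline{{\cal P}_k}$, so the whole argument is a short chain of equivalences rather than a genuine construction.

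For item (i), I would recall that, by the very definition of $\tau_{d,k}$, we have $(w,w')\in{\cal P}_k=\tau_{d_{\rm P},k}$ if and only if $d_{\rm P}(w,w')\le k$. Substituting $d_{\rm P}(w,w')=|u|+|u'|$ turns this into $(w,w')\in{\cal P}_k$ if and only if $|u|+|u'|\le k$. Since word lengths are non-negative, the inequality $0\le|u|+|u'|$ always holds, so appending it is harmless; the lower bound $0$ is written only to parallel the form of item (ii).

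For item (ii), I would use that $\uline{{\cal P}_k}={\cal P}_k\setminus id_{A^*}$, so that $(w,w')\in\uline{{\cal P}_k}$ is equivalent to $(w,w')\in{\cal P}_k$ together with $w\ne w'$. The one point deserving a line of justification is the equivalence $w=w'\Longleftrightarrow|u|+|u'|=0$: from $w=pu$ and $w'=pu'$, left cancellation in the free monoid $A^*$ gives $w=w'\Longleftrightarrow u=u'$, and since $p=w\wedge w'$ forces $u\wedge u'=\varepsilon$, the equality $u=u'$ can hold only when $u=u'=\varepsilon$, i.e. when $|u|+|u'|=0$; conversely $|u|+|u'|=0$ yields $u=u'=\varepsilon$ and hence $w=p=w'$. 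Combining this with item (i), the conjunction ``$(w,w')\in{\cal P}_k$ and $w\ne w'$'' becomes ``$|u|+|u'|\le k$ and $|u|+|u'|\ge 1$'', that is $1\le|u|+|u'|\le k$.

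I do not expect a real obstacle here, since the statement is essentially a reformulation of the length identity for $d_{\rm P}$ read through the definitions of the two relations. The only subtlety worth flagging is the characterisation of the equality $w=w'$ via $|u|+|u'|=0$, which rests on the uniqueness of the factorisation induced by the maximal common prefix $p$ and on the cancellativity of $A^*$; everything else is a direct substitution into the defining inequalities.
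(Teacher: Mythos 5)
Your proposal is correct and follows the same route as the paper: the claim is stated there as an immediate consequence of the identity $d_{\rm P}(w,w')=|w|+|w'|-2|p|=|u|+|u'|$ established just before it, combined with the definitions of $\tau_{d_{\rm P},k}$ and of the anti-reflexive restriction. Your extra line justifying $w=w'\Longleftrightarrow |u|+|u'|=0$ via cancellation and $u\wedge u'=\varepsilon$ is a detail the paper leaves implicit, but it is the same argument.
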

\begin{figure}[H]
\begin{center}
\includegraphics[width=8cm,height=4cm]{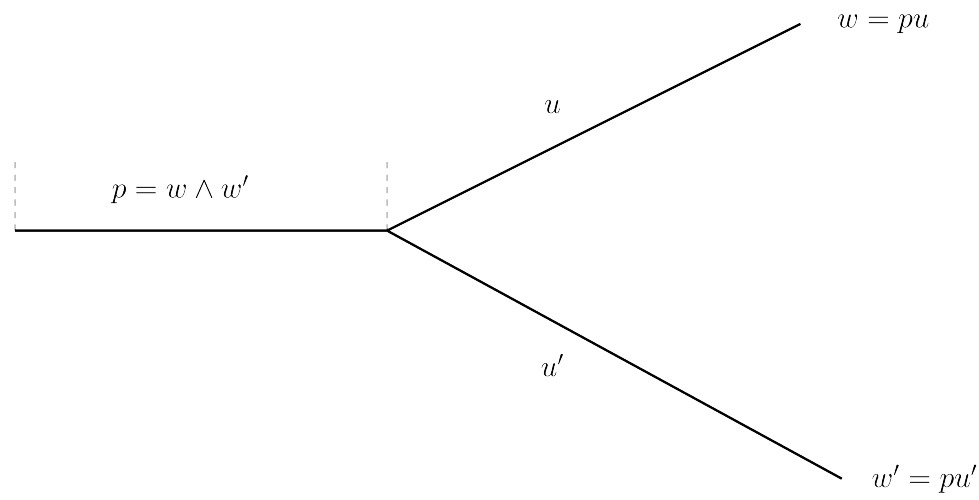}
\end{center}
\caption[]{We have  $(w,w')\in{\cal P}_k$ iff.  $|u|+|u'|\le k$. 
\label{Figure-prefix-distance}
}
\end{figure}
\noindent
In order to prove the further lemma \ref{Pi20},  it is convenient to temporarily move into the more general framework of the factor metric: 
\begin{lem}
\label{Fi20}
Given a positive integer pair $k,k'$ we have ${\cal F}_{k}\cdot {\cal F}_{k'}={\cal F}_{k+k'}$.
\end{lem}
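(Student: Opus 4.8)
The plan is to establish the two inclusions ${\cal F}_k\cdot {\cal F}_{k'}\subseteq {\cal F}_{k+k'}$ and ${\cal F}_{k+k'}\subseteq {\cal F}_k\cdot {\cal F}_{k'}$ separately, using only the combinatorial description of $d_{\rm F}$. By the definition of composition, $(w,w')\in {\cal F}_k\cdot {\cal F}_{k'}$ holds iff some $v\in A^*$ satisfies $d_{\rm F}(w,v)\le k$ and $d_{\rm F}(v,w')\le k'$; recall also that $d_{\rm F}(w,w')=|w|+|w'|-2|f|$ for $f$ a longest common factor of $w$ and $w'$. Two elementary remarks drive everything: first, if $g\in {\rm F}(w)\cap {\rm F}(w')$ is \emph{any} common factor, then $|f|\ge |g|$, hence $d_{\rm F}(w,w')\le |w|+|w'|-2|g|$; second, whenever $v\in {\rm F}(w)$ one has $d_{\rm F}(w,v)=|w|-|v|$, the longest common factor being $v$ itself.

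For the inclusion $\subseteq$ I would fix a witness $v$ and let $g$ (resp.\ $h$) be a longest common factor of $w$ and $v$ (resp.\ of $v$ and $w'$), so that $|w|+|v|-2|g|\le k$ and $|v|+|w'|-2|h|\le k'$. Both $g$ and $h$ occur inside $v$; regarding these occurrences as subintervals of the length-$|v|$ interval, their intersection is a single factor of length at least $|g|+|h|-|v|$, and this factor, read through its occurrences in $w$ and in $w'$, is common to $w$ and $w'$. Hence $|f|\ge |g|+|h|-|v|$ (the inequality being vacuous when the right-hand side is nonpositive), and substituting into $d_{\rm F}(w,w')=|w|+|w'|-2|f|$ and regrouping gives $d_{\rm F}(w,w')\le (|w|+|v|-2|g|)+(|v|+|w'|-2|h|)\le k+k'$. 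This overlap argument, which incidentally recovers the triangle inequality for $d_{\rm F}$, is the step I expect to demand the most care, since one must check that the two occurrences really overlap by the claimed amount and that their common part transfers to a genuine common factor of $w$ and $w'$.

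For the reverse inclusion I would build an intermediate word on a geodesic through $f$. Writing $w=sft$ and $w'=s'ft'$, put $a=|s|+|t|$ and $b=|s'|+|t'|$, so $d_{\rm F}(w,w')=a+b\le k+k'$. If $k\le a$, delete $k$ letters from the ends of $w$ while keeping $f$: the resulting $v$ is a factor of $w$ with $d_{\rm F}(w,v)=k$, and since $f$ is still a common factor of $v$ and $w'$ we get $d_{\rm F}(v,w')\le |v|+|w'|-2|f|=a+b-k\le k'$. If $k>a$, reduce $w$ to $f$ by the $a$ end-deletions and then insert $\min(k-a,b)$ letters towards $w'$; the resulting $v$ is a factor of $w'$ satisfying $d_{\rm F}(v,w')=a+b-k\le k'$ (or $v=w'$ when $k\ge a+b$), while $f\in {\rm F}(w)\cap {\rm F}(v)$ yields $d_{\rm F}(w,v)\le k$. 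In each case $v$ witnesses $(w,w')\in {\cal F}_k\cdot {\cal F}_{k'}$; note that every step here inserts or deletes a single end letter and so changes $d_{\rm F}$ by exactly $1$, which is what lets the budget $k+k'$ be split at the required threshold.
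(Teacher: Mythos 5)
Your proof is correct, and on the substantial (reverse) inclusion it follows essentially the same route as the paper: pick a longest common factor $f$ of $w$ and $w'$ and manufacture the intermediate word by trimming $w$ down towards $f$ so that exactly the budget $k$ is spent. The differences are worth noting. For the forward inclusion, the paper disposes of it in one line by invoking the triangle inequality $d_{\rm F}(w,w')\le d_{\rm F}(w,w'')+d_{\rm F}(w'',w')$ (licensed by the standing fact that $d_{\rm F}$ is a metric), whereas you re-prove that inequality from scratch via the interval-overlap bound $|f|\ge |g|+|h|-|v|$; your version is self-contained (and the overlap argument is the right one), the paper's is shorter but rests on an unproved citation. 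For the reverse inclusion, the paper assumes wlog $k\le k'$ and splits into the cases ``$|w|-|f|\ge k$'' (trim $k$ letters off the ends of $w$ keeping $f$) and ``both excesses $<k$'' (use $f$ itself as the witness, which is where $k\le k'$ is needed); you instead split on $k\le |s|+|t|$ versus $k>|s|+|t|$ and, in the second case, grow $f$ towards $w'$ by $\min(k-a,b)$ letters, which removes the need for the wlog and treats the two parameters symmetrically. A further small economy in your write-up: you only ever use the one-sided bound $d_{\rm F}(v,w')\le |v|+|w'|-2|f|$ valid for an arbitrary common factor $f$, so you never need the paper's (correct but fussier) verification that $f$ remains a maximum-length common factor of the trimmed word and $w'$.
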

\begin{proof}
We assume wlog. $k\le k'$.
Firstly, we consider a pair of words $(w,w')\in {\cal F}_{k}\cdot {\cal F}_{k'}$. 
By definition, some word $w''\in A^*$ exists st.  we have  $(w,w'')\in {\cal F}_k$, $(w'',w')\in {\cal F}_{k'}$ that is, 
  $d_{\rm F}(w,w'')\le k$ and $d_{\rm F}(w'',w')\le k'$. This implies 
$d_{\rm F}(w,w')\le  d_{\rm F}(w,w'')+d_{\rm F}(w'',w')\le k+k'$, thus $(w,w')\in {\cal F}_{k+k'}$.
Conversely  let  $(w,w')\in { \cal F}_{k+k'}$, and let $f$ be  a word with maximum length  in ${\rm F}(w)\cap {\rm F}(w')$.
Regarding  $|w|-|f|$ and $|w'|-|f|$, exactly one of the two following cases occurs:

\smallskip
\noindent
 \hspace*{2mm} (a) {\it The case where at least one of the integers $|w|-|f|$, $|w'|-|f|$ belongs to $[k,k+k']$}.
Since ${\rm \cal F}_{k+k'}$ is a symmetric relation, wlog. we assume $k\le |w|-|f|\le k+k$'. With this condition
a pair of words $u$, $v$ exist st.  $w\in A^hufvA^{h'}$, with $h+h'=k$.
By construction we have $|ufv|=|w|-k$ and $d_{\rm F}(w,ufv)=k$, thus $(w,ufv)\in{\cal F}_k$. 
We observe that $t_1ft_2\in  {\rm F}(ufv)\cap {\rm F}(w')$ implies $t_1ft_2\in  {\rm F}(w)\cap {\rm F}(w')$. Since $f$ is a maximum length word in ${\rm F}(w)\cap {\rm F}(w')$ we obtain $t_1=t_2=\varepsilon$.
In other words, $f$ remains a maximum length word in  ${\rm F}(ufv)\cap{\rm F}(w')$.
This implies  $d_{\rm F}(ufv,w')=|ufv|+|w'|-2|f|=(|w|-k)+|w'|-2|f|=(|w|+|w'|-2|f|)-k=d_{\rm F}(w,w')-k$. It follows from $d_{\rm F}(w,w')\le k+k'$ that
$d_{\rm F}(ufv,w')\le k'$, thus $(ufv,w')\in{\rm \cal F}_{k'}$. Since we have $(w,ufv)\in{\cal F}_k$, this implies  $(w,w')\in{\cal F}_k\cdot {\cal F}_{k'}$.

\smallskip
\noindent
 \hspace*{2mm} (b) {\it The case where we have   $0\le|w|-|f|< k$ and $0\le |w'|-|f|<k$}.  Since we assume $k\le k'$, by definition we have  $d_{\rm F}(w,f)<k$, $d_{\rm F}(f,w')<k\le k'$
that is,  $(w,f)\in {\cal F}_k$, $(f,w')\in {\cal F}_{k'}$, thus
$(w,w')\in {\cal F}_{k}\cdot {\cal F}_{k'}$.%
\end{proof}

\noindent  Lemma \ref{Fi20} will be further applied  in Sect. \ref{Phi}; regarding ${\cal P}_k$,  it leads to the following statement:
\begin{lem}
\label{Pi20}
Given a positive integer pair $k,k'$ we have ${\cal P}_{k}\cdot {\cal P}_{k'}={\cal P}_{k+k'}$.
\end{lem}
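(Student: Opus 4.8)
The plan is to establish the two inclusions separately. The inclusion ${\cal P}_{k}\cdot{\cal P}_{k'}\subseteq {\cal P}_{k+k'}$ is immediate and is handled exactly as in the first part of the proof of Lemma \ref{Fi20}: since $d_{\rm P}$ is a metric, given $(w,w')\in{\cal P}_{k}\cdot{\cal P}_{k'}$ there is an intermediate word $w''$ with $d_{\rm P}(w,w'')\le k$ and $d_{\rm P}(w'',w')\le k'$, and the triangle inequality yields $d_{\rm P}(w,w')\le k+k'$, that is $(w,w')\in{\cal P}_{k+k'}$.

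The reverse inclusion ${\cal P}_{k+k'}\subseteq{\cal P}_{k}\cdot{\cal P}_{k'}$ is where the construction lies. Starting from $(w,w')\in{\cal P}_{k+k'}$, I would set $p=w\wedge w'$ and write $w=pu$, $w'=pu'$, so that $u\wedge u'=\varepsilon$ and, by Claim \ref{claim0}(i), $|u|+|u'|\le k+k'$. The idea is to place an intermediate word $w''$ on the natural path that first shortens $w$ down to $p$ and then extends $p$ towards $w'$, choosing its position so that it sits at prefix distance at most $k$ from $w$ and at most $k'$ from $w'$. I would split into two cases. If $|u|\ge k$, take $w''=pu_1$ with $u_1$ the length-$(|u|-k)$ prefix of $u$; then $w''\in{\rm P}(w)$ gives $d_{\rm P}(w,w'')=k$, and $w''\wedge w'=p$ gives $d_{\rm P}(w'',w')=|u_1|+|u'|=(|u|+|u'|)-k\le k'$. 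If $|u|<k$, instead extend $p$ along $u'$: put $\ell=\max(0,|u'|-k')$ and $w''=pu'_1$ with $u'_1$ the length-$\ell$ prefix of $u'$; then $w''\in{\rm P}(w')$ gives $d_{\rm P}(w'',w')=|u'|-\ell=\min(|u'|,k')\le k'$, while $d_{\rm P}(w,w'')=|u|+\ell$ is at most $k$, being $|u|<k$ when $\ell=0$ and $(|u|+|u'|)-k'\le k$ when $\ell=|u'|-k'$. In either case $w''$ witnesses $(w,w')\in{\cal P}_{k}\cdot{\cal P}_{k'}$, which completes the argument.

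The step I expect to require the most care, though it is routine rather than deep, is the verification of the common-prefix identities $w''\wedge w'=p$ in the first case and $w''\in{\rm P}(w')$ in the second, since these are precisely what make the prefix distances evaluate to the stated values via Claim \ref{claim0}. Here one uses $u\wedge u'=\varepsilon$ together with the degenerate subcases $u_1=\varepsilon$ and $u'=\varepsilon$ in the first case, and $\ell=0$ in the second, each checked directly. Note that no appeal to Lemma \ref{Fi20} itself is needed for ${\cal P}_{k}$: the prefix setting is more rigid than the factor one, so the interpolating word is pinned down explicitly rather than through a case analysis on maximal common factors.
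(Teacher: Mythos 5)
Your proof is correct, and at heart it uses the same strategy as the paper: the forward inclusion by the triangle inequality, the reverse one by exhibiting an explicit intermediate word sitting on the common prefix $p=w\wedge w'$. But the route differs in two respects. First, the paper does not argue directly for the prefix metric: it proves Lemma \ref{Fi20} for the factor metric and then obtains Lemma \ref{Pi20} by substituting $d_{\rm P}$ for $d_{\rm F}$ in that proof, whereas your argument is self-contained, exploiting (as you note) the rigidity of the prefix setting. Second, and more substantively, your case split is on $|u|\ge k$ versus $|u|<k$, while the paper's (inherited from Lemma \ref{Fi20}) is on ``some excess lies in $[k,k+k']$'' versus ``both excesses are $<k$'', and in the first case it reduces the configuration where only $|w'|-|p|\ge k$ to the other one by a symmetry wlog. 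That wlog is delicate in this context: the target ${\cal P}_{k}\cdot{\cal P}_{k'}$ is an \emph{ordered} composition, and swapping $w$ with $w'$ a priori only yields membership in ${\cal P}_{k'}\cdot{\cal P}_{k}$; the paper states commutativity only as a consequence of the lemma. Your decomposition never swaps the pair: when $|u|<k$ you trim toward $w'$ by $\ell=\max(0,|u'|-k')$, which simultaneously covers the paper's case (b) (as the subcase $\ell=0$) and the asymmetric configuration $|u|<k\le |u'|-k'+k$. So your version buys a cleaner handling of the ordered composition; the price is the slightly fussier verification of the common-prefix identities via $u\wedge u'=\varepsilon$ (Claim \ref{claim0}), which you rightly flag and which does go through, including the degenerate subcases.
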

\begin{proof}
Firstly, in the proof of Lemma \ref{Fi20}, by merely substitute $d_{\rm P}$ to $d_{\rm F}$,  the condition $(w,w')\in {\cal P}_{k}\cdot {\cal P}_{k'}$ implies $(w,w')\in {\cal P}_{k+k'}$.
Conversely, take for $f$ a maximum length word in ${\rm P}(w)\cap {\rm P}(w')$.
Once more by substituting $d_{\rm P}$ to $d_{\rm F}$ exactly one of the two following conds. can occur:

\smallskip
\noindent
 \hspace*{2mm} (a) 
In the case where $|w|-|f|\in [k,k+k']$ holds, we have $w\in A^hufvA^{h'}$, with $h=0$, $h'=k$, and $u=\varepsilon$.
We obtain  $(w,fv)\in{\cal P}_k$ and $(fv,w')\in{\rm \cal P}_{k'}$, thus  $(w,w')\in{\cal P}_k\cdot {\cal P}_{k'}$.

\smallskip
\noindent
 \hspace*{2mm} (b) 
In the case where both the conditions $0\le|w|-|f|< k$ and $0\le |w'|-|f|<k$ hold, we directly obtain
 $(w,f)\in {\cal P}_k$, $(f,w')\in {\cal P}_{k'}$, thus $(w,w')\in{\cal P}_k\cdot {\cal P}_{k'}$.
\end{proof}

\noindent Consequently the equations  ${\cal F}_k\cdot{\cal F}_{k'}={\cal F}_{k'}\cdot{\cal F}_{k}$ and ${\cal P}_k\cdot{\cal P}_{k'}={\cal P}_{k'}\cdot{\cal P}_{k}$ hold.
In addition, the following property holds: 
\begin{coro}
\label{Pi2}
Given a pair of positive integer $k,n$ we have ${\cal P}_{nk}={\cal P}_k^n$.
\end{coro}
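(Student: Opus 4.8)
The plan is to prove the statement $\mathcal{P}_{nk}=\mathcal{P}_k^n$ by induction on $n$, using Lemma~\ref{Pi20} as the engine. The base case $n=1$ is trivial, since $\mathcal{P}_k^1=\mathcal{P}_k=\mathcal{P}_{1\cdot k}$ by definition of the composition power notation. For the inductive step, I would assume $\mathcal{P}_{(n-1)k}=\mathcal{P}_k^{n-1}$ for some $n\ge 2$ and show $\mathcal{P}_{nk}=\mathcal{P}_k^n$.

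First I would unfold the composition power: by definition of the notation $\tau^n$ for iterated composition, we have $\mathcal{P}_k^n=\mathcal{P}_k^{n-1}\cdot\mathcal{P}_k$. Applying the induction hypothesis to the left factor rewrites this as $\mathcal{P}_{(n-1)k}\cdot\mathcal{P}_k$. Now I would invoke Lemma~\ref{Pi20} with the integer pair $(k',k)=((n-1)k,\,k)$; note both are positive integers since $k\ge 1$ and $n\ge 2$, so the hypotheses of the lemma are met. The lemma yields $\mathcal{P}_{(n-1)k}\cdot\mathcal{P}_k=\mathcal{P}_{(n-1)k+k}=\mathcal{P}_{nk}$, which closes the induction.

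There is essentially no obstacle here: the corollary is a direct consequence of Lemma~\ref{Pi20}, and the only points requiring a moment's care are bookkeeping ones. Specifically, one must make sure the composition-power notation $\mathcal{P}_k^n$ is read as the $n$-fold composition $\mathcal{P}_k\cdot\mathcal{P}_k\cdots\mathcal{P}_k$ (consistent with the definition of $\tau^k$ given in the Preliminaries), and one must confirm that $(n-1)k$ is a genuine positive integer so that Lemma~\ref{Pi20} applies verbatim. The associativity and commutativity of composition implicitly established after Lemma~\ref{Pi20} guarantee that the order of grouping the factors is immaterial, so the induction may equally well be carried out by peeling off a factor on either side. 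Thus the whole argument is a one-line induction, and the proof can be stated compactly.

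\begin{proof}
We argue by induction on $n$. For $n=1$ the equality $\mathcal{P}_{k}=\mathcal{P}_k^1$ holds by definition. Assume $n\ge 2$ and $\mathcal{P}_{(n-1)k}=\mathcal{P}_k^{n-1}$. By definition of the composition power we have $\mathcal{P}_k^n=\mathcal{P}_k^{n-1}\cdot\mathcal{P}_k=\mathcal{P}_{(n-1)k}\cdot\mathcal{P}_k$. Since $k$ and $(n-1)k$ are positive integers, Lemma~\ref{Pi20} yields $\mathcal{P}_{(n-1)k}\cdot\mathcal{P}_k=\mathcal{P}_{(n-1)k+k}=\mathcal{P}_{nk}$, which completes the induction.
\end{proof}
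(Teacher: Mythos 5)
Your proof is correct and is essentially identical to the paper's: both argue by induction on $n$, with the trivial base case $n=1$ and an inductive step that applies Lemma~\ref{Pi20} to merge $\mathcal{P}_{(n-1)k}\cdot\mathcal{P}_k$ into $\mathcal{P}_{nk}$ (the paper merely phrases the step as going from $n$ to $n+1$ rather than from $n-1$ to $n$). No gap; nothing to change.
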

\begin{proof}
We argue by induction on $n\ge1$. 
Trivially, the equation holds for $n=1$. Assuming that ${\cal P}_{nk}={\cal P}_k^n$ holds,
according to Lemma \ref{Pi20}  we obtain ${\cal P}_{(n+1)k}={\cal P}_{nk}\cdot {\cal P}_{k}={\cal P}_k^n\cdot {\cal P}_{k}={\cal P}_k^{n+1}$.
\end{proof}
\noindent As another consequence of Lemma \ref{Pi20}, in the framework of the prefix metric, Lemma \ref{equiv-c2} leads to the following result:
\begin{lem}
\label{equiv-c2-P}
Given a set $X\subseteq A^*$ and   $k\ge 1$ the three following conds. are equivalent:

{\rm (i)} $X$ satisfies Cond. \ref{2} wrt.  ${\cal P}_k$.

{\rm (ii)}  For every $x\in X$ we have ${\cal P}_{2k}\left(x\right)\cap X=\{x\}$. 

{\rm (iii)} $X$ satisfies Cond. \ref{1} wrt.  ${\cal P}_{2k}$.
\end{lem}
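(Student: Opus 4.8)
The plan is to derive the statement directly from the general equivalence of Lemma~\ref{equiv-c2}, applied to the quasi-metric $d=d_{\rm P}$, after recording two facts that are special to the prefix metric.

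First I would observe that $d_{\rm P}$ is in fact a genuine metric: since the longest common prefix satisfies $w\wedge w'=w'\wedge w$, the defining expression $d_{\rm P}(w,w')=|w|+|w'|-2|w\wedge w'|$ is symmetric in its two arguments. Consequently ${\cal P}_k=\tau_{d_{\rm P},k}$ is a symmetric relation, that is, ${\cal P}_k^{-1}={\cal P}_k$ (as already noted in Sect.~\ref{deeper-quasi-metrics}, a metric yields a symmetric $\tau_{d,k}$). Next I would rewrite the composition ${\cal P}_k\cdot{\cal P}_k^{-1}$ using this symmetry: it equals ${\cal P}_k\cdot{\cal P}_k={\cal P}_k^2$, and by Corollary~\ref{Pi2} taken with $n=2$ (equivalently by Lemma~\ref{Pi20} with $k'=k$) this equals ${\cal P}_{2k}$.

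It then remains only to instantiate Lemma~\ref{equiv-c2} with $d=d_{\rm P}$, so that $\tau_{d,k}={\cal P}_k$. Condition~(i) of Lemma~\ref{equiv-c2} is verbatim condition~(i) of the present statement. In condition~(ii) of Lemma~\ref{equiv-c2}, the set $\tau_{d,k}^{-1}\!\left(\tau_{d,k}(x)\right)$ becomes ${\cal P}_k^{-1}\!\left({\cal P}_k(x)\right)={\cal P}_k\!\left({\cal P}_k(x)\right)={\cal P}_k^2(x)={\cal P}_{2k}(x)$ by the two observations above, so that this condition reads ${\cal P}_{2k}(x)\cap X=\{x\}$ for every $x\in X$, which is condition~(ii) here. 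Finally, condition~(iii) of Lemma~\ref{equiv-c2} asserts that $X$ satisfies Cond.~\ref{1} wrt. $\tau_{d,k}\cdot\tau_{d,k}^{-1}={\cal P}_{2k}$, which is exactly condition~(iii) here. Since the three conditions of Lemma~\ref{equiv-c2} are equivalent, so are the three stated ones.

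I do not expect any genuine obstacle: the entire content of the argument is the symmetry of $d_{\rm P}$ together with the composition identity ${\cal P}_k^2={\cal P}_{2k}$ furnished by Lemma~\ref{Pi20}, after which the three equivalences reduce to a direct substitution into the already-established general Lemma~\ref{equiv-c2}.
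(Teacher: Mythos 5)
Your proposal is correct and follows essentially the same route as the paper: both proofs rest on the symmetry of ${\cal P}_k$ (giving ${\cal P}_k\cdot{\cal P}_k^{-1}={\cal P}_k^2$), the identity ${\cal P}_k^2={\cal P}_{2k}$ from Corollary \ref{Pi2} with $n=2$, and then a direct substitution into Lemma \ref{equiv-c2}. The only cosmetic difference is that you invoke the statement of Lemma \ref{equiv-c2} and rewrite its conditions, while the paper speaks of substituting ${\cal P}_k$ for $\tau_{d,k}$ inside that lemma's proof; the mathematical content is identical.
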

\begin{proof}
Since the relation ${\cal P}_k$ is symmetric we have ${\cal P}_k\cdot{\cal P}_k^{-1}={\cal P}_{k}^2$.
By taking $n=2$ in the statement of Corollary \ref{Pi2}, we obtain ${\cal P}_k\cdot{\cal P}_k^{-1} ={\cal P}_{2k}$. 
The rest of the proof merely consists in substituting ${\cal P}_k$ to $\tau_{d,k}$ in the proof of Lemma \ref{equiv-c2}.
\end{proof}
\subsection{On the regularity of $\uline{{\cal P}_k}$}
In view of Claim \ref{claim0} we introduce the three following sets:
$E$ stands for the set of  all the pairs of non-empty words $(u,u')$, with different initial characters, and st. 
$|u|+|u'|\le k$. In addition $F$ (resp., $G$) stands for the set of all the pairs $(u,\varepsilon)$ (resp., $(\varepsilon,u')$), with $1\le |u|\le k$ (resp., $1\le |u'|\le k$).
By construction, $E$, $F$, and $G$ are finite sets. 
Regarding   Conds. \ref{1}, \ref{2}, the following property will have noticeable  involvement:
\begin{proposit}
\label{Pi-regular}
For every $k\ge 1$, both the relations ${\cal P}_k$ and $\uline{{\cal P}_k}$ are regular.
\end{proposit}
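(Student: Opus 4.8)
The plan is to realize both relations as a concatenation product, taken inside the monoid $A^*\times A^*$, of the regular diagonal relation $id_{A^*}$ with the finite sets $E$, $F$, $G$ introduced just before the statement. The key observation is that in $A^*\times A^*$ the product of $(p,p)$ by $(u,u')$ is exactly $(pu,pu')$; hence, for any $S\subseteq A^*\times A^*$, one has $id_{A^*}\cdot S=\{(pu,pu'):p\in A^*,\ (u,u')\in S\}$. So the task reduces to showing that the characterization of Claim \ref{claim0} is precisely of this form.

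First I would establish the factorization $\uline{{\cal P}_k}=id_{A^*}\cdot(E\cup F\cup G)$. Given $(w,w')\in\uline{{\cal P}_k}$, set $p=w\wedge w'$ and write $w=pu$, $w'=pu'$; by Claim \ref{claim0}(ii) we have $u\wedge u'=\varepsilon$ and $1\le|u|+|u'|\le k$. Splitting into the cases ``$u,u'$ both non-empty'', ``$u'=\varepsilon$'', and ``$u=\varepsilon$'' places $(u,u')$ in $E$, $F$, or $G$ respectively, while $u=u'=\varepsilon$ is excluded by $|u|+|u'|\ge 1$. Conversely, for $p\in A^*$ and $(u,u')\in E\cup F\cup G$ one checks in each case that $u\wedge u'=\varepsilon$, so that $p=w\wedge w'$ for $w=pu$, $w'=pu'$, and since $1\le|u|+|u'|\le k$ Claim \ref{claim0}(ii) yields $(w,w')\in\uline{{\cal P}_k}$. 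Adjoining the pair $(\varepsilon,\varepsilon)$ to the finite set then gives ${\cal P}_k=id_{A^*}\cdot(E\cup F\cup G\cup\{(\varepsilon,\varepsilon)\})$; equivalently ${\cal P}_k=\uline{{\cal P}_k}\cup id_{A^*}$, the two pieces being disjoint since a non-diagonal second factor forces $w\ne w'$.

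To conclude, I would invoke the closure properties. The relation $id_{A^*}$ is regular (as noted right after Prop. \ref{property-rec1}), the sets $E$, $F$, $G$ are regular since they are finite, and by definition the family of regular subsets of any monoid is closed under the regular operations, union and concatenation product among them. Therefore $id_{A^*}\cdot(E\cup F\cup G)$ and $id_{A^*}\cdot(E\cup F\cup G\cup\{(\varepsilon,\varepsilon)\})$ are regular, which gives the regularity of $\uline{{\cal P}_k}$ and of ${\cal P}_k$.

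I expect no serious obstacle here: the proof is essentially a reformulation of Claim \ref{claim0} in terms of the monoid product. The one point that deserves care is checking that the decomposition is genuinely reversible, namely that $u\wedge u'=\varepsilon$ forces $p=w\wedge w'$, so that the finite set $E\cup F\cup G$ captures \emph{exactly} the condition of Claim \ref{claim0} and nothing spurious is created by the product with $id_{A^*}$. Once this is verified in the three cases, everything else is a direct application of the stated closure results.
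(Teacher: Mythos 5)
Your proof is correct and rests on exactly the paper's decomposition: the paper establishes regularity by constructing a finite $A^*\times A^*$-automaton ${\cal R}_{{\rm P},k}$ whose loop $(a,a)$ at the initial state generates $id_{A^*}$ and whose final transitions are labelled by $E\cup F\cup G$ — that is, a machine realization of your product $id_{A^*}\cdot(E\cup F\cup G)$ — and then invokes Theorem \ref{Elgot-Mezei-0}, together with the same case analysis from Claim \ref{claim0} and the same reversibility check that $u\wedge u'=\varepsilon$ forces $p=w\wedge w'$. Your algebraic finish (regularity of $id_{A^*}$ and the finite sets, plus closure of regular subsets of $A^*\times A^*$ under union and concatenation product) is an equivalent, slightly more economical way to conclude, and your treatment of ${\cal P}_k$ via adjoining $(\varepsilon,\varepsilon)$ matches the paper's ${\cal P}_k=\uline{{\cal P}_k}\cup id_{A^*}$.
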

\begin{proof}
-- In what follows we indicate the construction of a finite $A^*\times A^*$-automaton with behavior $\uline{{\cal P}_k}$, namely ${\cal R}_{{\rm P},k}$.
The states  are $0,1,2,3$, the unique initial one being  $0$, and the terminal  being $1,2,3$. The transitions are listed hereunder (see Fig \ref{Automaton-antiref-prefix-2}):

\noindent\hspace*{8mm} $0\xrightarrow{(a,a)}0$, for every $a\in A$;

\noindent\hspace*{8mm} $0\xrightarrow{(u,u')}2$ for every $(u,u')\in E$;

\noindent\hspace*{8mm} $0\xrightarrow{(u,\varepsilon)}1$ for every $(u,\varepsilon)\in F$;

\noindent\hspace*{8mm} $0\xrightarrow{(\varepsilon,u')}3$ for every $(\varepsilon,u')\in G$.

\smallbreak
\noindent\hspace*{2mm}-- Let $(w,w')\in \left|{\cal R}_{{\rm P},k}\right|$, and let $p=w\wedge w'$.
By construction, there are  $t\in \{1,2,3\}$ and $(u,u')\in E\cup F\cup G$,
 st.  $w=pu$, $w'=pu'$, with  $0\xrightarrow{(p,p)}0\xrightarrow {(u,u')}t$ being a  successful path. Since $(u,u')\in E\cup F\cup G$ implies $1\le |u|+|u'|\le k$, we are in the cond. (ii) of 
Claim \ref{claim0}, hence   $(w,w')\in\uline{{\cal P}_k}$ holds.

\smallbreak
\noindent\hspace*{2mm}-- Conversely, assume $(w,w')\in\uline{{\cal P}_k}$. According to Claim \ref{claim0}, words $p$, $u$, and $u'$ exist st.   $p=w\wedge w'$,  $w=pu$, $w'=pu'$, $u\wedge u'=\varepsilon$, and $1\le |u|+|u'|\le k$.
The last cond. implies that at least one of the two words $u,u'$ is non-empty. 
The cond. $u\ne\varepsilon$ with $u'\ne\varepsilon$ implies $(u,u')\in E$. 
Similarly, $u\ne\varepsilon$ and $u'=\varepsilon$ (resp., $u=\varepsilon$ and $u'\ne\varepsilon$), implies $(u,u')\in F$ (resp., $(u,u')\in G$). 
In any case we obtain $(u,u')\in E\cup F\cup G$, therefore by construction some $t\in\{1,2,3\}$ exists st. $0\xrightarrow{(p,p)}0\xrightarrow {(u,u')} t$ is a successful path in ${\cal R}_{{\rm P},k}$.
In other words we have  $(w,w')=(pu,pu')\in \left|{\cal R}_{{\rm P},k}\right|$.

\smallbreak
\noindent\hspace*{2mm}-- 
As a consequence,  we have  $\uline{{\cal P}_k}=\left|{\cal R}_{{\rm P},k}\right|$:  according to Theorem \ref{Elgot-Mezei-0}, the relation  $\uline{{\cal P}_k}$ is regular. 
In addition, according to Prop. \ref{properties-reg-A} the relation  ${\cal P}_k=\uline{{\cal P}_k}\cup id_{A^*}$ itself is  regular.
\end{proof}
\noindent
\begin{example}
{\rm
Let $k=2$. We have $E=\{(a,b): a,b\in A,a\ne b\}$, $F=\{(a,\varepsilon): a\in A\}\cup \{(ab,\varepsilon): a,b\in A\}$, and $G= \{(\varepsilon,a): a\in A\}\cup \{(\varepsilon,ab): a,b\in A\}$ (see Fig. \ref{Automaton-antiref-prefix-2}). 
}
\end{example}
\begin{figure}[H]
\begin{center}
\includegraphics[width=4.5cm,height=4.5cm]{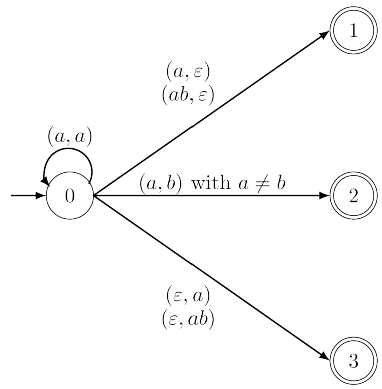}
\end{center}
\caption[]{\small The case where we have $k=2$: in the automaton ${\cal R}_{{\rm P}, k}$, the  edges are muti-labelled ($a$, $b$ stand for every pair of characters in $A$).}
\label{Automaton-antiref-prefix-2}
\end{figure}
\noindent We note that in \cite{Ng16} the author introduces an interesting $(A^*\times A^*)\times {\mathbb N}$-automaton (equivalently transducer with input in $A^*\times A^*$ and output in ${\mathbb N}$) (see Fig. \ref{Figure1111}).
This automaton allows to compute $d_{\rm P}$ as follows:
 for every $(w,w')\in A^*\times A^*$, 
the distance $d_{\rm P}(w,w')$ is the least $d\in{\mathbb N}$ for which $\left((w,w'),d\right)$ is the label of  some successful path.
Furthermore, an alternative proof of the regularity of ${\cal P}_k$ can be obtained.
Indeed, by construction, denoting by ${\cal D}\subseteq (A^*\times A^*)\times {\mathbb N}$ the behavior of such an automaton, we have ${\cal P}_k={\cal D}^{-1}\left([1,k]\right)$. 
Since $[1,k]$ is a finite subset of the one-generator monoid ${\mathbb N}$,
it is regular. 
In addition, since regular relations are closed under inverse ${\cal P}_k$ itself is regular.
 However,  we note that such a construction cannot involve the relation $\uline{{\cal P}_k}$ itself that is, it does not affect  Cond. \ref{1}.
\begin{figure}[H]
\begin{center}
\bigbreak\noindent
\includegraphics[width=6cm,height=2.25cm]{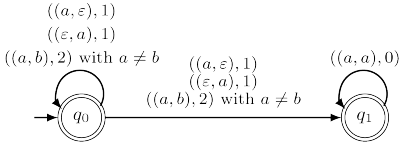}
\end{center}
\caption[]{The automaton from \cite{Ng16} in the case where we have $A=\{a,b\}$}
\label{Figure1111}
\end{figure}
\noindent
A a consequence of Prop. \ref{Pi-regular}, we obtain the following noticeable result:
\begin{proposit}
\label{decid-Pi1}
Given  a regular code $X\subseteq A^*$ and $k\ge 1$,
it can be decided whether $X$ satisfies any of Conds. \ref{1}, \ref{2}, and  \ref{4} wrt.   ${\rm\cal P}_k$.
\end{proposit}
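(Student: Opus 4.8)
The plan is to leverage the regularity of the relations established in Proposition \ref{Pi-regular} together with the closure and decidability results collected in the preliminaries. The key observation is that each of Conds. \ref{1}, \ref{2}, and \ref{4} can be reformulated as a decidable question about a regular set, once we know that the relevant relation is regular.

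First I would handle Cond. \ref{1}. By Proposition \ref{Pi-regular}, the relation $\uline{{\cal P}_k}$ is a regular subset of $A^*\times A^*$. Since $X$ is a regular code, Proposition \ref{property-rec1} yields that $\uline{{\cal P}_k}(X)$ is a regular subset of $A^*$. By Proposition \ref{properties-reg-A}, the intersection $\uline{{\cal P}_k}(X)\cap X$ is again regular, and by Proposition \ref{decidable-bases}(i) one can decide whether this set is empty. By definition, Cond. \ref{1} holds precisely when $\uline{{\cal P}_k}(X)\cap X=\emptyset$, so this condition is decidable.

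Next I would treat Cond. \ref{2}. Here the crucial reduction is Lemma \ref{equiv-c2-P}: $X$ satisfies Cond. \ref{2} wrt. ${\cal P}_k$ if and only if $X$ satisfies Cond. \ref{1} wrt. ${\cal P}_{2k}$. Since $2k\ge 1$, Proposition \ref{Pi-regular} guarantees that $\uline{{\cal P}_{2k}}$ is regular, and the argument of the previous paragraph applied to ${\cal P}_{2k}$ shows that Cond. \ref{1} wrt. ${\cal P}_{2k}$ is decidable. Hence Cond. \ref{2} wrt. ${\cal P}_k$ is decidable as well.

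Finally, for Cond. \ref{4} I would recall that, since $d_{\rm P}$ is a quasi-metric, ${\cal P}_k$ is reflexive, so $\widehat{{\cal P}_k}(X)={\cal P}_k(X)$. By Proposition \ref{Pi-regular} the relation ${\cal P}_k$ is regular, whence Proposition \ref{property-rec1} gives that ${\cal P}_k(X)$ is a regular subset of $A^*$. Then Proposition \ref{decidable-bases}(ii), which is exactly the effective form of the Sardinas--Patterson algorithm, decides whether this regular set is a code. Thus all three conditions are decidable. I do not expect a genuine obstacle here, since every step rests on results already proved; the only point requiring care is to invoke the reflexivity of ${\cal P}_k$ so that the reflexive closure in Cond. \ref{4} may be dropped, and to note that the reduction for Cond. \ref{2} forces us to use the regularity of $\uline{{\cal P}_{2k}}$ rather than $\uline{{\cal P}_k}$.
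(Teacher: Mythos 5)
Your proposal is correct and follows essentially the same route as the paper's own proof: Cond.~\ref{1} via the regularity of $\uline{{\cal P}_k}$ (Prop.~\ref{Pi-regular}), Prop.~\ref{property-rec1}, Prop.~\ref{properties-reg-A}, and Prop.~\ref{decidable-bases}; Cond.~\ref{2} via the reduction of Lemma~\ref{equiv-c2-P} to Cond.~\ref{1} wrt.\ ${\cal P}_{2k}$; and Cond.~\ref{4} via the regularity of ${\cal P}_k(X)$ and the Sardinas--Patterson decision procedure. Your added remarks on reflexivity of ${\cal P}_k$ and on needing $\uline{{\cal P}_{2k}}$ rather than $\uline{{\cal P}_k}$ are accurate clarifications of points the paper leaves implicit.
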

\begin{proof} Let $X$ be a regular code. We consider one by one our conditions:

\smallskip
\noindent\hspace*{2mm}-- {\it Cond. \ref{1}}
According to Prop. \ref{Pi-regular},  the relation $\uline{{\rm \cal P}_k}$ is regular;
according to  Prop. \ref{property-rec1},  $\uline{{\rm \cal P}_k}(X)$ is a regular subset of $A^*$ therefore, according to  Prop. \ref{properties-reg-A}, the set $\uline{{\rm \cal P}_k}(X)\cap X$ itself is  regular.
Consequently, according to Prop. \ref{decidable-bases}, one can decide whether $\uline{{\rm \cal P}_k}(X)\cap X\ne\emptyset$.

\smallskip
\noindent\hspace*{2mm}-- {\it Cond. \ref{2}:}
According to  Lemma \ref{equiv-c2-P} the set $X$ satisfies that cond. wrt.  ${\cal P}_k$ iff. it  satisfies Cond. \ref{1} wrt.  ${\cal P}_{2k}$: 
in view of the above, this can be decided.

\smallskip
\noindent\hspace*{2mm}-- {\it Cond. \ref{4}:}
Since $X$ is regular, according to  Props.  \ref{property-rec1}, \ref{Pi-regular},  the set   $\widehat{{\cal P}_k}(X)={\cal P}_k(X)$ is regular.
According to Prop. \ref{decidable-bases}, one can decide whether this set is a code.
\end{proof}
\noindent
\subsection{Maximal ${\cal P}_k$-independent codes}
Regarding Cond. \ref{3}, we start with the following result:
\begin{proposit}
\label{Comp-Pref-Independent}
Every regular $\uline{{\rm \cal P}_k}$-independent  code can be embedded into some complete one.
\end{proposit}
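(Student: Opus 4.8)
The plan is to apply the Ehrenfeucht--Rozenberg construction of Theorem~\ref{EhRz} and then verify that the completed code $Z=X\cup Y$ remains $\uline{{\cal P}_k}$-independent. So let $X$ be a regular $\uline{{\cal P}_k}$-independent code. If $X$ is already complete there is nothing to prove, so assume $X$ is non-complete. First I would produce an overlapping-free word $z\notin{\rm F}(X^*)$: since $X$ is non-complete, $A^*\setminus{\rm F}(X^*)$ is non-empty, so I can pick some $z_0$ outside ${\rm F}(X^*)$ and, using Proposition~\ref{overlapping-free constr}, replace it by $z_0ab^{|z_0|}$ with $a$ the initial letter of $z_0$ and $b\ne a$; this word is overlapping-free and, having $z_0$ as a factor, still lies outside ${\rm F}(X^*)$. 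Then I set $U=A^*\setminus(X^*\cup A^*zA^*)$, $Y=(zU)^*z$, and $Z=X\cup Y$. By Theorem~\ref{EhRz}, $Z$ is a complete code, and by the remark following that theorem $Z$ is regular since $X$ is.

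The substance of the argument is to check that $Z$ satisfies Cond.~\ref{1} wrt.\ ${\cal P}_k$, i.e.\ $\uline{{\cal P}_k}(Z)\cap Z=\emptyset$. Since $X$ is already $\uline{{\cal P}_k}$-independent, I must rule out the three remaining types of collision: a pair of distinct words of $Y$ at prefix-distance $\le k$; a word of $X$ at prefix-distance $\le k$ from a word of $Y$; and (because $\uline{{\cal P}_k}$ need not be symmetric, though here it is) pairs in either order. The key structural fact I would exploit is that every word of $Y=(zU)^*z$ begins and ends with the overlapping-free marker $z$, and that $z$ occurs in such a word only at the ``designed'' positions, so that the words of $Y$ have pairwise very different prefixes. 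Concretely, by Claim~\ref{claim0}, $(w,w')\in\uline{{\cal P}_k}$ holds iff, writing $p=w\wedge w'$ and $w=pu$, $w'=pu'$, we have $1\le|u|+|u'|\le k$; so a collision forces $w$ and $w'$ to share all but at most $k$ trailing letters.

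The decisive point is therefore a combinatorial bound on common prefixes. For two distinct elements of $Y$, I would argue that they first differ well before their common ends: the shorter one is a proper prefix factored $(zu_1)\cdots(zu_m)z$ of the longer, and the continuation of the longer word past this point is a block beginning with the marker $z$, whose length is at least $|z|$; choosing $z$ long enough — precisely, it suffices that $|z|>k$, which I can always arrange by padding $z_0$ — forces $|u|+|u'|>k$, so the pair is not in $\uline{{\cal P}_k}$. For a word $x\in X$ and a word $y\in Y$, I would use that $z\notin{\rm F}(X^*)\supseteq{\rm F}(X)$ while every $y\in Y$ contains $z$ as a prefix: if $x\wedge y$ were long enough to leave a suffix-budget $\le k$, then (again taking $|z|>k$) the common prefix $p$ would already contain the leading $z$ of $y$, hence $x$ would contain $z$ as a factor, contradicting $z\notin{\rm F}(X^*)$. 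Thus no such pair lies in $\uline{{\cal P}_k}$ either.

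I expect the main obstacle to be the second case, $x\in X$ versus $y\in Y$, and more generally making the marker-length argument airtight: one must verify that the budget condition $|u|+|u'|\le k$ genuinely forces the whole of the leading block $z$ into the common prefix $p$, and handle cleanly the boundary subcase where one of $u,u'$ is empty (one word a prefix of the other). Enforcing $|z|>k$ from the outset — by padding the exponent in Proposition~\ref{overlapping-free constr}, which leaves $z$ overlapping-free and still outside ${\rm F}(X^*)$ — is what makes all three cases collapse uniformly. Once $\uline{{\cal P}_k}(Z)\cap Z=\emptyset$ is established, $Z$ is a regular, complete, $\uline{{\cal P}_k}$-independent code properly containing $X$, which shows $X$ is not maximal unless it is already complete, as required.
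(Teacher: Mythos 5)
Your overall strategy coincides with the paper's — complete $X$ via Theorem \ref{EhRz} and then verify that $Z=X\cup Y$ remains $\uline{{\cal P}_k}$-independent — but two of your three verifications have genuine gaps, both caused by treating $z$ as a monolithic marker whose only relevant parameter is its total length. The serious one is the $Y$--$Y$ case: you assert that two distinct elements of $Y$ are comparable for the prefix order (``the shorter one is a proper prefix \dots of the longer''). That is false: for $u_1\ne u_2$ in $U$, the words $y=zu_1z$ and $y'=zu_2z$ are distinct elements of $Y$ and neither is a prefix of the other, so your block-length argument never applies to them, and nothing in your proposal excludes $(y,y')\in\uline{{\cal P}_k}$ for such pairs. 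The paper rules out exactly this configuration (its case (c2)) by a different mechanism, namely the \emph{suffix} structure of $z$: writing $p=y\wedge y'$, $u=p^{-1}y$, $u'=p^{-1}y'$, both $u,u'$ are non-empty of length $\le k\le |z_0|$, and both $y,y'$ end with $b^{|z_0|}$; hence $u$ and $u'$ are non-empty suffixes of $b^{|z_0|}$, so both begin with $b$, contradicting the maximality of the common prefix $p$. This idea cannot be recovered from ``choose $|z|$ large''; it genuinely uses the tail $b^{|z_0|}$ of the marker.

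Your $X$--$Y$ case also does not close as stated. From $(x,y)\in\uline{{\cal P}_k}$ and Claim \ref{claim0} you get $|x\wedge y|\ge |y|-k\ge |z|-k$; with your normalization $|z|>k$ this only yields $|x\wedge y|\ge 1$, not $|x\wedge y|\ge |z|$, so the leading $z$ of $y$ need not lie in the common prefix (take $y=z$, which belongs to $Y$). What makes the paper's argument work is the internal structure $z=z_0ab^{|z_0|}$ together with the normalization $|z_0|\ge k$: then $|x\wedge y|\ge |z|-k\ge |z_0|+1$, so the common prefix contains the \emph{proper prefix} $z_0$ of $z$, and the contradiction is with $z_0\notin {\rm F}(X^*)$, not with $z\notin{\rm F}(X^*)$. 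In other words, the right padding condition is $|z_0|\ge k$, which simultaneously makes the head $z_0$ and the tail $b^{|z_0|}$ at least $k$ long — precisely what the $X$--$Y$ case and the $Y$--$Y$ case (c2) respectively require. Finally, even your prefix-comparable $Y$--$Y$ subcase needs the $zU$-factorizations of the two words to align before you may claim the leftover is a full block of length $\ge |z|$; that alignment is where Lemma \ref{prefix-Uz} (the prefix-code property of $Uz$, coming from the overlapping-freeness of $z$) must be invoked, a step your sketch flags as ``to be made airtight'' but never supplies.
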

\begin{proof}
Let $X$ be a regular $\uline{{\rm \cal P}_k}$-independent  code. The result is trivial if $X$ is complete: in the sequel we assume $X$ being non-complete.
By definition, a word $z_0$ exists in   $A^*\setminus {\rm F}(X^*)$. 
Without loss of generality, we assume $|z_0|\ge k$: otherwise, we substitute to $z_0$ some word $z_0u$, with $|u|=k-|z_0|$ (it follows from $z_0\in {\rm F}(z_0u)$ that  $z_0u\notin{\rm F}(X^*)$). 
 Let $a$ be the initial character of $z_0$,
let $b$ be a character different of $a$, and let $z=z_0ab^{|z_0|}$.  
According to Lemma \ref {overlapping-free constr},  $z$ is overlapping-free.
%
We introduce the three following sets:
$U=A^*\setminus (X^*\cup A^*zA^*)$, $Y=z(Uz)^*$, and $Z=X\cup Y$.
According to Theorem \ref{EhRz},  $Z$ is a regular  complete code.
We will prove that $Z$ is $\uline{{\rm \cal P}_k}$-independent that is, $\uline{{\rm \cal P}_k}(X\cup Y)\cap (X\cup Y)=\emptyset$, thus $\uline{{\rm \cal P}_k}(X)\cap X=\uline{{\rm \cal P}_k}(X)\cap Y=\uline{{\rm \cal P}_k}(Y)\cap X=\uline{{\rm \cal P}_k}(Y)\cap Y=\emptyset$. 
Actually, since  $X$ is $\uline{{\rm \cal P}_k}$-independent, we already have $\uline{{\rm \cal P}_k}(X)\cap X=\emptyset$.

\smallskip
\noindent\hspace*{4mm}(a) Firstly, we  prove  that $\uline{{\rm \cal P}_k}(X)\cap Y=\emptyset$. By contradiction  assume that a pair of words $x\in X$ and $y\in Y$ exist st.  $(x,y)\in\uline{{\cal P}_k}$.
By construction we have $|z_0|\ge k$. According to Claim \ref{claim0} we obtain $1\le\left|(x\wedge y)^{-1}x\right|+\left|(x\wedge y)^{-1}y\right|\le k\le |z_0|$. This implies  $\left|(x\wedge y)^{-1}y\right|\le |z_0|$
, thus $\left|x\wedge y\right|\ge |y|-|z_0|$.
By construction, $|y|\ge|z|$ holds:
we obtain $\left|x\wedge y\right|\ge |z_0ab^{|z_0|}|-|z_0|= |ab^{|z_0|}|$, thus $\left|x\wedge y\right|\ge |z_0|+1$. Since  both the words $x\wedge y$ and $z_0$ are prefixes of $y$, this implies $z_0\in {\rm P}(x\wedge y)$, thus  $z_0\in {\rm P}(x)$: 
a contradiction with  $z_0\notin {\rm F}(X^*)$. 

\smallskip
\noindent\hspace*{4mm}(b) Now, by contradiction we assume $\uline{{\rm \cal P}_k}(Y)\cap X\ne\emptyset$. Let $y\in Y$ and $x\in X$ st.  $(y,x)\in \uline{{\rm \cal P}_k}$. 
Since  ${\cal P}_k$ and $\overline{id_{A^*}}$ are symmetrical relations,  $\uline{{\cal P}_k}={\cal P}_k\cap \overline{id_{A^*}}$  itself is symmetrical.
We obtain $(x,y)\in  \uline{{\rm \cal P}_k}$, thus $\uline{{\rm \cal P}_k}(X)\cap Y\ne\emptyset$: this contradicts the conclusion of the preceding case  (a).

\smallskip
\noindent\hspace*{3mm} (c) It remains to prove that $\uline{{\rm \cal P}_k}(Y)\cap Y=\emptyset$. 
Once more arguing  by contradiction, we assume that $y,y'\in Y$ exist st.  $(y,y')\in\uline{{\cal P}_k}$. Let $p=y\wedge y'$, $u=p^{-1}y$, and  $u'=p^{-1}y'$. We compare the words $p$, $y$, and $y'$:

\smallskip
\noindent\hspace*{8mm} (c1)
At first, we assume  $p\in\{y,y'\}$ that is, 
wlog. $y'=p$, thus $y=y'u$. More precisely, it follows from $(y,y')\in \uline{{\cal P}_k}$  that $y\ne y'$, hence $y'$ is a proper prefix of $y$. 
By construction we have  $y,y'\in z(Uz)^*$ that is,  two sequences of words in $U$, namely $u_1,\cdots u_m$ and $u'_1,\cdots u'_n$, exist st.  $y=zu_1zu_2\cdots u_mz$ and $y'=zu'_1z\cdots u'_nz$. It follows from $y=y'u$ that
$zu_1zu_2\cdots u_mz=zu'_1z\cdots u'_nzu$, thus $u_1zu_2\cdots u_mz=u'_1z\cdots u'_nzu$.
With this condition, at least one of the two words $u_1z$ and $u'_1z$ is a prefix of the other one. Since $z$ is overlapping free,
 according to Lemma \ref{prefix-Uz} this implies  $u_1z=u'_1z$, thus $u_1=u'_1$.  
By induction
each of  the equations   $u_2=u'_2$, \dots, and  $u_n=u'_n$ also holds. 
 From the fact that $y'$ is a proper prefix of $y$, we obtain $m\ge n+1$ and
  $u=u_{n+1}z\cdots u_mz$. This implies $|u|\ge |z|\ge|z_0|+1$, thus $|u|\ge k+1$:  
a contradiction with  $|u|=d_{\rm P}(y,y')\le k$.

\smallskip
\noindent\hspace*{8mm}(c2)
Consequently we have $p\notin\{y,y'\}$, thus $u\ne\varepsilon$ and $u'\ne\varepsilon$. 
By construction, $b^{|z_0|}$ is a suffix of $z$, which itself if a suffix of both the words $y,y'\in Y$. 
Consequently, at least one of the words $u$ and $b^{|z_0|}$  is a suffix of the other one;
similarly at least one of the two words $u'$ and $b^{|z_0|}$ is a suffix of the other one. 
It follows from $1\le |u|\le k\le |z_0|$ and   $1\le |u'|\le k\le |z_0|$ that $u,u'$ are non-empty suffixes of $b^{|z_0|}$ that is,
 $u, u'\in bb^*$. But this cond. implies that the word  $pb$ remains  a prefix of both the words $y$ and $y'$: 
a contradiction with  $p=y\wedge y'$  (see Fig. \ref{FigurePrefixProposition3b2}).
\end{proof}
\begin{figure}[H]
\begin{center}
\includegraphics[width=8cm,height=6cm]{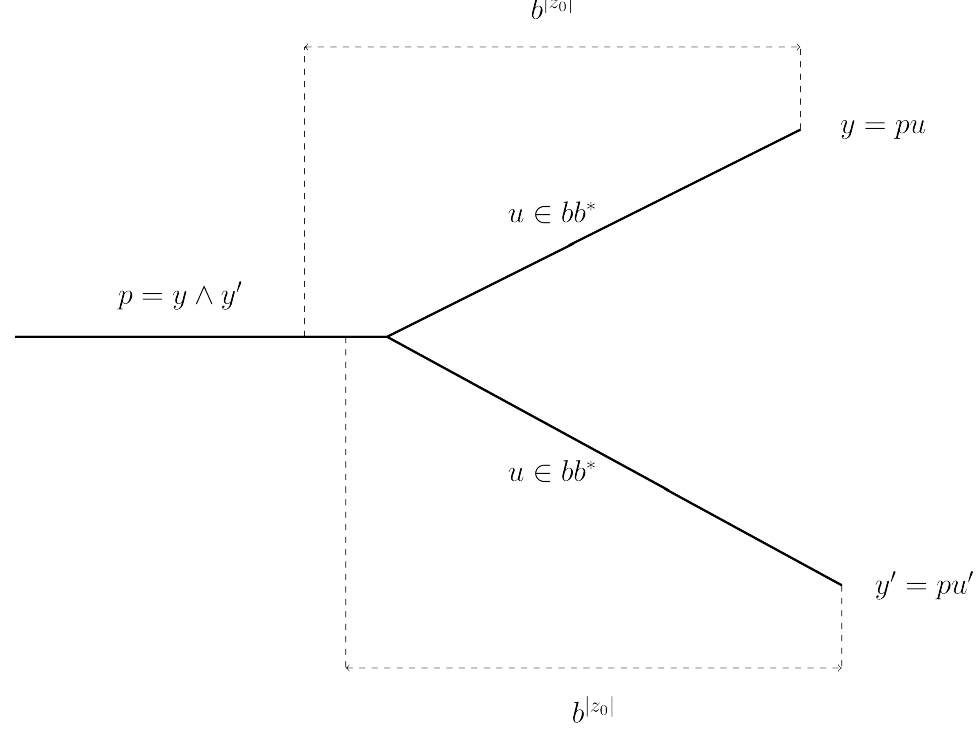}
\end{center}
\caption[]{\small Proof of Prop. \ref{Comp-Pref-Independent}: the case where we have $y,y'\in Y$ and $(y,y')\in{\cal P}_k$, with $p\notin\{y,y'\}$.}
\label{FigurePrefixProposition3b2}
\end{figure}
\noindent As a consequence, we obtain the following result: 
\begin{proposit}
\label{classic11-Pi-k}
Given a regular $\uline{{\rm \cal P}_k}$-independent code $X\subseteq A^*$, the four following conds. are equivalent:

{\rm (i)} $X$  is maximal in the family of $\uline{{\rm \cal P}_k}$-independent codes.

{\rm (ii)} $ X$ is a maximal code.

{\rm (iii)} $X$ is complete.

{\rm (iv)} We have $\mu(X)=1$.
\end{proposit}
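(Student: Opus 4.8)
The plan is to reduce everything to Schützenberger's Theorem \ref{classic} together with the embedding result of Prop. \ref{Comp-Pref-Independent}. Since $X$ is assumed to be a regular code, Theorem \ref{classic} immediately yields the equivalence of Conds. (ii), (iii), and (iv): for a regular code, being a maximal code, being complete, and satisfying $\mu(X)=1$ are interchangeable. Thus the only genuine content is to tie Cond. (i)—maximality within the restricted family of $\uline{{\cal P}_k}$-independent codes—to the others, and I would do this by proving the two implications (ii) $\Rightarrow$ (i) and (i) $\Rightarrow$ (iii).

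For (ii) $\Rightarrow$ (i), I would simply observe that the family of $\uline{{\cal P}_k}$-independent codes is a subfamily of the family of all codes. Hence if no code strictly contains $X$, then a fortiori no $\uline{{\cal P}_k}$-independent code strictly contains $X$; so a maximal code that happens to be $\uline{{\cal P}_k}$-independent is automatically maximal in the smaller family. This direction is essentially immediate.

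The substantive implication is (i) $\Rightarrow$ (iii), which I would establish by contraposition. Assume $X$ is not complete. Then Prop. \ref{Comp-Pref-Independent} supplies a complete $\uline{{\cal P}_k}$-independent code $Z$ with $X\subseteq Z$. Since $X$ is not complete while $Z$ is, necessarily $X\ne Z$, hence $X\subsetneq Z$; as $Z$ is itself $\uline{{\cal P}_k}$-independent, this shows that $X$ is not maximal in the family of $\uline{{\cal P}_k}$-independent codes, i.e. $X$ fails Cond. (i). Chaining (i) $\Rightarrow$ (iii) $\Rightarrow$ (ii) $\Rightarrow$ (i), and appending (iii) $\iff$ (iv) from Theorem \ref{classic}, then closes the cycle of equivalences.

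I expect no real obstacle internal to this proof: all the difficulty has been absorbed into Prop. \ref{Comp-Pref-Independent} (the regularity-preserving, independence-preserving embedding of a non-complete code into a complete one) and into Theorem \ref{classic}. The only point that deserves a moment of care is the trivial-looking but essential fact that the code $Z$ produced by the embedding remains $\uline{{\cal P}_k}$-independent—without this, the strict inclusion $X\subsetneq Z$ would not contradict maximality in the restricted family—but this is precisely what the conclusion of Prop. \ref{Comp-Pref-Independent} guarantees.
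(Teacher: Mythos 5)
Your proposal is correct and follows essentially the same route as the paper: equivalence of (ii)--(iv) via Theorem \ref{classic}, the trivial implication (ii) $\Rightarrow$ (i), and (i) $\Rightarrow$ (iii) by contraposition using the embedding of Prop. \ref{Comp-Pref-Independent}. Your extra remark that strictness of the inclusion $X\subsetneq Z$ follows from $X$ being incomplete while $Z$ is complete is a small but welcome precision that the paper leaves implicit.
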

\begin{proof}
According to Theorem \ref{classic} for every regular code, Conds.  (ii), (iii), and (iv) are equivalent.
Trivially Cond. (ii) implies Cond. (i).
We prove that Cond. (i) implies Cond. (iii) in arguing by contrapositive.
Assuming $X$ non-complete, 
according to Prop.  \ref{Comp-Pref-Independent}, a regular $\uline{{\rm \cal P}_k}$-independent code  strictly containing $X$ exists, hence $X$ is not maximal as a $\uline{{\rm \cal P}_k}$-independent code.
\end{proof}
\begin{coro}
\label{Comp-ind-max}
Every non-maximal regular $\uline{{\cal P}_k}$-independent code can be embedded into some maximal one.
\end{coro}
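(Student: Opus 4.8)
The plan is to derive the corollary by chaining Propositions \ref{Comp-Pref-Independent} and \ref{classic11-Pi-k} together, so that no new combinatorial argument is needed. Let $X$ be a non-maximal regular $\uline{{\cal P}_k}$-independent code. First I would read the equivalence of Proposition \ref{classic11-Pi-k} in the direction (i) $\Rightarrow$ (iii): since $X$ is not maximal in the family of $\uline{{\cal P}_k}$-independent codes, it cannot be complete. This is precisely the hypothesis required to feed into the embedding result.

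Next I would apply Proposition \ref{Comp-Pref-Independent} to the non-complete code $X$, obtaining a complete code $Z$ with $X \subseteq Z$. The construction behind that proposition (built on Theorem \ref{EhRz}) preserves regularity and yields a code that remains $\uline{{\cal P}_k}$-independent, so $Z$ is a regular complete $\uline{{\cal P}_k}$-independent code. Reading Proposition \ref{classic11-Pi-k} once more, now for $Z$ and in the direction (iii) $\Rightarrow$ (i), shows that $Z$ is maximal in the family of $\uline{{\cal P}_k}$-independent codes. Because $X$ is non-maximal whereas $Z$ is maximal, the inclusion $X \subseteq Z$ is automatically strict, and $X$ has thus been embedded into a maximal $\uline{{\cal P}_k}$-independent code, as required.

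I do not anticipate any genuine difficulty, since the substantive work---the regularity-preserving completion and the equivalence between maximality and completeness---is already carried out in the two cited results; the corollary only has to assemble them in the right order. The single point deserving attention is that both uses of Proposition \ref{classic11-Pi-k} demand that the code involved be regular and $\uline{{\cal P}_k}$-independent, which holds for $X$ by hypothesis and for $Z$ because the embedding of Proposition \ref{Comp-Pref-Independent} preserves both regularity and $\uline{{\cal P}_k}$-independence.
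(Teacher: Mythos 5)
Your proposal is correct and follows essentially the same route as the paper's own proof: deduce non-completeness of $X$ from Proposition \ref{classic11-Pi-k}, embed $X$ into a complete regular $\uline{{\cal P}_k}$-independent code via Proposition \ref{Comp-Pref-Independent}, and then invoke Proposition \ref{classic11-Pi-k} again to conclude that the resulting code is maximal in the family of $\uline{{\cal P}_k}$-independent codes. Your additional remark that the inclusion is automatically strict is a minor refinement the paper states directly via $X\subsetneq Y$, but it changes nothing of substance.
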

\begin{proof}
Let $X$ be  a regular non-maximal regular $\uline{{\cal P}_k}$-independent code. According to Prop. 
\ref{classic11-Pi-k}, $X$ is non-complete. According to Prop.  \ref{Comp-Pref-Independent},  
a complete regular $\uline{{\cal P}_k}$-independent code $Y$ exists  st.  $X\subsetneq Y$. Once more according to Prop. \ref{classic11-Pi-k}, $Y$ is maximal as a regular $\uline{{\cal P}_k}$-independent code.
\end{proof}
\noindent
Finally, we obtain the following result:
\begin{proposit}
\label{c3_Pi}
 One can decide whether a given regular code $X\subseteq A^*$  satisfies  Cond. \ref{3}  wrt.  ${\rm \cal P}_k$. 
\end{proposit}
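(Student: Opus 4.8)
Proposition \ref{c3_Pi} asserts decidability of Cond. \ref{3} for a regular code $X$ with respect to ${\cal P}_k$.

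The plan is to reduce Cond. \ref{3} to the completeness test provided by Proposition \ref{decidable-bases}(iii), using the structural equivalences already established. First I would observe that deciding Cond. \ref{3} presupposes that $X$ is $\uline{{\cal P}_k}$-independent in the first place, since Cond. \ref{3} asks whether $X$ is maximal \emph{within the family of $\uline{{\cal P}_k}$-independent codes}. Thus the algorithm naturally splits into two phases: a preliminary check that $X$ actually belongs to that family, followed by the maximality test itself.

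For the preliminary phase, note that being a $\uline{{\cal P}_k}$-independent code means two things: $X$ is a code, and $\uline{{\cal P}_k}(X)\cap X=\emptyset$. Since $X$ is regular, its codicity is decidable by Proposition \ref{decidable-bases}(ii), and the condition $\uline{{\cal P}_k}(X)\cap X=\emptyset$ is exactly Cond. \ref{1} wrt. ${\cal P}_k$, whose decidability was already secured in Proposition \ref{decid-Pi1}. So both membership tests are effective. If $X$ fails either one, then $X$ is not even in the family, and Cond. \ref{3} is vacuously inapplicable (one reports that $X$ is not a maximal $\uline{{\cal P}_k}$-independent code).

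For the main phase, assume $X$ is a regular $\uline{{\cal P}_k}$-independent code. Here the key is Proposition \ref{classic11-Pi-k}, which establishes that for such a code the four conditions---maximality in the family, maximality as a code, completeness, and $\mu(X)=1$---are all equivalent. Therefore $X$ satisfies Cond. \ref{3} if and only if $\mu(X)=1$. Since $X$ is regular, Proposition \ref{decidable-bases}(iii) guarantees that one can decide whether $\mu(X)=1$ holds. Combining the two phases yields a complete decision procedure, which proves the proposition. The entire argument is routine once Proposition \ref{classic11-Pi-k} is in hand; the genuine mathematical content---namely that any non-complete $\uline{{\cal P}_k}$-independent code can be properly embedded into a complete one, so that non-completeness precludes maximality---has already been carried out in Proposition \ref{Comp-Pref-Independent}. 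No step here presents a serious obstacle; the work is purely in assembling the previously proven decidability and structure results into the stated conclusion.
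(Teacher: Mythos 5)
Your proof is correct and takes essentially the same route as the paper: reduce Cond.~\ref{3} to the condition $\mu(X)=1$ via Proposition~\ref{classic11-Pi-k}, then decide that condition using Proposition~\ref{decidable-bases}. The only difference is your explicit preliminary phase checking that $X$ is in fact a $\uline{{\cal P}_k}$-independent code (via Propositions~\ref{decidable-bases}(ii) and \ref{decid-Pi1}), a membership test the paper leaves implicit; this is a minor gain in completeness, not a different argument.
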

\begin{proof}
Once more according to Prop. \ref{classic11-Pi-k},  the code $X$ satisfies Cond. \ref{3} iff. $\mu(X)=1$ holds.
According to Prop. \ref{decidable-bases}, one can decide whether $X$ satisfies the last condition.
\end{proof}

\subsection{The suffix metric and the relation ${\cal S}_k$}
Given a pair of words $w,w'$, their {\it suffix} distance is $d_{\rm S}= |w|+|w'|-2|s|$,
where  $s$ denotes the longest word in ${\rm S}(w)\cap{\rm S}(w')$:
set ${\cal S}_{k}=\tau_{d_{\rm S},k}$. For every pair $w,w'\in A^*$, we have $d_{\rm S}(w,w')=d_{\rm P}(w^R,w'^R)$, hence 
$(w,w')\in {\cal S}_k$ is equivalent to $(w^R,w'^R)\in {\cal P}_k$. 
In particular, starting from the preceding automaton ${\cal R}_{{\rm P},k}$ (see proof of Prop. 
\ref{Pi-regular}), 
an automaton  with behavior $\left|\uline{{\cal S}_k}\right|$ can be constructed, whence $\uline{{\cal S}_k}$ is a regular relation (see Fig. \ref{Automaton-antiref-suffix}).
\begin{figure}[H]
\begin{center}
\includegraphics[width=5.104cm,height=5.224cm]{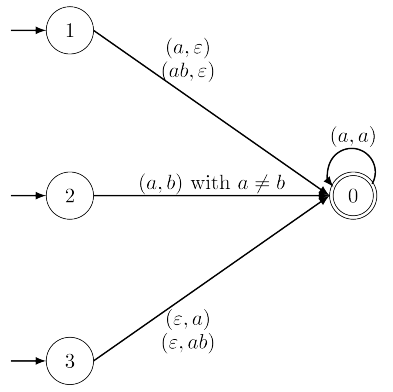}
\end{center}
\caption[]{\small An automaton with behavior $\uline{{\cal S}_2}$.}
\label{Automaton-antiref-suffix}
\end{figure}
\noindent By applying the preceding study to the reversal words, we obtain the following result:
\begin{proposit}
\label{decision-wrt-S}
 Given a regular code, one can decide whether it satisfies any of Conds.  \ref{1}--\ref{4} wrt.  ${\cal S}_k$. 
\end{proposit}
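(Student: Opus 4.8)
The plan is to reduce every assertion about the suffix relation $\mathcal{S}_k$ to the corresponding assertion about the prefix relation $\mathcal{P}_k$, which has already been fully settled in Propositions \ref{decid-Pi1} and \ref{c3_Pi}. The bridge is the reversal involution $w\mapsto w^R$, together with the identity $d_{\rm S}(w,w')=d_{\rm P}(w^R,w'^R)$ noted just before the statement, which gives $(w,w')\in\mathcal{S}_k\iff(w^R,w'^R)\in\mathcal{P}_k$. First I would record that reversal is an anti-automorphism of $A^*$ that preserves regularity: for any $X\subseteq A^*$ the set $X^R=\{x^R:x\in X\}$ is regular whenever $X$ is (one simply reverses every transition of a finite automaton for $X$ and swaps initial and terminal states), and $X$ is a code iff $X^R$ is a code, since $x_1\cdots x_n=y_1\cdots y_p$ holds iff $x_n^R\cdots x_1^R=y_p^R\cdots y_1^R$ holds. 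Thus $X\mapsto X^R$ carries the whole combinatorial situation for $\mathcal{S}_k$ over to the one for $\mathcal{P}_k$.

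Next I would treat the four conditions in turn, in each case translating through reversal. For Cond. \ref{1}, independence of $X$ wrt $\uline{\mathcal{S}_k}$ is equivalent to independence of $X^R$ wrt $\uline{\mathcal{P}_k}$, because $(x,y)\in\uline{\mathcal{S}_k}$ with $x,y\in X$ corresponds exactly to $(x^R,y^R)\in\uline{\mathcal{P}_k}$ with $x^R,y^R\in X^R$; since $X^R$ is a regular code and Cond. \ref{1} is decidable wrt $\mathcal{P}_k$ by Prop. \ref{decid-Pi1}, it is decidable for $X$ wrt $\mathcal{S}_k$. For Cond. \ref{2}, the analogue of Lemma \ref{equiv-c2-P} holds for $\mathcal{S}_k$ because $\mathcal{S}_k$ is also symmetric and the multiplicativity $\mathcal{S}_k\cdot\mathcal{S}_{k'}=\mathcal{S}_{k+k'}$ transfers from Lemma \ref{Pi20} by reversal; hence $X$ satisfies Cond. \ref{2} wrt $\mathcal{S}_k$ iff $X^R$ satisfies Cond. \ref{1} wrt $\mathcal{P}_{2k}$, which is again decidable. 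For Cond. \ref{4}, the set $\widehat{\mathcal{S}_k}(X)=\mathcal{S}_k(X)$ equals $\bigl(\mathcal{P}_k(X^R)\bigr)^R$, hence is regular by Prop. \ref{property-rec1} and closure of regularity under reversal, so by Prop. \ref{decidable-bases}(ii) one decides whether it is a code.

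For Cond. \ref{3} I would argue that $X$ is maximal in the family of $\uline{\mathcal{S}_k}$-independent codes iff $X^R$ is maximal in the family of $\uline{\mathcal{P}_k}$-independent codes, since $X\subseteq Y$ iff $X^R\subseteq Y^R$ and membership in each family is preserved by reversal. By Prop. \ref{classic11-Pi-k} the latter maximality is equivalent to $\mu(X^R)=1$; because the uniform Bernoulli measure is invariant under reversal (every word and its reversal have equal length, hence equal measure $|A|^{-|w|}$), we get $\mu(X^R)=\mu(X)$, so Cond. \ref{3} wrt $\mathcal{S}_k$ reduces to testing $\mu(X)=1$, decidable by Prop. \ref{decidable-bases}(iii). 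The only genuine care-point, rather than an obstacle, is verifying that reversal really does commute with all the relevant structure — regularity, the code property, independence, maximality, and the measure — but each of these is a short and standard check, so I expect no serious difficulty; the heavy lifting has already been done for $\mathcal{P}_k$.
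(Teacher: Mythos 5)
Your proposal is correct and is exactly the paper's approach: the paper's entire proof is the one-line remark that the result follows "by applying the preceding study to the reversal words," relying on the identity $d_{\rm S}(w,w')=d_{\rm P}(w^R,w'^R)$ stated just before, and your write-up simply carries out in detail the reversal reduction (regularity, code property, independence, maximality, and measure all being preserved under $w\mapsto w^R$) that the paper leaves implicit.
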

\section{On  the combinatorial structure of  $\uline{{\cal F}_k}$}
\label{Phi}
As in the preceding section, we start with some examples:
\begin{example}
{\rm
Over $A=\{a,b\}$, consider the code $X=\{a,ba,bb\}$ 

\noindent\hspace*{2mm}-- $X$ cannot satisfy Cond. \ref{1} wrt.  ${\cal F}_1$:
 indeed, we have $ba\in\uline{{\cal F}_1}(a)\cap X$.

\noindent\hspace*{2mm}-- Since we have 
 ${\cal F}_1\cdot{\cal F}_1^{-1}\left(a\right)\cap X=\{a,ba\}$, $X$ cannot satisfy the cond. (ii) in Lemma \ref{equiv-c2}, therefore it cannot satisfy Cond. \ref{2}.

\noindent\hspace*{2mm}-- We have $\mu(X)=\mu(a)+\mu(b)\mu(a)+\mu(b)^2=1/2+1/4+1/4=1$, therefore $X$ is a maximal code.
However, since it is not a $\uline{{\cal F}_1}$-independent codes, $X$ cannot satisfy Cond. \ref{3} wrt.  ${\cal F}_1$.

\noindent\hspace*{2mm}-- Finally, it follows from $\varepsilon\in {\cal F}_1(X)$ that  $\widehat {{\cal F}_1}(X)=X\cup {\cal F}_1(X)$ is not a code, whence it cannot satisfy Cond. \ref{4}.
}
\end{example}
\begin{example}
{\rm 
Take $A=\{a,b\}$ and consider the context-free bifix code $X=\{a^nb^n: n\ge 1\}$.

\noindent\hspace*{2mm} --
It follows from $\uline{{\cal F}_1}(X)=\bigcup_{n\ge 1}\{a^{n-1}b^n, a^{n+1}b^n, ba^nb^n, a^nb^{n-1}, a^nb^{n}a, a^nb^{n+1}\}$ that $\uline{{\cal F}_1}(X)\cap X=\emptyset$, thus $X$ is $1$-error-detecting wrt.  ${\cal F}_1$ (Cond. \ref{1}). 

\noindent\hspace*{2mm}--
Regarding error correction, we have $a^{n+1}b^{n+1}\in {\cal F}_1(a^nb^{n+1})\subseteq {\cal F}_1^2(a^nb^n)$,
thus   $a^{n+1}b^{n+1}\in\uline{{\cal F}_1^2}(a^nb^n)$.
This implies $\uline{{\cal F}_1\cdot {\cal F}_1^{-1}}(a^nb^n)\cap X\ne \{a^nb^n\}$, whence $X$ cannot satisfies the cond. (ii) in  Lemma \ref{equiv-c2}
that is, $X$ cannot satisfy Cond. \ref{2} wrt.  ${\cal F}_1$. 

\noindent\hspace*{2mm}-- We have $\mu(X)=\sum_{n\ge 1} \left(\frac{1}{4}\right)^n=1/3<1$, whence $X$ does not satisfy Cond. \ref{3}.

\noindent\hspace*{2mm}-- Finally, since we have $(a^nb^{n-1})(ba^nb^n)=(a^nb^n)(a^nb^n)$, the set $\widehat{{\cal F}_1}(X)={\cal F}_1(X)$  no more satisfies Cond. \ref{4}.
}
\end{example}
\subsection{A few generalities about  ${\cal F}_k$}
Given $w,w'\in A^*$, let $f$ be a maximum length word in  ${\rm F}(w)\cap {\rm F}(w')$ and let $(u,v,u',v')$ be a
 tuple of words st.  $w=ufv=u'fv'$: we have $d_{\rm F}(w,w')=|w|+|w'|-2|f|=|u|+|v|+|u'|+|v'|$.
The following statement, which comes from the definition, provides an extension of Claim \ref{claim0} in the framework of the factor metric 
(see Fig. \ref{config-factor-distance}):
\begin{fait}
\label{claim00}
With the preceding notation, each of the following properties holds:

{\rm (i)} $(w,w')\in {\cal F}_k$ is equivalent to $0\le |u|+|v|+|u'|+|v'|\le k$.

{\rm (ii)} $(w,w')\in\uline{ {\cal F}_k}$ is equivalent to $1\le |u|+|v|+|u'|+|v'|\le k$.
\end{fait}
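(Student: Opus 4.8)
The plan is to unwind the definitions, since the substantive computation has already been carried out in the paragraph preceding the claim. Recall that there we fixed a maximum length common factor $f\in{\rm F}(w)\cap{\rm F}(w')$ together with a factorization $w=ufv$, $w'=u'fv'$, and observed that $|w|=|u|+|f|+|v|$ and $|w'|=|u'|+|f|+|v'|$, so that the two copies of $|f|$ cancel in
\[
d_{\rm F}(w,w')=|w|+|w'|-2|f|=|u|+|v|+|u'|+|v'|.
\]
I would begin by emphasizing that, although neither $f$ nor the tuple $(u,v,u',v')$ need be unique, the integer $|f|$ depends only on the pair $(w,w')$; hence the displayed value $|u|+|v|+|u'|+|v'|$ is well defined and equals $d_{\rm F}(w,w')$ whatever admissible factorization is chosen.

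For (i), I would simply invoke the definition ${\cal F}_k=\tau_{d_{\rm F},k}$, for which $(w,w')\in{\cal F}_k$ holds iff $d_{\rm F}(w,w')\le k$. Substituting the identity above turns this into $|u|+|v|+|u'|+|v'|\le k$; the lower bound $0\le |u|+|v|+|u'|+|v'|$ is automatic because word lengths are non-negative, which yields the stated two-sided condition.

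For (ii), I would use that $\uline{{\cal F}_k}$ is the anti-reflexive restriction of ${\cal F}_k$, so $(w,w')\in\uline{{\cal F}_k}$ is equivalent to $(w,w')\in{\cal F}_k$ together with $w\ne w'$. Since $d_{\rm F}$ is a quasi-metric, $d_{\rm F}(w,w')=0$ holds precisely when $w=w'$, and as $d_{\rm F}$ is integer-valued this means $w\ne w'$ is equivalent to $d_{\rm F}(w,w')\ge 1$, i.e. to $|u|+|v|+|u'|+|v'|\ge 1$. Intersecting this with the condition from (i) gives $1\le |u|+|v|+|u'|+|v'|\le k$, as required. There is no genuine obstacle here: the argument is a verbatim analogue of the reasoning behind Claim \ref{claim0} for the prefix metric, the only point worth a line being the remark that the value is independent of the chosen maximum length common factor.
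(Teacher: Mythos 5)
Your proposal is correct and matches the paper's treatment: the paper states this claim as coming directly from the definition (with the identity $d_{\rm F}(w,w')=|w|+|w'|-2|f|=|u|+|v|+|u'|+|v'|$ already set up in the preceding paragraph), and your argument is exactly that definitional unwinding, with the anti-reflexive part handled via the metric axiom $d_{\rm F}(w,w')=0\iff w=w'$ and integer-valuedness, just as for Claim \ref{claim0}. The remark that the value is independent of the chosen maximum length common factor is a harmless (and clarifying) addition.
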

\begin{figure}[H]
\begin{center}
\includegraphics[width=10cm,height=4cm]{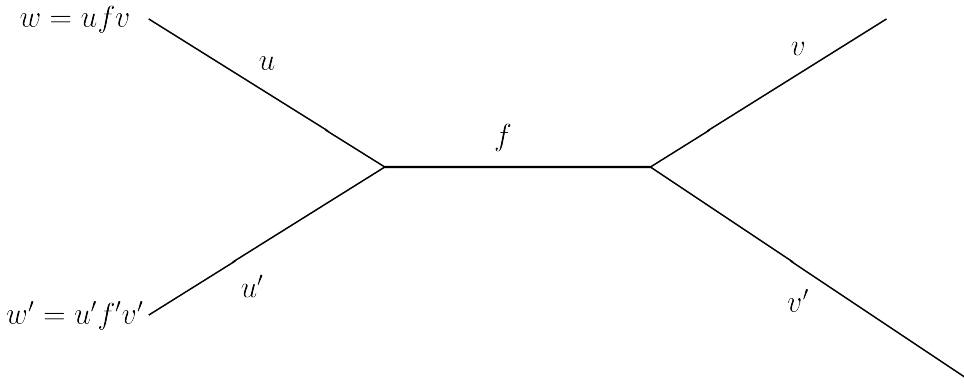}
\end{center}
\caption[]{We have $(w,w')\in{\cal F}_k$ iff.  $|u|+|v|+|u'|+|v'|\le k$.} 
\label{config-factor-distance}
\end{figure}

\noindent As attested by the following example, a maximum length word in ${\rm F}(w)\cap{\rm F}(w')$ needs not to be unique.
Because of this,   the proof of Prop. \ref{Pi-regular}   cannot be extended to the framework  of $\uline{{\cal F}_k}$.
\begin{example}
\label{Facts1}
{\rm  
$w=cccbaba babba b$, $w'=b babba ababaaccca$. There are two words of maximum length in ${\rm F}(w)\cap {\rm F}(w')$, 
namely $f_1=ababa$ and $f_2=babba$. 
}
\end{example}
\noindent 
Since $d_{\rm F}$ is a metric, the relation ${\cal F}_k$ is reflexive and symmetric.
In addition,  according to Lemma \ref{Fi20}, Corollary   \ref{Pi2}, directly translates into the following statement:  
\begin{coro}
\label{Fi2}
Given a pair of positive integer $k,n$ we have ${\cal F}_{nk}={\cal F}_k^n$.
\end{coro}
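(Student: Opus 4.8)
The plan is to argue by induction on $n \ge 1$, exactly paralleling the proof of Corollary \ref{Pi2} but working with the factor relations in place of the prefix ones, and invoking Lemma \ref{Fi20} wherever the earlier argument used Lemma \ref{Pi20}. Indeed, the statement immediately preceding the corollary already signals that this is a direct translation: the substantive composition identity has been carried out once, at the level of the factor metric, in Lemma \ref{Fi20}, and the corollary only iterates it.

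For the base case $n = 1$ the equation ${\cal F}_{1 \cdot k} = {\cal F}_k = {\cal F}_k^1$ holds trivially, by the convention for the composition power. For the inductive step I would assume ${\cal F}_{nk} = {\cal F}_k^n$ and compute ${\cal F}_{(n+1)k} = {\cal F}_{nk + k}$; applying Lemma \ref{Fi20} to the integer pair $(nk, k)$ gives ${\cal F}_{nk + k} = {\cal F}_{nk} \cdot {\cal F}_k$, and substituting the induction hypothesis yields ${\cal F}_{nk} \cdot {\cal F}_k = {\cal F}_k^n \cdot {\cal F}_k = {\cal F}_k^{n+1}$, which closes the induction.

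I do not expect any genuine obstacle in the corollary itself: all the real content lives in Lemma \ref{Fi20}, whose forward inclusion rests on the triangle inequality for $d_{\rm F}$ and whose converse inclusion rests on the case split according to whether one of the excess lengths $|w| - |f|$, $|w'| - |f|$ reaches the threshold $k$ (the delicate point being that a maximum-length common factor $f$ of the intermediate word $ufv$ with $w'$ remains of maximum length, so that the distances add up exactly). Since Lemma \ref{Fi20} is assumed established, the corollary reduces to the one-line induction above, just as Corollary \ref{Pi2} did in the prefix setting.
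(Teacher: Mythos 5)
Your proposal is correct and matches the paper's own argument: the paper proves Corollary \ref{Fi2} by observing that, thanks to Lemma \ref{Fi20}, the induction in the proof of Corollary \ref{Pi2} translates verbatim to the factor relations, which is precisely the induction you carry out. No gap; all the substantive content is indeed in Lemma \ref{Fi20}, as you note.
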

\noindent 
As a  consequence of Corollary \ref{Fi2}, we obtain the following result:
\begin{lem}
\label{Phi-2}
Given a positive integer $k$ we have ${\cal F}_{k}={\cal F}_{1}^k=({\cal P}_1\cup{\cal S}_1)^k$.
\end{lem}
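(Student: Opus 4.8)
The plan is to prove the two equalities of Lemma~\ref{Phi-2} in turn. The first equality, ${\cal F}_k={\cal F}_1^k$, is immediate from Corollary~\ref{Fi2} by taking $n=k$ and the base unit $k=1$ there (i.e.\ setting ``$k$''$=1$ and ``$n$''$=k$ in that statement), so that ${\cal F}_{k\cdot 1}={\cal F}_1^{\,k}$. Hence the whole content of the lemma reduces to establishing the second equality ${\cal F}_1={\cal P}_1\cup{\cal S}_1$; once that is in hand, the chain ${\cal F}_k={\cal F}_1^k=({\cal P}_1\cup{\cal S}_1)^k$ follows by substitution.

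To prove ${\cal F}_1={\cal P}_1\cup{\cal S}_1$ I would argue by double inclusion, using the characterisations via Claims~\ref{claim0} and~\ref{claim00}. For the inclusion $\supseteq$: if $(w,w')\in{\cal P}_1$ then by Claim~\ref{claim0} the longest common prefix $p$ satisfies $|w|+|w'|-2|p|\le 1$; since a common prefix is in particular a common factor, the maximum common factor $f$ satisfies $|f|\ge|p|$, whence $d_{\rm F}(w,w')=|w|+|w'|-2|f|\le|w|+|w'|-2|p|=d_{\rm P}(w,w')\le 1$, giving $(w,w')\in{\cal F}_1$. The same argument with suffixes (or, cleanly, via the reversal identity $d_{\rm S}(w,w')=d_{\rm P}(w^R,w'^R)$ and $d_{\rm F}(w,w')=d_{\rm F}(w^R,w'^R)$ already recorded in the excerpt) handles ${\cal S}_1\subseteq{\cal F}_1$. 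For the reverse inclusion $\subseteq$: suppose $(w,w')\in{\cal F}_1$. Either $w=w'$, in which case $(w,w')\in{\cal P}_1$ trivially (${\cal P}_1$ is reflexive); or $d_{\rm F}(w,w')=1$, so by Claim~\ref{claim00} exactly one of the four border words $u,v,u',v'$ has length $1$ and the other three are empty. One must then check, case by case on which of the four positions is nonempty, that a single one-letter border discrepancy is exactly a length-$1$ prefix difference or a length-$1$ suffix difference.

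The heart of the argument—and the step I expect to need the most care—is this case analysis for $\subseteq$. When the single nonempty border is $u$ (so $w=au'fv'$-style, $w'=fv'$ with the common factor $f$ occurring as a \emph{prefix} of $w'$), I want to conclude $(w,w')\in{\cal S}_1$ or ${\cal P}_1$, and the delicate point is that the maximum common factor of length $|w'|-0$ forces $f$ to sit at a boundary of $w$ as well, so that deleting the one extra letter aligns either the prefixes or the suffixes. Concretely, $d_{\rm F}(w,w')=1$ with $|w|,|w'|$ differing by $1$ means $w'$ is obtained from $w$ by deleting one boundary letter, i.e.\ $w'\in\{w\,a^{-1}\ \text{style right-deletion}\}$ or a left-deletion; a right-deletion yields a common prefix of length $|w'|$ so $(w,w')\in{\cal P}_1$, and a left-deletion yields a common suffix of length $|w'|$ so $(w,w')\in{\cal S}_1$. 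The only genuinely nontrivial subtlety is ruling out the possibility that $f$ is a common factor sitting strictly inside \emph{both} words with nonempty borders on opposite sides; but Claim~\ref{claim00} with total border length $1$ forbids that, since two of the four borders would then have to be nonempty. I would present the analysis compactly by noting that $d_{\rm F}=1$ forces $\bigl||w|-|w'|\bigr|=1$, reducing everything to a single-letter insertion/deletion at one of the two ends, and matching each end to ${\cal P}_1$ or ${\cal S}_1$ respectively.

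Finally I would assemble the pieces: ${\cal F}_1={\cal P}_1\cup{\cal S}_1$ combined with ${\cal F}_k={\cal F}_1^k$ (from Corollary~\ref{Fi2}) yields ${\cal F}_k=({\cal P}_1\cup{\cal S}_1)^k$, completing the proof. I expect the bookkeeping to be short once the reflexive case is separated out, and the reversal symmetry between ${\cal P}_1$ and ${\cal S}_1$ lets me prove only the prefix side and transport the suffix side for free.
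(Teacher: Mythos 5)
Your proposal is correct and follows essentially the same route as the paper: it derives ${\cal F}_k={\cal F}_1^k$ from Corollary \ref{Fi2}, and then establishes ${\cal F}_1={\cal P}_1\cup{\cal S}_1$ by double inclusion using Claims \ref{claim0} and \ref{claim00}, with the case analysis on which single border word $u,v,u',v'$ is nonempty. Your writeup merely spells out the case analysis that the paper compresses into ``at most one of the integers $|u|$, $|v|$, $|u'|$, $|v'|$ is non-zero: this implies $(w,w')\in{\cal P}_1\cup{\cal S}_1$,'' so there is no substantive difference.
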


\begin{proof}
-- With the condition of Corollary \ref{Fi2}, by   taking $n=1$: we obtain ${\cal F}_1^k={\cal F}_{k}$.

\smallskip
\noindent \hspace*{2mm}
-- Trivially, we have ${\cal P}_1\cup{\cal S}_1\subseteq {\cal F}_{1}$.
For the converse, let $(w,w')\in {\cal F}_{1}$, let $f$ be a maximum word in ${\rm F}(w)\cap {\rm F}(w')$, and let $(u,v,u',v')$ st.  $w=ufv$, $w'=u'fv'$.
Acording to Claim \ref{claim00}, we have $0\le |u|+|v|+|u'|+|v'|\le 1$, thus $|u|+|v|+|u'|+|v'|\in\{0,1\}$.
More precisely, at most one of the integers $|u|$, $|v|$, $|u'|$ and $|v'|$ is non-zero: this implies $(w,w')\in{\cal P}_1\cup{\cal S}_1$.
As a consequence we have ${\cal F}_{1}={\cal P}_1\cup{\cal S}_1$, thus 
${\cal F}_{1}^k=({\cal P}_1\cup{\cal S}_1)^k$.
\end{proof}
\noindent
Regarding regular relations, the following result is a consequence, of Lemma \ref{Phi-2}:
\begin{coro}
\label{Fk-F1bar-regular}
 Each of the following properties holds:

{\rm (i)} For every integer $k\ge 1$, the relation ${\cal F}_k$ is regular.

 {\rm  (ii)} The relation $\uline{{\cal F}_1}$ is regular.
\end{coro}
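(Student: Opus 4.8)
The plan is to prove both regularity statements of Corollary \ref{Fk-F1bar-regular} by reducing everything to the already-established regularity of the prefix and suffix relations, using the decomposition of Lemma \ref{Phi-2} together with the closure properties of regular relations recalled in Section \ref{prelim}.

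For part (i), I would argue as follows. By Lemma \ref{Phi-2} we have the identity ${\cal F}_k=({\cal P}_1\cup{\cal S}_1)^k$, so it suffices to show that the relation ${\cal P}_1\cup{\cal S}_1$ is regular and that regularity is preserved under the union and under the $k$-fold composition appearing here. The relation ${\cal P}_1$ is regular by Prop. \ref{Pi-regular}, and ${\cal S}_1$ is regular by the reversal argument recorded just before Prop. \ref{decision-wrt-S} (equivalently, by transporting ${\cal R}_{{\rm P},1}$ through the reversal map). Regular relations over $A^*\times A^*$ are closed under finite union (Prop. \ref{properties-reg-A}, since $A^*\times A^*$ is a free monoid and these are regular subsets of it), so ${\cal P}_1\cup{\cal S}_1$ is regular. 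Finally, as stated in the paragraph on regular relations in Section \ref{prelim}, composition preserves regularity; iterating this $k-1$ times shows that $({\cal P}_1\cup{\cal S}_1)^k$ is regular, which is exactly ${\cal F}_k$. Thus ${\cal F}_k$ is regular for every $k\ge 1$.

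For part (ii), I would obtain $\uline{{\cal F}_1}$ from ${\cal F}_1$ by removing the diagonal. Concretely, $\uline{{\cal F}_1}={\cal F}_1\setminus id_{A^*}={\cal F}_1\cap\overline{id_{A^*}}$. By part (i) the relation ${\cal F}_1$ is regular, and $\overline{id_{A^*}}$ is a regular relation as recorded at the end of the subsection on regular relations (see Fig. \ref{Automaton-comp-idA*}). Since $A^*\times A^*$ is itself a free monoid, Prop. \ref{properties-reg-A} guarantees that regular subsets are closed under intersection, so $\uline{{\cal F}_1}={\cal F}_1\cap\overline{id_{A^*}}$ is regular. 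This completes the proof.

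I expect the only delicate point to be the justification that the closure results phrased for $A^*$ in Prop. \ref{properties-reg-A} may legitimately be invoked for the monoid $A^*\times A^*$, since the general warning in Section \ref{prelim} is that boolean operations need not preserve regularity in an arbitrary monoid. The resolution is that $A^*\times A^*$ is a finitely generated free monoid, so Prop. \ref{properties-reg-A} applies to it verbatim; this is the step I would state explicitly rather than gloss over. By contrast, the closure under composition used in part (i) holds for relations between arbitrary monoids and needs no such caveat, so the main conceptual content really lies in the algebraic identity ${\cal F}_1={\cal P}_1\cup{\cal S}_1$ furnished by Lemma \ref{Phi-2}, after which both claims follow purely from closure properties. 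Note finally that this argument deliberately does not settle the regularity of $\uline{{\cal F}_k}$ for $k\ge 2$, which the introduction flags as open and which is why only $\uline{{\cal F}_1}$ is asserted here.
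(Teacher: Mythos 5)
Your part (i) is essentially the paper's own argument and is sound: ${\cal P}_1$ and ${\cal S}_1$ are regular by Prop.~\ref{Pi-regular} (and reversal), union of regular subsets of any monoid is regular because union is one of the regular operations, and composition preserves regularity of relations, so ${\cal F}_k=({\cal P}_1\cup{\cal S}_1)^k$ is regular. (Your stated justification for the union step is off, though: $A^*\times A^*$ is \emph{not} a free monoid, and Prop.~\ref{properties-reg-A} is stated only for $M=A^*$; the union step holds simply because the regular subsets of an arbitrary monoid are closed under the regular operations by definition.)

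Part (ii) has a genuine gap, and it sits exactly at the delicate point you flagged. You conclude that ${\cal F}_1\cap\overline{id_{A^*}}$ is regular by invoking closure under intersection for regular subsets of $A^*\times A^*$. But regular relations (rational subsets of $A^*\times A^*$) are \emph{not} closed under intersection or complementation --- this is precisely the warning in the preliminaries (``In the most general case the family of regular subsets of $M$ is not closed neither under intersection, nor complementation\dots''), and it cannot be repaired by calling $A^*\times A^*$ free, since it is not. The decisive symptom that your argument proves too much: if intersecting a regular relation with $\overline{id_{A^*}}$ preserved regularity, then $\uline{{\cal F}_k}={\cal F}_k\cap\overline{id_{A^*}}$ would be regular for every $k$ by your part (i) --- yet the paper (and you, in your last sentence) record this as an open problem for $k\ge 2$. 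The paper avoids the illegal intersection by distributing it over the union before intersecting:
$\uline{{\cal F}_1}=\left({\cal P}_1\cup{\cal S}_1\right)\cap\overline{id_{A^*}}=\left({\cal P}_1\cap\overline{id_{A^*}}\right)\cup\left({\cal S}_1\cap\overline{id_{A^*}}\right)=\uline{{\cal P}_1}\cup\uline{{\cal S}_1}$,
and then uses the fact that $\uline{{\cal P}_1}$ and $\uline{{\cal S}_1}$ are regular --- not by any closure property, but by the explicit automaton constructions of Prop.~\ref{Pi-regular}. The union of these two regular relations is then regular. This is why the decomposition ${\cal F}_1={\cal P}_1\cup{\cal S}_1$ is essential for (ii) and not merely a convenience: it lets one replace a forbidden intersection by a union of relations whose regularity was established directly.
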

\begin{proof}
According to Prop. \ref{Pi-regular}, ${\cal P}_1$ and ${\cal S}_1$ are regular relations. According to Prop. \ref{properties-reg-A}, the relations ${\cal F}_1={\cal P}_1\cup {\cal S}_1$ and ${\cal F}_k={\cal F}_1^k$ are regular:
this establishes  the property (i). 
Regarding  the property  (ii), we have  $\uline{{\cal F}_1}=\left( {\cal P}_1\cup {\cal S}_1 \right)\cap \overline{id_{A^*}}=\left( {\cal P}_1\cap \overline{id_{A^*}}\right)\cup \left({\cal S}_1\cap \overline{id_{A^*}} \right)=\uline{{\cal P}_1}\cup \uline{{\cal S}_1}$. Once more according to Prop. \ref{Pi-regular}, the relations $\uline{{\cal P}_1}$ and $\uline{{\cal S}_1}$ are regular, whence $\uline{{\cal F}_1}$ itself is regular.
\end{proof}
\noindent 
Regarding Conds. (c1),(c2), the property (ii) in Corollary \ref{Fk-F1bar-regular} may appear promising.
Unfortunately, we do not know whether it could be extended to the relation ${\cal F}_k$,  for any $k\ge 1$.
For instance, taking $k=2$ we have  $\uline{{\cal F}_1}=\left( {\cal P}_1\cup {\cal S}_1 \right)^2\cap \overline{id_{A^*}}=\uline{{\cal P}_1^2}\cup\uline{{\cal P}_1{\cal S}_1}\cup\uline{{\cal S}_1{\cal P}_1}\cup\uline{{\cal S}_1^2}$ thus, according to Corollary \ref{Pi2},
$\uline{{\cal F}_1}=\uline{{\cal P}_2}\cup\uline{{\cal P}_1{\cal S}_1}\cup\uline{{\cal S}_1{\cal P}_1}\cup\uline{{\cal S}_2}$.  
 Although the relations $\uline{{\cal P}_2}$ and $\uline{{\cal S}_2}$ are regular (see  Prop. \ref{Pi-regular}), we do  not know whether the same holds for $\uline{{\cal P}_1{\cal S}_1}$ and $\uline{{\cal S}_1{\cal P}_1}$.
However, a clever examination of  the structure of ${\cal F}_k\subseteq A^*\times A^*$ will allow to  overcome this obstacle: to be more precise, afterwards we will prove that $\uline{{\cal F}_k}$ preserves the regularity of sets.
\subsection{A set covering for  $\uline{{\cal F}_k}$}
\label{structure}
 
\label{automaton F_k}
We start by the following property: regarding Claim \ref{claim00},
it actually allows to get away of the  maximum length  constraint over the words in ${\rm F}(w)\cap {\rm F}(w')$. 
\begin{lem}
\label{charact-d_F}
Given a pair of words $(w,w')\in A^*\times A^*$, we have $(w,w')\in {\cal F}_k$ iff.  a tuple of words $(g, u,u',v,v')$ exists st.  each of the following conds. holds:

{\rm (i)}  $0\le |u|+|u'|+|v|+|v'|\le k$ holds.

{\rm (ii)} We have $w=ugv$ and $w'=u'gv'$.
\end{lem}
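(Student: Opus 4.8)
The statement to prove is Lemma~\ref{charact-d_F}, characterizing membership in ${\cal F}_k$ without the maximum-length constraint on the common factor.

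\medskip

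The plan is to prove the two implications separately. The backward direction ((i) and (ii) imply $(w,w')\in{\cal F}_k$) is the routine one: given a tuple $(g,u,u',v,v')$ with $w=ugv$, $w'=u'gv'$, and $|u|+|u'|+|v|+|v'|\le k$, I simply observe that $g$ is a common factor of $w$ and $w'$, so if $f$ denotes a \emph{maximum} length word in ${\rm F}(w)\cap{\rm F}(w')$, then $|f|\ge|g|$. Hence
\[
d_{\rm F}(w,w')=|w|+|w'|-2|f|\le |w|+|w'|-2|g|=|u|+|v|+|u'|+|v'|\le k,
\]
which gives $(w,w')\in{\cal F}_k$ directly. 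The key point here is only that passing from an arbitrary common factor $g$ to the maximal one $f$ can only \emph{decrease} the distance, so an arbitrary common factor already witnesses an upper bound on $d_{\rm F}$.

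\medskip

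The forward direction ($(w,w')\in{\cal F}_k$ implies the existence of such a tuple) is essentially immediate once one unwinds the definitions, and this is where Claim~\ref{claim00} does the work. Assuming $(w,w')\in{\cal F}_k$, take for $g$ precisely a maximum length word $f\in{\rm F}(w)\cap{\rm F}(w')$, and choose $(u,v,u',v')$ with $w=ufv$ and $w'=u'fv'$. By Claim~\ref{claim00}(i), membership $(w,w')\in{\cal F}_k$ is equivalent to $0\le|u|+|v|+|u'|+|v'|\le k$, so cond.~(i) holds with $g=f$, and cond.~(ii) holds by the choice of the decompositions. Thus the maximal common factor itself furnishes the required tuple.

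\medskip

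I do not expect any genuine obstacle: the lemma is deliberately a relaxation of Claim~\ref{claim00}, trading the tight identity $d_{\rm F}(w,w')=|u|+|v|+|u'|+|v'|$ (which forces $f$ to be maximal) for the inequality $|u|+|v|+|u'|+|v'|\le k$ (which permits any common factor $g$). The only thing to be careful about is the direction of the inequality: a \emph{non-maximal} common factor $g$ yields $|u|+|v|+|u'|+|v'|\ge d_{\rm F}(w,w')$, so the existence of \emph{some} tuple with the length sum bounded by $k$ is genuinely weaker than, and implied by, the maximal-factor identity — one must not mistakenly claim equality for arbitrary $g$. The usefulness of this reformulation, as the surrounding text indicates, is that it lets subsequent arguments select a convenient (not necessarily longest) common factor $g$, thereby sidestepping the non-uniqueness of maximal common factors illustrated in Example~\ref{Facts1}.
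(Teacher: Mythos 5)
Your proof is correct and follows essentially the same route as the paper's own: for the forward direction you take $g$ to be a maximum-length common factor $f$ and invoke Claim~\ref{claim00}, and for the converse you compare an arbitrary common factor $g$ with the maximal one $f$ to get $d_{\rm F}(w,w')=|w|+|w'|-2|f|\le |w|+|w'|-2|g|=|u|+|v|+|u'|+|v'|\le k$. The only difference is presentational (the paper states the forward direction first), so there is nothing to add.
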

\begin{proof}
Let $(w,w')\in {\cal F}_k$,
let $f$ be a maximum length word in ${\rm F}(w)\cap {\rm F}(w')$, and let $(u,u',v,v')$ st. $w=ufv$, $w'=u'fv'$.  
Taking $g=f$, trivially we obtain the cond. (ii) of Lemma \ref{charact-d_F}.
In addition, according to Claim \ref{claim00}, we obtain the cond. (i).
Conversely, assume that there is  a tuple of words $(g, u,u',v,v')$ satisfying  both the conds. (i) and (ii).
With the cond. (ii), the word $g$ belongs to ${\rm F}(w)\cap {\rm F}(w')$:
by the maximality of $|f|$ we have $|g|\le |f|$. This implies
$d_{\rm F}(w,w')=|w|+|w'|-2|f|\le |w|+|w'|-2|g|$, thus  
$ d_{\rm F}(w,w')\le |u|+|u'|+|v|+|v'|$. In  view of the cond. (i) we obtain $d_{\rm F}(w,w')\le k$, thus $(w,w')\in {\cal F}_k$.
\end{proof}
\noindent
Notice that, in the statement of Lemma \ref{charact-d_F}, the word $g$ needs not to be a factor of any maximum length word  in ${\rm F}(w)\cap {\rm F}(w')$, as attested by what follows:
\begin{example} (Example \ref{Facts1} continued)
{\rm  
Let $w= ccc baba babba b$, $w'=b babba ababaa ccc a$, and $k=23$. Recall that there  are two words of maximum length in ${\rm F}(w)\cap {\rm F}(w')$, 
namely $f_1=ababa$ and $f_2=babba$: we have $d_{\rm F}(w,w')=|w|+|w'|-2|f_1|=13+16-10=19$, thus $(w,w')\in{\cal F}_k$. 
Taking $g=ccc$, we have
$w=ugv$, $w'=u'gv$', with $u=\varepsilon$, $v=babababba b$, $u'=bbabbaababaa$, and $v'=a$.
Although we have neither $g\in {\rm F}(f_1)$ nor $g\in  {\rm F}(f_2)$, it follows from  $|u|+|u'|+|v|+|v'|=23\le k$ that  $(g,u,u',v,v')$ satisfies both  Conds. (i), (ii) of Lemma \ref{charact-d_F}.
}
\end{example}
\noindent In view of Lemma \ref{charact-d_F}, in what follows,   
we indicate the construction of a finite family of subsets of $A^*\times A^*$, namely $\left(S_\omega\right)_{\omega\in\Omega_k}$.
Firstly, we denote by $\Omega_k$ the set of the tuple $(u,u',v,v')$ satisfying  the cond. (i) in Lemma \ref{charact-d_F}. By construction, $\Omega_k$ is finite.
Secondly, for each $\omega=(u,u',v,v')\in \Omega_k$ we denote by $S_\omega$ the set of the pairs $(ugv,u'gv')$, for all $g\in A^*$. 
The following result emphasizes a connection between the family $\left(S_\omega\right)_{\omega\in\Omega_k}$ and the relation ${\cal F}_k$:
\begin{lem}
\label{automaton-A-omega}
Each of the following properties holds:

{\rm(i)} For any $\omega\in\Omega_k$, the relation $S_\omega\subseteq A^*\times A^*$  is  regular. 

{\rm (ii)} We have ${\cal F}_k=\bigcup\limits_{\omega\in\Omega_k}S_\omega$. 
\end{lem}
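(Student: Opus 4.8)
The plan is to prove the two parts of Lemma \ref{automaton-A-omega} separately, since the regularity claim (i) is a structural fact about each $S_\omega$ while the covering claim (ii) is essentially a restatement of Lemma \ref{charact-d_F}.

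For part (i), I would fix $\omega=(u,u',v,v')\in\Omega_k$ and exhibit $S_\omega$ as the behavior of a finite $A^*\times A^*$-automaton, invoking the Elgot--Mezei theorem (Theorem \ref{Elgot-Mezei-0}). The set $S_\omega=\{(ugv,u'gv'):g\in A^*\}$ naturally decomposes as a concatenation product in the monoid $A^*\times A^*$: namely $S_\omega=\{(u,u')\}\cdot D\cdot\{(v,v')\}$, where $D=\{(g,g):g\in A^*\}=id_{A^*}$ is the diagonal relation. Since $id_{A^*}$ is regular (noted at the end of the Words-binary-relations subsection) and the two singletons $\{(u,u')\},\{(v,v')\}$ are finite hence regular, closure of regular sets under concatenation product gives that $S_\omega$ is regular. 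Concretely the automaton reads $(u,u')$ on a first edge into a loop state carrying the diagonal transitions $\xrightarrow{(a,a)}$ for each $a\in A$, then reads $(v,v')$ into a terminal state; I would state this cleanly and cite Theorem \ref{Elgot-Mezei-0}.

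For part (ii), I would argue by double inclusion, reading off both directions directly from Lemma \ref{charact-d_F}. If $(w,w')\in S_\omega$ for some $\omega=(u,u',v,v')\in\Omega_k$, then by definition of $S_\omega$ there is $g\in A^*$ with $w=ugv$ and $w'=u'gv'$, and by definition of $\Omega_k$ the tuple satisfies $0\le|u|+|u'|+|v|+|v'|\le k$; these are exactly conditions (i) and (ii) of Lemma \ref{charact-d_F}, so $(w,w')\in{\cal F}_k$, giving $\bigcup_{\omega\in\Omega_k}S_\omega\subseteq{\cal F}_k$. Conversely, if $(w,w')\in{\cal F}_k$, Lemma \ref{charact-d_F} furnishes a tuple $(g,u,u',v,v')$ with $w=ugv$, $w'=u'gv'$, and $|u|+|u'|+|v|+|v'|\le k$; setting $\omega=(u,u',v,v')$ we have $\omega\in\Omega_k$ and $(w,w')\in S_\omega$ by construction, so ${\cal F}_k\subseteq\bigcup_{\omega\in\Omega_k}S_\omega$.

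I do not expect a serious obstacle here, since the real combinatorial work has already been isolated into Lemma \ref{charact-d_F}, whose whole purpose is to replace the maximum-length common factor $f$ by an arbitrary common ``gap'' word $g$; this is precisely what makes $g$ range freely over $A^*$ and hence makes each $S_\omega$ a clean diagonal-based regular relation rather than something constrained by a maximality condition. The only point requiring a little care is the factorization $S_\omega=\{(u,u')\}\cdot id_{A^*}\cdot\{(v,v')\}$ in the product monoid: I should make explicit that concatenation in $A^*\times A^*$ acts componentwise, so that $(u,u')(g,g)(v,v')=(ugv,u'gv')$, which is exactly the defining form of the elements of $S_\omega$. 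With that observation the regularity follows immediately from the closure properties of regular sets under the regular operations.
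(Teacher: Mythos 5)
Your proposal is correct and follows essentially the same route as the paper: part (ii) is the same double inclusion read off from Lemma \ref{charact-d_F}, and the automaton you describe for part (i) (an edge labelled $(u,u')$ into a state looping on $(a,a)$ for each $a\in A$, then an edge labelled $(v,v')$ to the terminal state) is exactly the paper's automaton ${\cal A}_\omega$. Your algebraic framing $S_\omega=\{(u,u')\}\cdot id_{A^*}\cdot\{(v,v')\}$, using that $id_{A^*}$ is regular and that regular subsets of the monoid $A^*\times A^*$ are closed under concatenation product, is a valid and slightly slicker justification of (i) that spares the path-by-path verification of the automaton's behavior, but it is the same idea in algebraic clothing.
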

\begin{proof}
 -- For establishing  the property  (i), we observe that, given $\omega=(u,u',v,v')\in\Omega_k$, the relation  $S_\omega$ is the behavior of a finite automaton, namely ${\cal A}_\omega$.
The states are $0,1,2$, the initial state being $0$ and the terminal one being $2$.
The transitions are $0\xrightarrow{(u,u')}1$, $1\xrightarrow{(v,v')}2$, and $1\xrightarrow{(a,a)}1$, for every $a\in A$ (see Fig. \ref{AutomatonFact1}).
By construction the successful paths are
$0\xrightarrow{(u,u')}1\xrightarrow{(a_1,a_1)}1 \cdots 1\xrightarrow{(a_n,a_n)}1\xrightarrow{(v,v')}2$, with $n\ge 0$ and $a_1,\cdots, a_n\in A$.
Since the corresponding labels are $(ugv,u'gv')$, for all $g=a_1\cdots a_n\in A^*$, the behavior  of ${\cal A}_\omega$ is $S_\omega$.
\begin{figure}[H]
\begin{center}
\includegraphics[width=8.5cm,height=1.75cm]{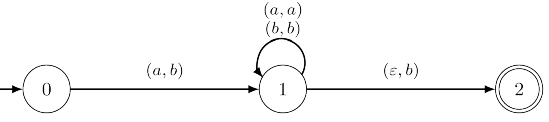}
\end{center}
\caption[]{\small Automaton ${\cal A}_\omega$, with $\omega=(a,b,\varepsilon,b)$.}
\label{AutomatonFact1}
\end{figure}  
\noindent\hspace*{2mm}-- Now we  proceed to establish the property (ii).
Firstly, we prove that ${\cal F}_k\subseteq\bigcup_{\omega\in\Omega_k} S_\omega$. Assuming that $(w,w')\in{\cal F}_k$,
according to  Lemma \ref{charact-d_F}, a tuple of words  $(g,u,u',v,v')$ exists  st.   $0\le |u|+|u'|+|v|+|v'|\le k$, $w=ugv$, and $w'=u'g'v'$: set $\omega=(u,u',v,v')$.
By construction the cond. $0\le |u|+|u'|+|v|+|v'|\le k$ implies $\omega\in \Omega_k$; in addition $w=ugv$, $w'=u'gv'$ implies $(w,w')\in S_\omega$. 
For proving that $\bigcup_{\omega\in\Omega_k}S_\omega\subseteq {\cal F}_k$, we consider  $\omega=(u,u',v,v')\in\Omega_k$, and  $(w,w')\in S_\omega$. By definition,  $g\in  A^*$ exists st.  $w=ugv$ and $w'=u'gv'$.
On the one hand, by construction $\omega\in\Omega_k$ implies $|u|+|u'|+|v|+|v'|\le k$: we obtain the cond. (i) of Lemma \ref{charact-d_F}. 
On the other hand, from the fact that the equations  $w=ugv$ and $w'=u'gv'$ hold, we are further in the cond. (ii) of the same lemma: this implies $(w,w')\in {\cal F}_k$.
\end{proof}
\noindent Lemma \ref{automaton-A-omega} attests that the family $\left(S_\omega\right)_{\omega\in\Omega_k}$ constitutes a set covering for ${\cal F}_k\subseteq A^*\times A^*$.
 In that family,  the sets need not to be  pairwise disjoint, even pairwise different.
For instance,  let $k\ge 2$, and let $u,v\in A^*$ such that $|u|+|v|\le k/2$. 
With this cond. both the tuples $\omega_0=(\varepsilon,\varepsilon,\varepsilon,\varepsilon)$ and $\omega_1=(u,u,v,v)$ belong to  $\Omega_k$. 
From the fact that 
$(\varepsilon\cdot uv\cdot \varepsilon, \varepsilon\cdot uv\cdot \varepsilon)=(u\cdot \varepsilon\cdot v,u\cdot \varepsilon\cdot v)$, 
we obtain $(uv,uv)\in S_{\omega_0}\cap S_{\omega_1}$. 
As another example, taking $\omega_2=(u,u',pt,pt')\in\Omega_k$ and $\omega_3=(u,u',t,t')\in\Omega_k$,  we have $S_{\omega_2}=S_{\omega_3}$.
Indeed, $(w,w')\in\omega_2$ holds iff. some $g\in A^*$ exists st.  $w=u\cdot g\cdot pt$ and $w'=u'\cdot g\cdot pt'$ that is,  $w=u\cdot gp\cdot t$ and $w'=u'\cdot gp\cdot t'$. 
Nevertheless,  such redundancies have no incidence on the rest of our  study. In particular they  do not involve the properties of the set $R_\omega$, which is further constructed.

\medskip\noindent
In the sequel, starting from $(S_\omega)_{\omega\in\Omega_k}$, we construct a covering for $\uline{{\cal F}_k}$. Beforehand, we note that the following property holds:
\begin{fait}
\label{claim20}
Let  $\omega=(u,u',v,v')\in\Omega_k$. Then $u=u'$ with $v=v'$ implies $S_\omega\subseteq id_{A^*}$.
\end{fait}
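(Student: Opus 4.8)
The plan is to argue directly from the definition of $S_\omega$ given just before the statement, so no auxiliary machinery is needed. Recall that for $\omega=(u,u',v,v')\in\Omega_k$, the set $S_\omega$ consists exactly of the pairs $(ugv,u'gv')$ as $g$ ranges over $A^*$. The whole content of the claim is that, once the two ``left'' components agree ($u=u'$) and the two ``right'' components agree ($v=v'$), every such pair has identical first and second coordinates, hence lies on the diagonal $id_{A^*}$.

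Concretely, I would fix an arbitrary element $(w,w')\in S_\omega$ and unfold the definition: there exists $g\in A^*$ with $w=ugv$ and $w'=u'gv'$. Substituting the hypotheses $u=u'$ and $v=v'$ then yields $w'=u'gv'=ugv=w$, so $(w,w')=(w,w)$. By the definition $id_{A^*}=\{(w,w):w\in A^*\}$ recalled in the preliminaries, this gives $(w,w')\in id_{A^*}$. Since $(w,w')$ was an arbitrary element of $S_\omega$, we conclude $S_\omega\subseteq id_{A^*}$.

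There is essentially no obstacle here: the statement is an immediate consequence of the way $S_\omega$ was constructed, and the only thing to verify is the routine substitution of equal factors into the two coordinates. The point of isolating it as a claim is presumably its later role, namely to identify precisely which sets $S_\omega$ of the covering $(S_\omega)_{\omega\in\Omega_k}$ of ${\cal F}_k$ contribute only diagonal pairs; these are exactly the ones to be discarded when passing from ${\cal F}_k$ to its anti-reflexive restriction $\uline{{\cal F}_k}$. I would therefore keep the proof to the two or three lines above and defer any discussion of that downstream use to where the covering of $\uline{{\cal F}_k}$ is actually assembled.
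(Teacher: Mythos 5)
Your proof is correct and is essentially identical to the paper's own argument: both simply substitute $u=u'$ and $v=v'$ into the defining form $(ugv,u'gv')$ of an arbitrary element of $S_\omega$ to conclude that every such pair lies on the diagonal $id_{A^*}$. Nothing is missing, and no further comment is needed.
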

\begin{proof}
$u=u'$ with  $v=v'$ implies $S_\omega=\{(ugv,ugv): g\in A^*\}$. Furthermore $(w,w')\in S_\omega$ implies $w'=w$.
\end{proof}
\noindent In view of Claim \ref{claim20}, we  introduce the set  $\Omega'_k$ of all the tuple $(u,u',v,v')\in\Omega_k$ st.  at least one of the  conds. $u\ne u'$  or $v\ne v'$ holds. With this notation, the following property holds:
\begin{lem}
\label{claim21}
We have  $\uline{{\cal F}_k}=\bigcup\limits_{\omega\in\Omega'_k}\left(S_\omega\setminus id_{A^*}\right)$.
\end{lem}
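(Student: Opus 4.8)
The plan is to prove the set equality $\uline{{\cal F}_k}=\bigcup_{\omega\in\Omega'_k}\left(S_\omega\setminus id_{A^*}\right)$ by mutual inclusion, leveraging the covering ${\cal F}_k=\bigcup_{\omega\in\Omega_k}S_\omega$ from Lemma~\ref{automaton-A-omega}(ii) together with Claim~\ref{claim20}. First I would establish the inclusion from left to right. Let $(w,w')\in\uline{{\cal F}_k}$; by definition this means $(w,w')\in{\cal F}_k$ and $w\ne w'$, i.e. $(w,w')\notin id_{A^*}$. By Lemma~\ref{automaton-A-omega}(ii), some $\omega\in\Omega_k$ exists with $(w,w')\in S_\omega$. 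I claim this $\omega$ must lie in $\Omega'_k$: indeed, if $\omega=(u,u',v,v')$ had $u=u'$ and $v=v'$, then by Claim~\ref{claim20} we would have $S_\omega\subseteq id_{A^*}$, forcing $w=w'$, a contradiction. Hence $\omega\in\Omega'_k$, and since $(w,w')\notin id_{A^*}$ we conclude $(w,w')\in S_\omega\setminus id_{A^*}\subseteq\bigcup_{\omega\in\Omega'_k}\left(S_\omega\setminus id_{A^*}\right)$.

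For the reverse inclusion, let $(w,w')\in\bigcup_{\omega\in\Omega'_k}\left(S_\omega\setminus id_{A^*}\right)$, so there is some $\omega\in\Omega'_k$ with $(w,w')\in S_\omega$ and $(w,w')\notin id_{A^*}$, i.e. $w\ne w'$. Since $\Omega'_k\subseteq\Omega_k$, we have $\omega\in\Omega_k$, whence $(w,w')\in S_\omega\subseteq\bigcup_{\omega\in\Omega_k}S_\omega={\cal F}_k$ by Lemma~\ref{automaton-A-omega}(ii) again. Combining $(w,w')\in{\cal F}_k$ with $w\ne w'$ gives exactly $(w,w')\in\uline{{\cal F}_k}$ by the definition of the anti-reflexive restriction. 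This closes the second inclusion and establishes the equality.

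I expect the argument to be essentially routine, as it is a bookkeeping exercise that threads together the covering lemma, the definition of $\Omega'_k$, and Claim~\ref{claim20}. The only point requiring genuine care is the forward direction: one must verify that the particular $\omega$ witnessing membership in ${\cal F}_k$ can be taken in $\Omega'_k$ rather than merely in $\Omega_k$. This is not automatic from the mere fact that $(w,w')\in S_\omega$, since a given pair might lie in several $S_\omega$'s (the covering is redundant, as noted in the discussion following Lemma~\ref{automaton-A-omega}); the key observation is that the contrapositive of Claim~\ref{claim20} rules out any $\omega\in\Omega_k\setminus\Omega'_k$ whenever $w\ne w'$, so \emph{every} witnessing tuple with $w\neq w'$ automatically satisfies $u\ne u'$ or $v\ne v'$ and hence belongs to $\Omega'_k$. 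Once this is seen, no delicate combinatorics on words is involved, and the proof reduces to the two short inclusions above.
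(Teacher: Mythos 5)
Your proof is correct and takes essentially the same route as the paper: both rest on the covering ${\cal F}_k=\bigcup_{\omega\in\Omega_k}S_\omega$ from Lemma \ref{automaton-A-omega}(ii) together with Claim \ref{claim20}, which kills every $S_\omega$ with $\omega\in\Omega_k\setminus\Omega'_k$ once the identity pairs are removed. The paper merely phrases this as a set-algebra computation (intersecting the covering with $\overline{id_{A^*}}$ and noting that the terms indexed by tuples $(u,u,v,v)$ vanish), whereas you spell it out element-wise by double inclusion; the content is identical.
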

\begin{proof}
According to Lemma \ref{automaton-A-omega}, we have $\uline{{\cal F}_k}=\overline{id_{A^*}}\cap\bigcup_{\omega\in\Omega_k}S_\omega$, whence
$\uline{{\cal F}_k}$ is the union of the sets $\bigcup_{\omega\in\Omega'_k} \left(\overline{id_{A^*}}\cap S_\omega\right)=\bigcup_{\omega\in\Omega'_k} \left(S_\omega\setminus id_{A^*}\right)$ and $\bigcup_{(u,u,v,v)\in\Omega_k}\left(S_{(u,u,v,v)}\setminus id_{A^*}\right)$.
Moreover, in view of  Claim   \ref{claim20} the identity $\bigcup_{(u,u,v,v)\in\Omega_k}\left(S_{(u,u,v,v)}\setminus id_{A^*}\right)=\emptyset$ holds.
\end{proof}
\noindent
\subsection{On the combinatorial structure of  $S_\omega\cap id_{A^*}$}
The following result brings a noticeable combinatorial characterization of those tuples $(g,u,u',v,v')$ satisfying the equation $ugv=ugv'$:
\begin{lem}
\label{equivalence w=w'}
Let $\omega=(u,u',v,v')\in\Omega'_k$ and  $g\in A^*$ st.  $(ugv,u'gv')\in S_\omega$. 
We have $ugv=u'gv'$ iff.  $\alpha\in A^*$, $\beta\in A^+$, and $n\in{\mathbb N}$ exist st.
exactly one of the following conds. holds:

{\rm (i)}   $u=u'\alpha\beta$,  $v'=\beta\alpha v$, and $g\in (\alpha\beta)^n\alpha$.

{\rm (ii)}   $u'=u\alpha\beta$,  $v=\beta\alpha v'$, and $g\in  (\alpha\beta)^n\alpha$.
\end{lem}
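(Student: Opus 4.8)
The plan is to reduce the word equation $ugv = u'gv'$ to a conjugacy equation of the form treated in Proposition \ref{conjugacy}, and to read off the two conditions (i), (ii) from the two possible orderings of $|u|$ and $|u'|$. I would dispose of the converse direction first, since it is a direct substitution: assuming condition (i) with $g = (\alpha\beta)^n\alpha$, one computes $ugv = u'\alpha\beta(\alpha\beta)^n\alpha v = u'(\alpha\beta)^{n+1}\alpha v$ and $u'gv' = u'(\alpha\beta)^n\alpha\beta\alpha v = u'(\alpha\beta)^{n+1}\alpha v$, so the two words coincide. Condition (ii) is handled by the same computation after exchanging the roles of $(u,v)$ and $(u',v')$.

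For the forward implication I would start from $ugv = u'gv'$ and first record that $u\neq u'$: indeed $u=u'$ forces $gv = gv'$, hence $v=v'$, and then $\omega=(u,u,v,v)\notin\Omega'_k$, contradicting $\omega\in\Omega'_k$. Comparing the prefixes of length $\min(|u|,|u'|)$ of the two equal words shows moreover that $|u|\neq|u'|$, since equal lengths would again give $u=u'$. Using the symmetry of the equation under exchanging $(u,v)$ with $(u',v')$ (which swaps (i) and (ii)), I may assume $|u|>|u'|$ and aim for condition (i), the case $|u'|>|u|$ yielding (ii) verbatim. Here $u'$ is a proper prefix of $u$, so I write $u=u'w$ with $w\in A^+$ and cancel $u'$ on the left, obtaining $wgv = gv'$.

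The decisive step is a length count that exposes the conjugacy. From $wgv = gv'$ one gets $|v'| = |w|+|v| > |v|$, so $v$ is a suffix of $v'$; writing $v'=sv$ with $|s|=|w|$ and cancelling $v$ on the right leaves the equation $wg = gs$ with $w,s\in A^+$. Proposition \ref{conjugacy}, applied with $t=g$, then furnishes $\alpha\in A^*$, $\beta\in A^+$ and $n\in\mathbb{N}$ such that $w=\alpha\beta$, $s=\beta\alpha$ and $g\in(\alpha\beta)^n\alpha$. Translating back through $u=u'w$ and $v'=sv$ gives $u=u'\alpha\beta$ and $v'=\beta\alpha v$, which is precisely condition (i).

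Finally, that \emph{exactly} one of (i), (ii) can occur follows from the length constraints: condition (i) forces $|u|=|u'|+|\alpha\beta|>|u'|$ because $\beta\in A^+$, whereas condition (ii) forces $|u'|>|u|$, so the two are mutually exclusive. I expect the only delicate point to be the bookkeeping in the length count and the two cancellations that isolate the pure conjugacy equation $wg=gs$; once that equation is reached, Proposition \ref{conjugacy} supplies the required decomposition with no further work.
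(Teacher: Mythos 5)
Your proof is correct and follows essentially the same route as the paper's: both establish that one of $u,u'$ must be a proper prefix of the other (using $\omega\in\Omega'_k$ and the length balance), cancel on both sides to isolate a conjugacy equation ($wg=gs$ in your notation, $tg=gt'$ in the paper's), and then invoke Proposition \ref{conjugacy} to produce $\alpha$, $\beta$, $n$ and read off condition (i) or (ii). Your explicit length count justifying $v'=sv$ and your explicit mutual-exclusivity argument are only slightly more detailed versions of steps the paper treats implicitly.
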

\begin{proof}

-- With the cond. (i) we have $ugv=(u'\alpha\beta)(\alpha\beta)^n\alpha v=u'(\alpha\beta)^n\alpha(\beta\alpha v)=u'(\alpha\beta)^n\alpha v'=u'gv'$.
Similarly, the cond. (ii) implies  $u'gv'=(u\alpha\beta)(\alpha\beta)^n\alpha v'=u(\alpha\beta)^n\alpha (\beta\alpha v')=ugv$. 

\smallskip
\noindent\hspace*{2mm}--  Conversely, assume that the equation $ugv=u'gv'$ holds.
It follows from  $|u|+|v|=|u'|+|v'|$ that the conds. $u=u'$ and  $v=v'$ are equivalent. 
Furthermore, $(u,u',v,v')\in\Omega'_k$ implies  $u\ne u'$ and $v\ne v'$. 
Consequently, either $u'$ is a proper prefix of $u$  (and $v$ is  a proper suffix of $v'$), or $u$ is a proper prefix of $u'$  (and  $v'$ is a proper suffix of $v$).

\smallskip
\hspace* {2mm} (a)  In the case where  $u'$ is a proper prefix of $u$, let $t,t'\in A^+$ st.  $u=u't$ and $v'=t'v$.
The equation $ugv=u'gv'$ implies $u'tgv=u'gt'v$, thus $tg=gt'$.
According to Prop. \ref{conjugacy}, $t$, $t'$ are conjugate words, furthermore
 $\alpha\in A^*$, $\beta\in A^+$, and $n\in{\mathbb N}$ exist st.  $t=\alpha\beta$, $t'=\beta\alpha$, and $g\in (\alpha\beta)^n\alpha$.
moreover we have  $u=u't=u'\alpha\beta$, $v'=t'v=\beta\alpha v$:
we are in the cond.  (i). 

\smallskip
\hspace* {2mm} (b) In the case where  $u$ is a proper prefix of $u'$ and $v'$ is  a proper suffix of $v$, we note that,
by construction, the conds. $\omega=(u,u',v,v')\in\Omega'_k$ and $\omega'=(u',u,v',v)\in\Omega'_k$ are equivalent. 
 By substituting $(u',u,v',v)$ to $(u,u',v,v')$ in the arguments we applied in  the preceding case (a),
we obtain  the cond. (ii) of our lemma.
\end{proof}
\noindent
Let $\omega=(u,u',v,v')\in \Omega'_k$. Note that Lemma \ref{equivalence w=w'} does not guarantee the unicity of the pair of words $(\alpha,\beta)$. Accordingly, we introduce the following sets:

\smallskip
--   $R^{\rm (i)}_\omega$ is the union of the sets $R^{\rm (i)}_{\omega,\alpha,\beta}=\{u(\alpha\beta)^n\alpha v: n\ge 0\}$, for all  word pairs $\alpha\in A^*$, $\beta\in A^+$ st.  $u=u'\alpha\beta$ and $v'=\beta\alpha v$.

--  $R^{\rm (ii)}_\omega$ is  the union of the sets $R^{\rm (ii)}_{\omega,\alpha,\beta}=\{u'(\alpha\beta)^n\alpha v': n\ge 0\}$, for all word pairs $\alpha\in A^*$, $\beta\in A^+$ st.  $u'=u\alpha\beta$ and $v=\beta\alpha v'$.

-- $R_\omega=R^{\rm (i)}_\omega\cup R^{\rm (ii)}_\omega$.

\smallskip\noindent As indicated above, either $u'$ is a proper prefix of $u$ or $u$ is a proper prefix of $u'$. Accordingly at most one of the sets $R^{\rm (i)}_\omega$,  $R^{\rm (ii)}_\omega$ may be non-empty. 
\begin{example}
{\rm
Let $A=\{a,b\}$, $k=10$.

\noindent\hspace*{2mm} 
-- Firstly, consider the tuple $\omega=(aba,a,b,b^3)\in\Omega'_k$. 
Since $u'$ is a proper prefix of $u$, only the cond. (i) of Lemma \ref{equivalence w=w'} may hold, whence we have   $R^{\rm (ii)}_\omega=\emptyset$.
With the preceding notation, $u=u'\alpha\beta$ implies $\alpha\beta=ba$; similarly $v'=\beta\alpha v$ implies $ \beta\alpha=b^2$.
It is straightforward to verify that no pair of words $(\alpha,\beta)$ may satisfies such constraints, hence we have  $R^{\rm (i)}_\omega=\emptyset$
and   $R_\omega=\emptyset$.
 
\noindent\hspace*{2mm} 
-- Now, consider the tuple $\omega=(aba,a,b^2,ab^3)\in\Omega'_k$. Once more $u'$ is a proper prefix of $u$, thus  $R^{\rm (ii)}_\omega=\emptyset$ holds.
Regarding  the set $R^{\rm (i)}_\omega$ we have  $\alpha\beta=ba$,  $\beta\alpha=ab$.
The set of the pairs  $(\alpha, \beta)$ st. $\alpha\beta=ab$ is $\{(\varepsilon,ab), (a,b)\}$ (recall that we set $\beta\ne\varepsilon$).
Similarly, the equation  $\beta\alpha=ba$ is satisfied by  the pairs  $(\alpha, \beta)\in \{(\varepsilon,ba), (a,b)\}$.
Consequently only the pair $(a,b)$ may satisfy both the preceding constraints.  
 We verify that the equations $ugv= aba\cdot (ba)^n b \cdot b^2= a\cdot (ba)^nb\cdot ab^3=u'gv'$ hold.
This corresponds to 
$R_\omega=R_{\omega,a,b}^{\rm (i)}=\{(ab)^{n+2}b^2: n\ge 0\}$.
}
\end{example}

\smallbreak\noindent
 As direct consequences of Lemma  \ref{equivalence w=w'} the  two following properties holds:
\begin{coro}
\label{equivalence w=w'-R}
Given $\omega\in\Omega'_k$ and $(w,w')\in S_\omega$,
the conds.  $w=w'$ and $w\in R_\omega$ are equivalent. 
\end{coro}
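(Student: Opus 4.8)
The plan is to read this off directly from Lemma \ref{equivalence w=w'} together with the definitions of $S_\omega$ and $R_\omega$; no new combinatorial input is needed, only careful bookkeeping of the witnessing word $g$. Throughout I fix $\omega=(u,u',v,v')\in\Omega'_k$ and a pair $(w,w')\in S_\omega$, so that by definition of $S_\omega$ there is some $g\in A^*$ with $w=ugv$ and $w'=u'gv'$. The two implications $w=w'\Rightarrow w\in R_\omega$ and $w\in R_\omega\Rightarrow w=w'$ will be treated separately, keeping the families $R^{\rm (i)}_\omega$ and $R^{\rm (ii)}_\omega$ in lockstep with conditions (i) and (ii) of the lemma to avoid confusing the roles of $(u,v)$ and $(u',v')$.

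For the forward implication I would assume $w=w'$, i.e. $ugv=u'gv'$, and invoke Lemma \ref{equivalence w=w'} to get $\alpha\in A^*$, $\beta\in A^+$, $n\in\mathbb{N}$ realizing exactly one of its two conditions, with $g=(\alpha\beta)^n\alpha$ in either case. In case (i) the identities $u=u'\alpha\beta$ and $v'=\beta\alpha v$ say precisely that $(\alpha,\beta)$ is admissible for $R^{\rm (i)}_{\omega,\alpha,\beta}$, and then $w=ugv=u(\alpha\beta)^n\alpha v\in R^{\rm (i)}_{\omega,\alpha,\beta}\subseteq R_\omega$. In case (ii) the symmetric data give $w'=u'gv'=u'(\alpha\beta)^n\alpha v'\in R^{\rm (ii)}_{\omega,\alpha,\beta}\subseteq R_\omega$, and since $w=w'$ this still yields $w\in R_\omega$.

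For the converse I would start from $w\in R_\omega=R^{\rm (i)}_\omega\cup R^{\rm (ii)}_\omega$ and recover the algebraic identities. Say $w\in R^{\rm (i)}_\omega$, so there are $\alpha,\beta,n$ with $u=u'\alpha\beta$, $v'=\beta\alpha v$ and $w=u(\alpha\beta)^n\alpha v$. The one point needing care—and the only place the hypothesis $(w,w')\in S_\omega$ is actually used—is to reconcile this with the factorization $w=ugv$: cancelling the common prefix $u$ and the common suffix $v$ forces $g=(\alpha\beta)^n\alpha$. Substituting into $w'=u'gv'$ and using $v'=\beta\alpha v$ and $u=u'\alpha\beta$ then gives $w'=u'(\alpha\beta)^n\alpha v'=u'(\alpha\beta)^{n+1}\alpha v=u(\alpha\beta)^n\alpha v=w$; the case $w\in R^{\rm (ii)}_\omega$ is identical after exchanging $(u,v)$ with $(u',v')$, exactly as in the proof of Lemma \ref{equivalence w=w'}.

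Since the whole argument is a translation of an already-proved lemma, there is no genuine obstacle. The only subtlety to watch is that the parameter $g$ read off from a membership $w\in R_\omega$ must coincide with the $g$ witnessing $(w,w')\in S_\omega$; this is exactly what the prefix/suffix cancellation guarantees, and it is the reason the corollary is stated for pairs $(w,w')\in S_\omega$ rather than for $w$ alone.
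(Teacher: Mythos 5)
Your proposal is correct and follows essentially the same route as the paper: the corollary is read off directly from Lemma \ref{equivalence w=w'} together with the definitions of $S_\omega$ and $R_\omega$. The paper's own proof is terser---it does not spell out the cancellation argument identifying the witness $g$ of $(w,w')\in S_\omega$ with $(\alpha\beta)^n\alpha$---so your extra bookkeeping merely makes explicit a step the paper leaves implicit.
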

\begin{proof}
By construction   $g\in A^*$ exists st.  $w=ugv$, $w'=u'gv'$. 
We have $ugv=u'gv'$ iff. 
exactly one of the  conds. (i) or (ii) of  Lemma \ref{equivalence w=w'} is satisfied that is,
  $w\in R^{\rm (i)}_\omega\cup R^{\rm (ii)}_\omega=R_\omega$.
\end{proof}
\begin{coro}
\label{coro2}
Given $\omega\in\Omega'_k$ and $X\subseteq A^*$ we have $ \uline{S_\omega}(X)=S_\omega(X\setminus R_\omega)$.
\end{coro}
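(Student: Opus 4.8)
The plan is to reduce everything to Corollary~\ref{equivalence w=w'-R}, which is the only substantial ingredient; once it is in hand the statement becomes a bookkeeping exercise on the definitions of the image of a set under a relation and of the anti-reflexive restriction. First I would unfold both sides. Recall that $\uline{S_\omega}=S_\omega\setminus id_{A^*}$, so that $w'\in\uline{S_\omega}(X)$ holds iff there is some $w\in X$ with $(w,w')\in S_\omega$ and $w\ne w'$. On the other hand, $w'\in S_\omega(X\setminus R_\omega)$ holds iff there is some $w\in X$ with $w\notin R_\omega$ and $(w,w')\in S_\omega$. Thus the two assertions differ only in how the source word $w$ is constrained: by the inequality $w\ne w'$ on the left, and by the non-membership $w\notin R_\omega$ on the right.

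The bridge between these two constraints is exactly Corollary~\ref{equivalence w=w'-R}: for every pair $(w,w')\in S_\omega$ the conditions $w=w'$ and $w\in R_\omega$ are equivalent, hence so are their negations $w\ne w'$ and $w\notin R_\omega$. For the inclusion $\uline{S_\omega}(X)\subseteq S_\omega(X\setminus R_\omega)$, I would take $w'\in\uline{S_\omega}(X)$, pick a witness $w\in X$ with $(w,w')\in S_\omega$ and $w\ne w'$, and apply the corollary to obtain $w\notin R_\omega$; then $w\in X\setminus R_\omega$ witnesses $w'\in S_\omega(X\setminus R_\omega)$. For the reverse inclusion, I would take $w'\in S_\omega(X\setminus R_\omega)$ with a witness $w\in X\setminus R_\omega$ satisfying $(w,w')\in S_\omega$; since $w\notin R_\omega$, the corollary in its contrapositive form yields $w\ne w'$, so $(w,w')\in\uline{S_\omega}$ with $w\in X$, giving $w'\in\uline{S_\omega}(X)$.

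I do not expect any genuine obstacle here: all the combinatorics---the conjugacy analysis producing the sets $R^{\rm (i)}_\omega$ and $R^{\rm (ii)}_\omega$---has already been absorbed into Corollary~\ref{equivalence w=w'-R}. The only point requiring care is notational: one must remember that $S_\omega$ need not be symmetric and that $\uline{\,\cdot\,}$ restricts the \emph{relation} (by removing the diagonal) rather than the source set, so that the correct translation of the condition ``$w\ne w'$'' into a property of the source word $w$ alone is precisely the equivalence $w\ne w'\Leftrightarrow w\notin R_\omega$ furnished by the corollary. With that translation made explicit, both inclusions are immediate.
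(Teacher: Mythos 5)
Your proposal is correct and follows essentially the same route as the paper's own proof: unfold $w'\in\uline{S_\omega}(X)$ and $w'\in S_\omega(X\setminus R_\omega)$ into their existential forms, then use Corollary~\ref{equivalence w=w'-R} to convert the constraint $w\ne w'$ on the witness into $w\notin R_\omega$ (and conversely). Your version is merely more explicit in spelling out the two inclusions, which the paper compresses into a single chain of equivalences.
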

\begin{proof}
We have $y\in \uline{S_\omega}(X)$  iff.  $x\in X$ exists st.  $(x,y)\in S_\omega$ with  $x\ne y$.  According to Corollary   \ref{equivalence w=w'-R}, $x\ne y$ is equivalent to $x\notin R_\omega$, hence 
$y\in \uline{S_\omega}(X)$ is equivalent to $y\in \uline{S_\omega}(X\setminus R_\omega)$. 
\end{proof}
\noindent
We conclude Sect. \ref{Phi} by proving that ${\cal F}_k$ is a regularity-preserving relation:
\begin{proposit}
\label{Fk(X)-reg}
If $X\subseteq A^*$ is regular, then $\uline{{\cal F}_k}(X)$ is a regular subset of $A^*$.
\end{proposit}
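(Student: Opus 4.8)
The plan is to reduce the statement to a finite union of images of regular sets under regular relations, exploiting the set covering built in Section \ref{Phi}. First I would rewrite the relation $\uline{{\cal F}_k}$ using Lemma \ref{claim21}: we have $\uline{{\cal F}_k}=\bigcup_{\omega\in\Omega'_k}\left(S_\omega\setminus id_{A^*}\right)=\bigcup_{\omega\in\Omega'_k}\uline{S_\omega}$, where $\Omega'_k$ is a finite index set. Applying this to the argument $X$ and distributing the image over the union yields $\uline{{\cal F}_k}(X)=\bigcup_{\omega\in\Omega'_k}\uline{S_\omega}(X)$, and by Corollary \ref{coro2} each term rewrites as $\uline{S_\omega}(X)=S_\omega(X\setminus R_\omega)$. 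Since a finite union of regular subsets of $A^*$ is again regular (Prop. \ref{properties-reg-A}), it suffices to prove that each $S_\omega(X\setminus R_\omega)$ is regular.

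Next I would transfer regularity through the relation $S_\omega$. By Lemma \ref{automaton-A-omega}(i), each $S_\omega\subseteq A^*\times A^*$ is a regular relation; hence, by Prop. \ref{property-rec1}, $S_\omega(X\setminus R_\omega)$ is regular as soon as $X\setminus R_\omega$ is a regular subset of $A^*$. As $X$ is regular by hypothesis and the family of regular subsets of $A^*$ is closed under boolean operations (Prop. \ref{properties-reg-A}), this reduces the whole problem to showing that $R_\omega$ itself is regular.

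The crux is therefore the regularity of $R_\omega=R^{\rm (i)}_\omega\cup R^{\rm (ii)}_\omega$. Fix $\omega=(u,u',v,v')\in\Omega'_k$ and consider $R^{\rm (i)}_\omega$, the union of the sets $R^{\rm (i)}_{\omega,\alpha,\beta}=\{u(\alpha\beta)^n\alpha v:n\ge 0\}$ over all $\alpha\in A^*$, $\beta\in A^+$ with $u=u'\alpha\beta$ and $v'=\beta\alpha v$. For each admissible pair, $R^{\rm (i)}_{\omega,\alpha,\beta}=u(\alpha\beta)^*\alpha v$ is regular, being obtained from singletons by concatenation and Kleene star. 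The key observation, and the only point requiring care, is that this union is finite: the constraint $u=u'\alpha\beta$ forces $\alpha\beta$ to be the single fixed word $u'^{-1}u$ (admissible pairs existing only when $u'$ is a proper prefix of $u$), and a fixed word admits only finitely many factorizations $\alpha\beta$ with $\beta\ne\varepsilon$. Hence $R^{\rm (i)}_\omega$ is a finite union of regular sets, and the symmetric argument over the constraints $u'=u\alpha\beta$, $v=\beta\alpha v'$ handles $R^{\rm (ii)}_\omega$, so $R_\omega$ is regular.

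Combining the three steps, $X\setminus R_\omega=X\cap\overline{R_\omega}$ is regular, each $S_\omega(X\setminus R_\omega)$ is regular by Prop. \ref{property-rec1}, and their finite union $\uline{{\cal F}_k}(X)$ is regular, as desired. I expect the only genuine obstacle to be the finiteness argument pinning down $\alpha\beta$ in the third paragraph; the remaining steps are a routine assembly of the closure properties of regular sets together with the covering $(S_\omega)_{\omega\in\Omega_k}$ and the companion sets $R_\omega$ already established.
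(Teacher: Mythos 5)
Your proposal is correct and follows essentially the same route as the paper's proof: it assembles Lemma \ref{claim21}, Corollary \ref{coro2}, Lemma \ref{automaton-A-omega}(i), and Prop. \ref{property-rec1}, with the crux being the regularity of $R_\omega$ as a finite union of sets of the form $u(\alpha\beta)^*\alpha v$. The only (harmless) variation is in justifying finiteness of the admissible pairs $(\alpha,\beta)$: you pin down $\alpha\beta=u'^{-1}u$ and count factorizations, whereas the paper uses the length bound $|\alpha|,|\beta|\le k$ — both arguments are valid.
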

\begin{proof}
We proceed through the three following steps:
 
\smallskip
\noindent\hspace*{2mm} -- Let $\omega=(u,v,u',v')\in\Omega'_k$, and let 
$\alpha$, $\beta$ be words satisfying the cond. of Lemma \ref{equivalence w=w'}. The set  $R^{\rm (i)}_{\omega,\alpha,\beta}=\{u(\alpha\beta)^n\alpha v: n\ge 0\}$ is regular:
indeed, classically it is the behavior of the finite $A^*$-automaton represented in Figure \ref{automaton-alpha-beta}.
\begin{figure}[H]
\begin{center}
\includegraphics[width=7.5cm,height=1.5cm]{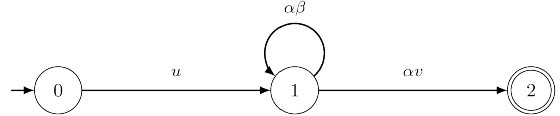}
\end{center}
\caption[]{Proof of Prop. \ref{Fk(X)-reg}. A finite $A^*$-automaton with behavior $\{u(\alpha\beta)^n\alpha v: n\ge 0\}$.} 
\label{automaton-alpha-beta}
\end{figure}
\noindent The set $R^{\rm (i)}_\omega$ is the union of the sets $R^{\rm (i)}_{\omega,\alpha,\beta}$, for all the  pairs of words $(\alpha,\beta)$ satisfying the cond. of Lemma \ref{equivalence w=w'}. 
Since in any case we have $|\alpha|,|\beta|\le \max\{|u|,|u'|,|v|,|v'|\}\le k$, there are only  a finite numbers of such pairs $(\alpha,\beta)$, hence according to Prop. \ref{properties-reg-A}, $R^{\rm (i)}_\omega$ is a regular set.
 Similarly, $R^{\rm (ii)}_\omega$ is a regular, hence $R_\omega=R^{\rm (i)}_\omega\cup R^{\rm (ii)}$ itself is regular. 

\smallbreak
\noindent
\hspace*{2mm}-- Let $X\subseteq A^*$ a regular set. 
According to Corollary  \ref{coro2}, for every $\omega\in\Omega'_k$ we have $\uline{S_\omega}\left(X\right)=S_\omega\left(X\cap\overline{ R_\omega}\right)$.
According to Prop. \ref{properties-reg-A}, since $R_\omega$ is a regular subset of $A^*$, the same holds for the set $\overline{R_\omega}$.
Since the relation $S_\omega\subseteq A^*\times A^*$ is regular (see Lemma \ref{automaton-A-omega}),
according to Prop. \ref{property-rec1}, the set $\uline{S_\omega}(X)= S_\omega\left(X\cap\overline{R_\omega}\right)$ itself is regular.

\smallbreak\noindent
\hspace*{2mm}-- According to   
Lemma \ref{claim21} we have 
$\uline{{\cal F}_k}\left(X\right)=\bigcup_{\omega\in\Omega'_k}\uline{S_\omega}\left(X\right)$.
As established above, for each $\omega\in\Omega'_k$ the set $\uline{S_\omega}\left(X\right)$ is regular.
Since $\Omega'_k$ is  a finite set, 
$\uline{{\cal F}_k}\left(X\right)$ itself is regular.
\end{proof}
\noindent
\section{Error detection conds. wrt.  ${\cal F}_k$}
\label{Decidability-Fk}
We  now have enough properties to establish decidability results.
First and foremost,  according to Corollary \ref{Fi2}, and since ${\cal F}_k$ is a symmetric relation,  Lemma \ref{equiv-c2} translates as follows:
\begin{lem}
\label{equiv-c2-F}
Given a set $X\subseteq A^*$ and $k\ge 1$ the following conds. are equivalent:

{\rm (i)} $X$ satisfies Cond. \ref{2} wrt.  ${\cal F}_k$.

{\rm (ii)}  For every $x\in X$ we have ${\cal F}_{2k}\left(x\right)\cap X=\{x\}$.

{\rm (iii)} $X$ satisfies Cond. \ref{1} wrt.  ${\cal F}_{2k}$.
\end{lem}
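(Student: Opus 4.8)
The plan is to obtain this lemma as a direct specialization of Lemma \ref{equiv-c2}, exactly in the spirit of the argument already used for the prefix metric in Lemma \ref{equiv-c2-P}. The crucial preliminary step is to show that the composition relation ${\cal F}_k\cdot{\cal F}_k^{-1}$ collapses to the single relation ${\cal F}_{2k}$. First I would recall that, since $d_{\rm F}$ is a genuine metric, the relation ${\cal F}_k$ is symmetric, so that ${\cal F}_k^{-1}={\cal F}_k$ and hence ${\cal F}_k\cdot{\cal F}_k^{-1}={\cal F}_k^2$. Invoking Corollary \ref{Fi2} with $n=2$ then gives ${\cal F}_k^2={\cal F}_{2k}$, whence ${\cal F}_k\cdot{\cal F}_k^{-1}={\cal F}_{2k}$.

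With this identity established, I would apply Lemma \ref{equiv-c2} taking $\tau_{d,k}$ to be ${\cal F}_k=\tau_{d_{\rm F},k}$, and check that its three conditions match the present ones term by term. Condition (i) of Lemma \ref{equiv-c2} is literally Cond. \ref{2} wrt. ${\cal F}_k$, i.e. condition (i) here. Condition (ii) of Lemma \ref{equiv-c2} reads ${\cal F}_k^{-1}\bigl({\cal F}_k(x)\bigr)\cap X=\{x\}$ for every $x\in X$; using the composition convention $\tau\cdot\tau'(x)=\tau'(\tau(x))$ of Sect. \ref{prelim}, the inner expression equals $({\cal F}_k\cdot{\cal F}_k^{-1})(x)={\cal F}_{2k}(x)$, so this is precisely condition (ii) here. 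Finally, condition (iii) of Lemma \ref{equiv-c2} asserts that $X$ satisfies Cond. \ref{1} wrt. ${\cal F}_k\cdot{\cal F}_k^{-1}$, which by the identity above is Cond. \ref{1} wrt. ${\cal F}_{2k}$, i.e. condition (iii) here.

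Since the three conditions of the present lemma thus coincide with those of Lemma \ref{equiv-c2} under this substitution, their equivalence follows at once. I do not expect any substantive obstacle here: all the combinatorial content is already carried by Lemma \ref{equiv-c2} together with the composition law of Corollary \ref{Fi2}. The only points demanding a little care are bookkeeping ones, namely verifying that $\tau_{d,k}^{-1}\bigl(\tau_{d,k}(x)\bigr)$ really equals $({\cal F}_k\cdot{\cal F}_k^{-1})(x)$ under the paper's composition convention, and noting that it is the symmetry of ${\cal F}_k$ (which rests on $d_{\rm F}$ being a metric, not merely a quasi-metric) that permits the reduction to the single index $2k$. The proof is therefore the same brief substitution argument as for ${\cal P}_k$ in Lemma \ref{equiv-c2-P}.
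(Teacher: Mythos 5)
Your proposal is correct and follows exactly the paper's route: the paper derives Lemma \ref{equiv-c2-F} precisely by combining the symmetry of ${\cal F}_k$ (so that ${\cal F}_k\cdot{\cal F}_k^{-1}={\cal F}_k^2$) with Corollary \ref{Fi2} for $n=2$, and then substituting ${\cal F}_k$ for $\tau_{d,k}$ in Lemma \ref{equiv-c2}, just as was done for ${\cal P}_k$ in Lemma \ref{equiv-c2-P}. Your bookkeeping under the composition convention $\tau\cdot\tau'(x)=\tau'(\tau(x))$ is also accurate, so nothing is missing.
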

\noindent
In view of Prop. \ref{Fk(X)-reg}, 
 the  following decidability result holds:
\begin{proposit}
\label{Phi-regular}
It can be decided whether a given regular set satisfies any of Conds. \ref{1}, \ref{2}, or \ref{4}  wrt.   ${\cal F}_k$.
\end{proposit}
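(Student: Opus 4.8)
The plan is to dispatch the three conditions one at a time, in direct parallel with the treatment of ${\cal P}_k$ in Proposition \ref{decid-Pi1}: in each case I would reduce the question either to testing emptiness of a regular subset of $A^*$ or to testing whether a regular subset of $A^*$ is a code, both of which are decidable by Proposition \ref{decidable-bases}. The essential point is that all the regularity I need has already been secured upstream: Proposition \ref{Fk(X)-reg} guarantees that $\uline{{\cal F}_k}(X)$ is regular whenever $X$ is, and Corollary \ref{Fk-F1bar-regular}(i) gives that the full relation ${\cal F}_k$ is regular.

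For Cond. \ref{1}, I would use that $X$ is error-detecting wrt. ${\cal F}_k$ precisely when $\uline{{\cal F}_k}(X)\cap X=\emptyset$. Since $X$ is regular, Proposition \ref{Fk(X)-reg} makes $\uline{{\cal F}_k}(X)$ regular; by Proposition \ref{properties-reg-A} the intersection $\uline{{\cal F}_k}(X)\cap X$ is then regular, and Proposition \ref{decidable-bases}(i) decides whether it is empty. For Cond. \ref{2}, rather than work with the composition $\tau_{d,k}\cdot\tau_{d,k}^{-1}$ directly, I would invoke Lemma \ref{equiv-c2-F}: the set $X$ satisfies Cond. \ref{2} wrt. ${\cal F}_k$ iff it satisfies Cond. \ref{1} wrt. ${\cal F}_{2k}$. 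The preceding argument, applied verbatim with $2k$ in place of $k$, then settles decidability.

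For Cond. \ref{4}, I would first note that ${\cal F}_k$ is reflexive, so $\widehat{{\cal F}_k}(X)={\cal F}_k(X)$. Here I would rely on the stronger fact that ${\cal F}_k$ itself is a regular relation (Corollary \ref{Fk-F1bar-regular}(i)), so that Proposition \ref{property-rec1} yields regularity of the image ${\cal F}_k(X)$; Proposition \ref{decidable-bases}(ii) then decides whether that regular set is a code.

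The part I expect to carry the real weight is none of the three steps above: the genuine obstacle, namely the regularity of $\uline{{\cal F}_k}(X)$, was already overcome in Proposition \ref{Fk(X)-reg} by way of the covering $(S_\omega)_{\omega\in\Omega_k}$ and the sets $R_\omega$. The subtlety worth flagging is that, unlike the prefix case, I cannot argue that the relation $\uline{{\cal F}_k}$ is itself regular (this remains open for $k\ge 2$); for Cond. \ref{1} it is therefore essential to route through the image-regularity statement of Proposition \ref{Fk(X)-reg} rather than through regularity of the relation, whereas for Cond. \ref{4} it is the relation ${\cal F}_k$, not its anti-reflexive restriction, that is known to be regular, so the simpler Proposition \ref{property-rec1} suffices.
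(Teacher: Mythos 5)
Your proposal is correct and follows essentially the same route as the paper's own proof: Cond.~\ref{1} via Prop.~\ref{Fk(X)-reg} and emptiness testing, Cond.~\ref{2} via the reduction of Lemma~\ref{equiv-c2-F} to Cond.~\ref{1} wrt.\ ${\cal F}_{2k}$, and Cond.~\ref{4} via the regularity of the relation ${\cal F}_k$ (Corollary~\ref{Fk-F1bar-regular}) together with Prop.~\ref{property-rec1} and the code-decidability of Prop.~\ref{decidable-bases}. Your closing remark correctly identifies the same subtlety the paper navigates, namely that the open regularity of $\uline{{\cal F}_k}$ forces Cond.~\ref{1} to go through image regularity rather than relation regularity.
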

\begin{proof}
Let $X\subseteq A^*$ be a regular set. We examine one by one our conditions:

\smallskip
\noindent\hspace*{2mm}
-- {\it Cond. \ref{1}:}
According to Prop. \ref{Fk(X)-reg},  $\uline{{\cal F}_k}(X)$ is a regular subset of $A^*$, therefore 
 $\uline{{\cal F}_k}(X)\cap X$ itself is regular. According to Prop. \ref{decidable-bases}, one can decide whether it is the empty set or not.

\smallskip
\noindent\hspace*{2mm}
-- {\it Cond. \ref{2}:}
Since ${\cal F}_k$ is  a symmetric relation, according to  Lemma \ref{equiv-c2-F}, $X$ satisfies Cond. \ref{2} iff.  it  satisfies Cond. \ref{1} wrt.  ${\cal F}_{2k}$:
as indicated above,  this cond. can be can decided.

\smallskip
\noindent\hspace*{2mm}
-- {\it Cond. \ref{4}:}
According to Corollary \ref{Fk-F1bar-regular}  ${\cal F}_k$ is a regular relation. By Prop. \ref{property-rec1},  the set $\widehat{{\cal F}_k}(X)={\cal F}_k(X)$ is regular.
According to Prop. \ref{properties-reg-A} $X\cup {\cal F}_k(X)$ is a regular set furthermore,
according to Prop. \ref{decidable-bases},  one can decide whether it is  is a code or not.
\end{proof}
\noindent
\smallbreak
\noindent Regarding  Cond. \ref{3}, we start by proving the following result:
\begin{proposit}
\label{Com-Fact--Independent}
Every  regular $\uline{{\cal F}_k}$-independent code  can be embedded into some  complete one.
\end{proposit}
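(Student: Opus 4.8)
The plan is to mimic the structure of the proof of Proposition~\ref{Comp-Pref-Independent}, since the statement is the exact factor-metric analogue. First I would dispose of the trivial case where $X$ is already complete, and otherwise pick a word $z_0\in A^*\setminus {\rm F}(X^*)$ with $|z_0|\ge k$ (padding by a factor if needed, using that $z_0\in{\rm F}(z_0u)$ forces $z_0u\notin{\rm F}(X^*)$). As before I set $z=z_0ab^{|z_0|}$ with $a$ the initial letter of $z_0$ and $b\neq a$, so that by Proposition~\ref{overlapping-free constr} $z$ is overlapping-free, then put $U=A^*\setminus(X^*\cup A^*zA^*)$, $Y=z(Uz)^*$, and $Z=X\cup Y$. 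Theorem~\ref{EhRz} immediately yields that $Z$ is a regular complete code, so the whole content is to verify that $Z$ remains $\uline{{\cal F}_k}$-independent, i.e.\ that $\uline{{\cal F}_k}(X)\cap Y=\uline{{\cal F}_k}(Y)\cap X=\uline{{\cal F}_k}(Y)\cap Y=\emptyset$ (the part $\uline{{\cal F}_k}(X)\cap X=\emptyset$ being the hypothesis, and the cross terms being interchangeable because ${\cal F}_k$, and hence $\uline{{\cal F}_k}$, is symmetric).

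The key technical input is Claim~\ref{claim00}: $(w,w')\in\uline{{\cal F}_k}$ means there is a common factor $f$ with $w=ufv$, $w'=u'fv'$ and $1\le|u|+|v|+|u'|+|v'|\le k$. The first step is $\uline{{\cal F}_k}(X)\cap Y=\emptyset$. Suppose $x\in X$, $y\in Y$, $(x,y)\in\uline{{\cal F}_k}$; then $|y|-|f|\le|u'|+|v'|\le k\le|z_0|$, so the common factor $f$ is long relative to $y$. Since $y\in z(Uz)^*$ begins with $z=z_0ab^{|z_0|}$, a factor of $y$ of length $\ge|y|-k\ge|y|-|z_0|$ must contain $z_0$ as a factor (the deleted prefix plus suffix total at most $|z_0|$, so they cannot jointly erase the occurrence of $z_0$ sitting inside the initial block $z$); as $f\in{\rm F}(x)$ this forces $z_0\in{\rm F}(x)\subseteq{\rm F}(X^*)$, contradicting $z_0\notin{\rm F}(X^*)$. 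This length bookkeeping is the one genuinely new point compared with the prefix case, where one only had to control prefixes; here I must argue at the level of \emph{factors}, so I would make the ``$z_0$ survives inside any long factor of $y$'' step precise. The step $\uline{{\cal F}_k}(Y)\cap X=\emptyset$ then follows from symmetry of $\uline{{\cal F}_k}$ exactly as in case~(b) of Proposition~\ref{Comp-Pref-Independent}.

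The remaining and hardest step is $\uline{{\cal F}_k}(Y)\cap Y=\emptyset$. Take $y,y'\in Y$ with $(y,y')\in\uline{{\cal F}_k}$, so $y=ufv$, $y'=u'fv'$ with $1\le|u|+|v|+|u'|+|v'|\le k$. Writing $y=zu_1z\cdots u_mz$ and $y'=zu'_1z\cdots u'_nz$ with the $u_i,u'_j\in U$, I would use that every $y\in Y$ both begins and ends with $z$, and that $z$ is overlapping-free, together with Lemma~\ref{prefix-Uz} ($Uz$ is a prefix code) to synchronize the two factorizations block by block. Because the total edited length is at most $k\le|z_0|<|z|$, the common factor $f$ spans at least one full interior block boundary $z$, and the overlapping-freeness of $z$ pins down where that $z$ sits in both $y$ and $y'$; propagating this forces $m=n$ and $u_i=u'_i$ for all $i$, hence $y=y'$, contradicting $(y,y')\in\uline{{\cal F}_k}$. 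The main obstacle is precisely this block-synchronization argument: unlike the prefix metric, where editing acts only at the right end and the conjugacy analysis of Lemma~\ref{equivalence w=w'} localizes the discrepancy, here the errors $u,v,u',v'$ may sit anywhere inside $y$ and $y'$, so I expect to need the overlapping-freeness of $z$ at both ends and a careful case split on whether the edited positions fall inside a $U$-block or straddle a boundary $z$. Once the three emptiness statements are in hand, $Z=X\cup Y$ is a complete regular $\uline{{\cal F}_k}$-independent code strictly containing $X$, which is the assertion.
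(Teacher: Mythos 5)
Your overall plan (build $Z=X\cup Y$ via Theorem~\ref{EhRz} and verify the three emptiness conditions) matches the paper's, but your choice of anchor word breaks the argument --- and in fact breaks the construction itself. You reuse $z=z_0ab^{|z_0|}$ from Proposition~\ref{Comp-Pref-Independent}; the paper instead anchors on $z_1=a^{|z|}bz=a^{2|z_0|+1}bz_0ab^{|z_0|}$, and the difference is not cosmetic. Your step (a) rests on the claim that any factor of $y\in Y$ of length at least $|y|-|z_0|$ still contains $z_0$; that is false, because deleting even one letter from the \emph{front} of $y$ can destroy the only occurrence of $z_0$. Concretely, take $k=1$ (or $k=3$) over $A=\{a,b\}$ and $X=\{ababbb\}$: this is a regular, non-complete, $\uline{{\cal F}_k}$-independent code, and $z_0=aab\notin{\rm F}(X^*)$ with $|z_0|=3\ge k$ is a legitimate choice in your proof (no two consecutive $a$'s ever occur in $(ababbb)^*$). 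Your prescription gives $z=aab\cdot a\cdot b^{3}=aababbb\in Y$, and since $ababbb$ is a suffix of $aababbb$ we get $d_{\rm F}(ababbb,aababbb)=6+7-2\cdot 6=1\le k$. Hence $z\in\uline{{\cal F}_k}(X)\cap Y$: your set $Z=X\cup Y$ is not $\uline{{\cal F}_k}$-independent at all, so no amount of extra care in steps (a)--(c) can rescue the proof as written.

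The paper's $z_1$ fixes exactly this: the runs $a^{2|z_0|+1}$ in front and $b^{|z_0|}$ at the back act as buffers, so every prefix (resp.\ suffix) of a word of $Y_1$ of length $\le k\le|z_0|$ is a power of $a$ (resp.\ of $b$); end-deletions of total length $\le k$ therefore only eat into the buffers, and $z_0$ survives inside every long factor, which is what step (a) needs. The same single-letter buffers drive the paper's version of your step (c), by a different mechanism than your block-synchronization plan: if $u$ and $u'$ (resp.\ $v$ and $v'$) were both non-empty, they would be non-empty powers of the same letter, so $af$ (resp.\ $fb$) would be a common factor of $y$ and $y'$ longer than $f$, contradicting maximality; one is thus reduced to $u'=v'=\varepsilon$, and then the overlapping-freeness of $z_1$ (rather than Lemma~\ref{prefix-Uz} and block counting) forces $u=v=\varepsilon$, i.e.\ $y=y'$. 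If you want to keep your write-up, replace $z$ by $z_1$ throughout and rework (a) and (c) along these lines; your reduction of the cross term $\uline{{\cal F}_k}(Y)\cap X=\emptyset$ to (a) by symmetry is fine as it stands.
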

\begin{proof}
Let $X$ be a regular $\uline{{\cal F}_k}$-independent code: wlog. we assume $X$ non-complete.
At first, we will construct a special word $z_1\in A^*\setminus {\rm F}(X^*)$.
Let  $z_0\notin {\rm F}(X^*)$,  with $|z_0|\ge k$, let $a$ be  its initial character, and let $b$ be a character different of $a$.
Consider the word $z=z_0ab^{|z_0|}$ as constructed in the proof of Prop. \ref{Comp-Pref-Independent}  and
set $z_1=a^{|z|}bz=a^{2|z_0|+1}bz_0ab^{|z_0|}$. According to Prop. \ref{overlapping-free constr}, $z_1$ is overlapping-free, therefore as indicated in the preliminaries,
$z^Rba^{|z|}$, the  reversal of $z_1$, remains  overlapping-free.
Now, we set  $U_1=A^*\setminus \left(X^*\cup A^*z_1A^*\right)$, $Y_1=z_1\left(U_1z_1\right)^*$, and $Z_1=X\cup Y_1$.
According to  Theorem \ref{EhRz}, the set  $Z_1$ is  a regular complete  code. 
In what follows we prove that $Z_1$   is $\uline{{\rm \cal F}_k}$-independent that is, $\uline{{\rm \cal F}_k}(X\cup Y_1)\cap (X\cup Y_1)=\emptyset$.
Since $X$ itself  is $\uline{{\cal F}_k}$-independent, this amounts to prove that each of the three equations
$\uline{{\rm \cal F}_k}(X)\cap Y_1=\emptyset$, $\uline{{\rm \cal F}_k}(Y_1)\cap X=\emptyset$, and $\uline{{\rm \cal F}_k}(Y_1)\cap Y_1=\emptyset$ holds.
\smallskip
\noindent\hspace*{4mm}(a) For proving that $\uline{{\cal F}_k}(X)\cap Y_1=\emptyset$ holds, we argue by contradiction:
let  $x\in X$, $y\in Y_1$ st.   $(x,y)\in\uline{{\rm \cal F}_k}$. 
Let $f$ be a word with maximum length in ${\rm F}(x)\cap {\rm F}(y)$ and let $u,v\in A^*$ st.  $y=ufv$.
According to Claim \ref{claim00}, we have $|u|+|v|\le d_{\rm F}(x,y)\le k$.
By construction  the word $a^{|z_0|}$ (resp., $b^{|z_0|}$)  is a prefix (resp.,  suffix) of $y\in Y_1$,
 therefore at least one of the words $u$ and $a^{|z_0|}$ (resp., $v$ and $b^{|z_0|}$)  is a prefix (resp., suffix) of the other one. 
More precisely, it follows from $|u|\le k\le |z_0|$ and $|v|\le k\le |z_0|$  that we have $u\in {\rm P}(a^{|z_0|})$ and  $v\in{\rm S}(b^{|z_0|})$. 
On the other hand, by construction, $y\in Y_1$ implies either $y=z_1$, or  $y=z_1wz_1$ for some $w\in A^+$.
With the first condition, the equation $f=u^{-1}z_1v^{-1}=a^{2|z_0|+1-|u|}bz_0b^{|z_0|-|v|}$ holds.
With the second cond. we have 
$f=u^{-1}wv^{-1}=a^{2|z_0|+1-|u|}wbz_0b^{|z_0|-|v|}$.
In each case 
 $z_0$ is a factor of $f\in{\rm F}(x)$:
a contradiction with $z_0\notin {\rm F}(X^*)$. Consequently we have $\uline{{\cal F}_k}(X)\cap Y_1=\emptyset$.
\smallskip
\noindent\hspace*{4mm}(b)  By contradiction we assume $\uline{{\rm \cal F}_k}(Y_1)\cap X\ne\emptyset$. Let $y\in Y_1$ and $x\in X$ st.  $(y,x)\in \uline{{\rm \cal F}_k}$. 
Since  the relations ${\cal F}_k$ and $\overline{id_{A^*}}$ are symmetrical,  $\uline{{\cal F}_k}={\cal F}_k\cap \overline{id_{A^*}}$  itself is symmetrical, 
therefore, we have  $(x,y)\in  \uline{{\rm \cal F}_k}$, thus $\uline{{\rm \cal F}_k}(X)\cap Y\ne\emptyset$: this contradicts the conclusion of Case  (a).
\smallskip
\noindent\hspace*{4mm}(c) It remains to prove that $\uline{{\rm \cal F}_k}(Y_1)\cap Y_1=\emptyset$. 
By contradiction, we assume that a pair of different words $y,y'\in Y_1$ exist st.  $(y,y')\in {\cal F}_k$. Let $f$ be  a word with maximum length in ${\rm F}(y)\cap {\rm F}(y')$. 
Once more according to Claim \ref{claim00}, words  $u,u',v,v'$ exist st.  $y=ufv$, $y'=u'fv'$, with $|u|+|u'|+|v|+|v'|=d_{\rm F}(w,w')\le k$. 

\smallskip
\noindent\hspace*{8mm}(c1) At first, we  compare the words $v$, $v'$ with $\varepsilon$.
Firstly, by contradiction assume that both the conds. $v\ne\varepsilon$, $v'\ne\varepsilon$ hold. Necessarily we have  $2\le |v|+|v'|\le k\le |z_0|$.
By construction $v,v'\in {\rm S}(Y_1)$ holds: this implies  $v,v'\in {\rm S}(b^{|z_0|})$, whence
$i,j\in [1,|z_0|]$ exist st.  $v=b^i$, $v'=b^j$ (see Fig. \ref{FigureFactorComp-b1}).
We obtain $fb\in {\rm F}(y)\cap {\rm F}(y')$, a  contradiction with $|f|$ being maximum.
Consequently at least one of the conds. $v=\varepsilon$ or $v'=\varepsilon$ holds: wlog. we assume $v'=\varepsilon$,
thus $f\in{\rm  S}(y')$. 
On the one hand, according to the maximality of $|f|$, it follows from $z_1\in {\rm F}(y)\cap {\rm F}(y')$ that $|f|\ge |z_1|$.
Since we have $f,z_1\in {\rm S}(y')$ we obtain $f\in A^*z_1$, that is, $fv\in A^*z_1v$: in particular we have $|fv|\ge |z_1v|\ge |z_1|$. 
On the other hand, by construction both the words $z_1$ and $fv$ are suffixes of $y$: it follows from $|fv|\ge |z_1|$ that $fv\in A^*z_1$. 
Accordingly both the words  $z_1v$ and $z_1$ are suffixes of $f$: we obtain $z_1v \in A^*z_1$.
Since $z_1$ is an overlapping-free word, by definition the cond. $|v|=|v|+|v'|\le |z_0|\le |z_1|-1$ implies $v=\varepsilon=v'$.
\begin{figure}[H]
\begin{center}
\includegraphics[width=9.5cm,height=4.5cm]{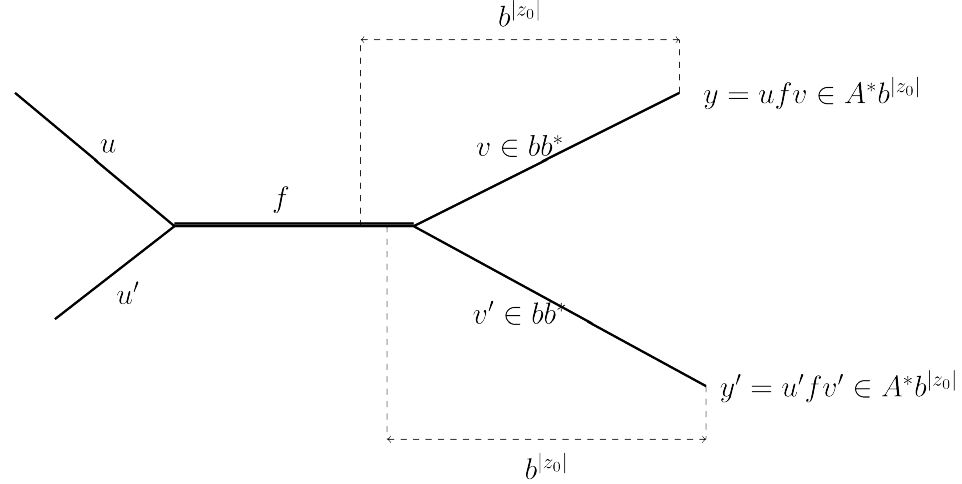}
\end{center}
\caption[]{Proof of Prop. \ref{Com-Fact--Independent}. Case (c1): we have $(y,y')\in \uline{{\cal F}_k}$, with  $v\ne\varepsilon$ and $y'=v'\ne\varepsilon$.} 
\label{FigureFactorComp-b1}
\end{figure}
\smallskip
\noindent\hspace*{8mm}(c2) Now, we compare $u$, $u'$ with  $\varepsilon$. In order to do so we apply to the prefixes of $y$ and $y'$ arguments similar to the preceding ones.
Firstly, by contradiction, we assume $u\ne\varepsilon$ and $u'\ne \varepsilon$. It follows from  $2\le |u|+|u'|\le k\le |z_0|$ and $u,u'\in {\rm P}(Y_1)$
that $u,u'\in {\rm P}(a^{|z_0|})\setminus\varepsilon$.
Accordingly  $i,j\in [1,|z_0|]$ exist st.  $u=a^i$, $u'=a^j$, thus $af\in {\rm F}(y)\cap {\rm F}(y')$: a  contradiction with $f$ being  of maximal length in ${\rm F}(y)\cap {\rm F}(y')$.
Consequently, at least one of the two words $u$, $u'$ is the empty word: wlog. we assume $u'=\varepsilon$, thus $y'=fv'=f$.
As observed in the case (c1), according to the maximality of $|f|$ we have  $|f|\ge |z_1|$.
Since both the words $f$ and $z_1$ are prefixes of $y'$, this implies $f\in z_1A^*$, thus $uf\in uz_1A^*$.
In addition, by construction both the words $z_1$ and $uf$ are prefixes of $y$: this implies $uf\in uz_1A^*\cap z_1A^*$ (see Fig. \ref{FigureFactorComp-b2}) that is, $uz_1\in {\rm P}(F)$ and  $z_1\in {\rm P}(F)$, thus $uz_1\in z_1A^*$.
Since $z_1$ is  overlapping-free, we obtain $u=\varepsilon=u'=v=v'$, thus $y=y'$: a contradiction with $(y,y')\in \uline{{\cal F}_k}$.  
As a consequence, we have $\uline{{\rm \cal F}_k}(Y_1)\cap Y_1=\emptyset$: this completes the proof.
\end{proof}
\begin{figure}[H]
\begin{center}
\includegraphics[width=7cm,height=3.cm]{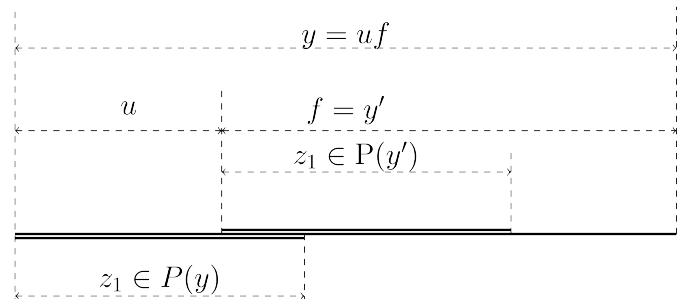}
\end{center}
\caption[]{Proof of Prop. \ref{Com-Fact--Independent}: Case (c2) $(y,y')\in \uline{{\cal F}_k}$, with $v=v'=\varepsilon$, and $u'=\varepsilon$.} 
\label{FigureFactorComp-b2}
\end{figure}
\noindent According to Prop. \ref{Com-Fact--Independent}, by merely substituting ${\rm\cal F}_k$ to ${\rm\cal P}_k$ in the proofs of  the statements  \ref{classic11-Pi-k}--\ref{c3_Pi},  we obtain the following result:
\begin{proposit}
\label{classic-22-Phi-k}
Given a  regular code $X$, each of the  following properties holds:

{\rm (i)} The three following conds. are equivalent:

\hspace*{4mm}-- $X$ is maximal in the family of  $\uline{{\rm \cal F}_k}$-independent codes.

\hspace*{4mm}-- $X$ is  complete.

\hspace*{4mm}-- $\mu(X)=1$.

{\rm (ii)} One  can decide whether $X$ satisfies Cond. \ref{3} wrt.  ${\rm \cal F}_k$.

{\rm (iii)} If $X$ is $\uline{{\rm \cal F}_k}$-independent, it can be embedded into a maximal $\uline{{\cal F}_k}$-independent code.
\end{proposit}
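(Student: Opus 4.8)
The plan is to transport, essentially verbatim, the three prefix-metric results \ref{classic11-Pi-k}--\ref{c3_Pi} to the factor metric, the only structural input being the embedding Proposition \ref{Com-Fact--Independent} in place of Proposition \ref{Comp-Pref-Independent}. Throughout I assume $X$ to be $\uline{{\cal F}_k}$-independent whenever membership in the family of $\uline{{\cal F}_k}$-independent codes is at stake; this is forced by the very wording of the first item of (i) and of (iii).

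For item (i), I would first invoke Theorem \ref{classic}: for a regular code, the three conditions ``complete'', ``maximal code'', and ``$\mu(X)=1$'' coincide. It then suffices to splice ``maximal in the family of $\uline{{\cal F}_k}$-independent codes'' into this chain. One direction is immediate: if $X$ is a maximal code then, a fortiori, no $\uline{{\cal F}_k}$-independent code, being in particular a code, can strictly contain it, so $X$ is maximal in the restricted family. For the converse I would argue by contrapositive exactly as in \ref{classic11-Pi-k}: if $X$ is not complete, then Proposition \ref{Com-Fact--Independent} produces a regular $\uline{{\cal F}_k}$-independent code strictly containing $X$, so $X$ cannot be maximal in that family.

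Item (iii) is then the factor analogue of Corollary \ref{Comp-ind-max}: assuming $X$ is $\uline{{\cal F}_k}$-independent but not maximal in the family, item (i) gives that $X$ is non-complete, whence Proposition \ref{Com-Fact--Independent} embeds $X$ into a complete $\uline{{\cal F}_k}$-independent code $Y$ with $X\subsetneq Y$; by item (i) again, $Y$ is maximal in the family, which is the desired embedding.

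Finally, for the decidability item (ii) I would combine the above with the earlier regularity results: $X$ satisfies Cond. \ref{3} exactly when $X$ is $\uline{{\cal F}_k}$-independent and $\mu(X)=1$. The first clause is Cond. \ref{1}, decidable by Proposition \ref{Phi-regular}, and the second is decidable by Proposition \ref{decidable-bases}(iii) since $X$ is regular. I do not anticipate a genuine obstacle here, as all the combinatorial difficulty has already been absorbed into Proposition \ref{Com-Fact--Independent}; the only points demanding care are the bookkeeping that guarantees $X$ actually lies in the family before Proposition \ref{Com-Fact--Independent} is applied, and the verification that ``maximal code'' and ``maximal in the restricted family'' really agree for complete regular codes.
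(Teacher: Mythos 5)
Your proposal is correct and takes essentially the same route as the paper, whose proof of this proposition is literally ``substitute ${\cal F}_k$ for ${\cal P}_k$ in the proofs of Propositions \ref{classic11-Pi-k}--\ref{c3_Pi}'': Theorem \ref{classic} combined with the embedding Proposition \ref{Com-Fact--Independent} and the contrapositive argument for item (i), the same two-step embedding for item (iii), and reduction to deciding $\mu(X)=1$ via Proposition \ref{decidable-bases} for item (ii). Your explicit bookkeeping that $X$ must itself be $\uline{{\cal F}_k}$-independent (checked via Proposition \ref{Phi-regular} in item (ii)) only makes precise what the paper leaves implicit by stating the hypothesis in Proposition \ref{classic11-Pi-k}.
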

%
\section{Error detection wrt. topologies associated with (anti-)automorphisms}
\label{antiautomorphism}
Let $\theta$ be  (anti-)automorphism of $A^*$. As justified in the introduction,  we focus on the relation $\tau_{d_\theta,1}=\widehat\theta=\theta\cup id_{A^*}$.
Recall that  we have $\uline{\left(\widehat\theta\right)}=\widehat\theta\setminus id_{A^*}=\uline\theta$. 
%
%
\begin{example}
\label{theta1}
{\rm 
 Let  $A=\{a,b\}$ and $\theta$ be the automorphism defined by $\theta(a)=b$, $\theta(b)=a$.
It follows from  $\theta^{-1}(a)=b=\theta(a)$ 
that $\theta^{-1}=\theta=\uline\theta$.
Consider the  regular prefix code $X=\{a^nb: n\ge 0\}$.

\noindent\hspace*{2mm} -- $X$  satisfies Cond. \ref{1} wrt.  $\tau_{d_\theta,1}=\widehat\theta$.
 Indeed, we have $\uline{\left(\widehat\theta\right)}(X)=\uline\theta(X)=\theta(X)=\{b^na: n\ge 0\}$, thus $\uline\theta(X)\cap X=\emptyset$.

\noindent\hspace*{2mm} -- 
We have $\left(\widehat\theta\right)^{-1}= \left(\theta\cup id_{A^*}\right)^{-1}=\theta^{-1}\cup id_{A^*}^{-1}=\theta\cup id_{A^*}=\widehat\theta$.
Furthermore we obtain  $\widehat\theta\cdot\left(\widehat\theta\right)^{-1}=(\theta\cup id_{A^*})^2=\theta^2\cup\theta\cup id_{A^*}=\theta\cup id_{A^*}=\widehat\theta$.
As indicated above, $X$ satisfies Cond. \ref{1} wrt. $\widehat\theta=\widehat\theta\cdot\left(\widehat\theta\right)^{-1}$:
according to Prop. \ref{equiv-c2}, $X$ also satisfies Cond. \ref{2} wrt.  $\widehat\theta$.

\noindent\hspace*{2mm}  -- According to Theorem \ref{classic}, it follows from $\mu(X)=\frac{1}{2}\sum_{n\ge 0}\left(\frac{1}{2}\right)^n=1$ that $X$ is a maximal prefix code,
therefore $X$ is maximal in the family of $\uline{\left(\widehat\theta\right)}$-independent codes (Cond. \ref{3}).  

\noindent\hspace*{2mm}  -- Finally, it follows from $ba\in \theta(X)\setminus X$ that $X\subsetneq\widehat\theta(X)$.
Consequently, since $X$ is a maximal code,  $\widehat\theta(X)$ cannot be  a code that is, $X$ cannot satisfy Cond. \ref{4} (we verify that $a,b,ab\in \widehat\theta(X)$).
} 
\end{example}
\begin{example}
\label{theta2}
{\rm
Over the alphabet  $A=\{a,b\}$,  take for $\theta$ the anti-automorphism defined by $\theta(a)=b$, and $\theta(b)=a$ and, once more, consider  the code $X=\{a^nb: n\ge 0\}$.

\noindent\hspace*{2mm}  -- If follows from $\theta(X)\cap X=\{ab\}$, that the equation $x=\theta\left(x\right)$ is equivalent to $x=ab$.
This implies  $\uline{\theta}(X)=\{ab^n: n\ne 1\}$, thus $\uline{\theta}(X)\cap X=\emptyset$, whence $X$  satisfies Cond. \ref{1}.

\noindent\hspace*{2mm} -- As in Example \ref{theta1}, we have  $\theta^{-1}=\theta$. 
Once more  this implies $\widehat\theta\cdot\left(\widehat\theta\right)^{-1}=\widehat\theta$, thus $X$ satisfies Cond. \ref{2} wrt.  $\widehat\theta$.

\noindent\hspace*{2mm}  -- Similarly, we have $\mu(X)=1$, whence $X$ satisfies Cond. \ref{3}.  

\noindent\hspace*{2mm}  -- Lastly, it follows from $a^2b\in \theta(X)\setminus X$ that  $X\subsetneq \widehat\theta(X)$: since $X$ is a maximal code, $\widehat\theta(X)$ cannot be a code that is, $X$ cannot satisfy Cond. \ref{4}.
}
\end{example}
\begin{example}
\label{theta4}
{\rm
 Over the alphabet $\{A,C,G,T\}$, let  $\theta$ denotes  the Watson-Crick anti-automorphism (see eg. \cite{KKK14,K83}),
which is defined by  $\theta(A)=T$, $\theta(T)=A, \theta(C)=G$, and $\theta(G)=C$. We have $\theta^{-1}=\theta=\uline\theta=\uline{\left(\widehat\theta\right)}$.
Consider the prefix code $X=\{A,C,GA,G^2,GT,GCA,GC^2,GCG,GCT\}$.

\noindent -- It follows from $\theta(X)=\{T,G,TC,C^2,AC,TGC,G^2C, CGC,AGC\}$ that $\uline{\left(\widehat\theta\right)}(X)\cap X=\emptyset$, whence satisfies Cond. \ref{1}.

\noindent --  As in the examples \ref{theta1}, \ref{theta2}, it follows from  $\theta^{-1}=\theta$ that  $\widehat\theta \cdot\left(\widehat\theta\right)^{-1}=\widehat\theta$,
therefore, $X$ satisfies Cond. \ref{2} wrt.  $\tau_{d_\theta,1}=\widehat\theta$.

\noindent -- We have $\mu(X)=2/4+ 3/4^2+4/4^3=3/4<1$, hence $X$ cannot satisfy Cond. \ref{3}.

\noindent -- At last,  it follows from $G,G^2\in\widehat\theta (X)=\theta(X)\cup X$ that  Cond. \ref{4} is not  satisfied.
}
\end{example}
\begin{example}
\label{theta5}
{\rm
In each of the preceding examples, since the mapping $\theta$ satisfies $\theta^{-1}=\theta$, the quasi-metric  $d_\theta$ is actually a  metric.
Of course, $d_\theta$ could be only a quasi-metric.
For instance over $A=\{a,b,c\}$, taking for $\theta$  the automorphism generated by the cycle $(a,b,c)$, we obtain $d_\theta(a,b)=1$ and $d_\theta(b,a)=2$ 
(we have $b=\theta(a)$ and $a\ne\theta(b)$).
}
\end{example}
\subsection{Questions involving  regular sets}
We start with the following result:
\begin{proposit}
\label{theta-X-regular}
With the preceding  notation, each of the following properties holds:

{\rm (i)} If $\theta$  is an  automorphism, then the relations $\tau_{d_{\theta,1}}=\widehat\theta$ and $\uline{\tau_{d_{\theta,1}}}=\uline\theta$  are 
regular.

{\rm (ii)} If $\theta$ is an anti-automorphism, then $\widehat\theta$ is a non-regular relation.

{\rm (iii)} In any case,  $\widehat\theta$ is regularity-preserving.
\end{proposit}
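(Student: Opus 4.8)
The plan is to dispatch the two routine parts (i) and (iii) by explicit automata together with standard closure properties, and to concentrate the real effort on the non-regularity assertion (ii).

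For (i), assume $\theta$ is an automorphism and let $\sigma$ be the underlying permutation of $A$, so that $\theta(a_1\cdots a_n)=\sigma(a_1)\cdots\sigma(a_n)$. I would first realise $\theta$ as the behaviour of the one-state $A^*\times A^*$-automaton whose unique state is both initial and terminal and which carries the loops $(a,\sigma(a))$ for every $a\in A$: the label of the path reading $w$ is exactly $(w,\theta(w))$, so $\theta$ is regular by Theorem \ref{Elgot-Mezei-0}, and $\widehat\theta=\theta\cup id_{A^*}$ is regular because union is a regular operation. For $\uline\theta$ the subtlety is that one cannot merely intersect $\theta$ with $\overline{id_{A^*}}$ (regular relations are not closed under intersection), so I would give a direct two-state automaton: initial state $0$, terminal state $1$, loops $(a,a)$ at both states for every fixed point $a$ of $\sigma$, and edges $0\xrightarrow{(a,\sigma(a))}1$ and $1\xrightarrow{(a,\sigma(a))}1$ for every non-fixed $a$. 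A successful path reaches $1$ iff the input uses at least one non-fixed letter, i.e. iff $w\neq\theta(w)$, so its behaviour is $\uline\theta$, which is therefore regular.

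For (iii), I would write $\widehat\theta(X)=X\cup\theta(X)$ and show $\theta(X)$ is regular whenever $X$ is. If $\theta$ is an automorphism this is immediate from closure of regular sets under a direct morphism (Proposition \ref{properties-reg-A}). If $\theta$ is an anti-automorphism, I note that $\theta(w)=\psi(w^R)$, where $\psi$ is the automorphism extending the same permutation $\sigma$; hence $\theta(X)=\psi(X^R)$, and since the reversal of a regular set is regular (reverse every transition of a finite automaton for $X$ and exchange its initial and terminal states) and $\psi$ is a morphism, $\theta(X)$ is regular. In both cases $X\cup\theta(X)$ is regular. This already isolates the moral of the proposition: an anti-automorphism preserves regularity of \emph{sets} even though, as (ii) asserts, its graph is not a regular \emph{relation}.

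The heart of the argument is (ii), and this is where I expect the main obstacle. With $\theta(w)=\psi(w^R)$ as above, both $\theta$ and $\widehat\theta$ are length-preserving. My plan is to invoke the standard normalisation stating that a length-preserving rational relation $\tau\subseteq A^*\times A^*$ corresponds, under the identification of an equal-length pair $(a_1\cdots a_n,\,b_1\cdots b_n)$ with the word $(a_1,b_1)\cdots(a_n,b_n)$, to a regular language over the product alphabet $A\times A$; justifying this letter-to-letter form is the delicate technical point. Granting it, I would reach a contradiction by pumping. Fix two distinct letters $a,b$ and a pumping length $p$. When $\sigma$ does not exchange $a$ and $b$, take $w=a^pb^p$, whose pair corresponds to $(a,\sigma(b))^p(b,\sigma(a))^p$; pumping a block of the leading symbols produces the pair $(a^{p+m}b^p,\,\sigma(b)^{p+m}\sigma(a)^p)$, which lies neither in $\theta$ (reversal would force the second component $\sigma(b)^p\sigma(a)^{p+m}$) nor in $id_{A^*}$ (here the hypothesis that $\sigma$ does not exchange $a,b$ is used). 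The only remaining case, forced when $|A|=2$ and $\sigma$ is the transposition, is handled identically with $w=a^pba^p$, whose pumped images again escape both $\theta$ and the diagonal. In every case the assumed regular language cannot exist, so $\widehat\theta$ is not regular. The two points to watch are thus the length-preserving normalisation and the verification that the pumped words avoid $id_{A^*}$, which is precisely why the small case distinction on $\sigma$ is needed.
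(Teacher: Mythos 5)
Your proposal is correct, and on parts (i) and (iii) it follows essentially the paper's own route: the one-state automaton with loops $(a,\theta(a))$ realizing $\theta$, a two-state automaton for $\uline{\theta}$ that coincides with the paper's, and for (iii) the factorization $\theta(X)=\psi(X^R)$ combined with closure of regular languages of $A^*$ under reversal and morphisms. (One small point in your favour: obtaining $\widehat\theta$ as the union $\theta\cup id_{A^*}$ of two regular relations is cleaner than the paper's construction, which adds loops $(a,a)$ for non-fixed letters to the same single state; that automaton accepts pairs that apply $\theta$ to some positions and the identity to others, hence strictly more than $\theta\cup id_{A^*}$.) Part (ii) is where you genuinely diverge. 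The paper factors $\theta=t\cdot h$, where $t:w\mapsto w^R$ is the transposition and $h$ an automorphism, cites the classical non-regularity of $t$, and concludes via closure of regular relations under composition with $h^{-1}$ that $\theta$ cannot be regular. You instead invoke the classical normalization of length-preserving rational relations as letter-to-letter relations (regular languages over $A\times A$) and then pump; I checked both of your pumped pairs, and in each case they indeed escape both $\theta$ and $id_{A^*}$, the case split on whether $\sigma$ exchanges the two chosen letters being exactly what is needed. Each route rests on one cited classical theorem, so neither is more self-contained, but your argument buys something concrete: it establishes non-regularity of $\widehat\theta$ itself, diagonal included. The paper's reduction literally shows that $\theta$ is non-regular and leaves implicit the passage to $\widehat\theta$, which is not automatic, since regular relations are not closed under intersection or difference and one cannot simply strip off $id_{A^*}$; your pumping argument absorbs the diagonal into the case analysis and so proves exactly the stated claim. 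The price is the heavier external ingredient (the letter-to-letter normalization you rightly flag as the delicate point), versus the paper's two-line reduction to the non-regularity of reversal.
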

\begin{proof}
-- In the case where  $\theta$  is an  automorphism of $A^*$, it is a regular relation:
indeed,  $\theta$ is the behavior of the one-state automaton with transitions $q_0\xrightarrow {\left(a,\theta(a)\right)}q_0$ for all $a\in A$.
Set $B=\{a\in A: \theta(a)\ne a\}$.
Starting with the preceding automaton,  we obtain an automaton with behavior $\widehat\theta$
  by merely adding  the transitions $\left(q_0,\left(a,a\right),q_0\right)$, for all $a\in B$ (see  Fig. \ref{FigureAutomaton-hat-theta}): in other words, $\widehat\theta$ is a regular relation.
\begin{figure}[H]
\begin{center}
\includegraphics[width=2.4cm,height=1.5cm]{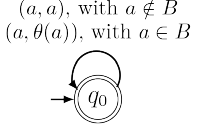}
\end{center}
\caption[]{\small An automaton with behavior $\tau_{d_{\theta,1}}=\widehat\theta$ in the case where  $\theta$ is an automorphism.}
\label{FigureAutomaton-hat-theta}
\end{figure}
\noindent The relation $\uline{\left(\widehat\theta\right)}=\uline\theta$, for its part, is the set of all the pairs $(w,w')$ with $w\ne w'$ and st.  $w'=\theta(w)$. By construction we have $|w|=|w'|$. Let $u=w\wedge w'$. Necessarily, we have $u\in (A\setminus B)^*$, moreover $a, b\in B$ and $s,s'\in A^*$ exist such that $w=uas$, $w'=ubs'$. 
Since $\theta$ is a free monoid automorphism, necessarily we have $b=\theta(a)$ and $s'=\theta(s)$.
Accordingly, $\uline\theta$ is the  behavior of the two-state automaton provided in Fig. \ref{FigureAutomaton-Under-theta}, whence it is a regular relation.
\begin{figure}[H]
\begin{center}
\includegraphics[width=5.5cm,height=1.3cm]{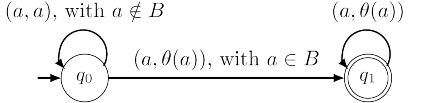}
\end{center}
\caption[]{\small An automaton with behavior $\uline\theta$, in the case where $\theta$ is an automorphism.}
\label{FigureAutomaton-Under-theta}
\end{figure}
\noindent\hspace*{2mm} 
-- Let $\theta$ be an anti-automorphism of $A^*$. Classically, an automorphism of $A^*$, say $h$, exists
st.   $\theta=t\cdot h$,
where  $t:w\rightarrow w^R$, is the so-called {\it transposition} onto $A^*$. 
In addition  $t$ cannot be  a regular relation  (see eg. \cite [Example IV.1.10]{S03}).
By contradiction, assume $\theta $ being regular. Since we have $t= \theta\cdot h^{-1}$, and since $h^{-1}$ itself is a monoid automorphism,
 according to Prop. \ref{property-reg} the transposition $t$ should be a  regular relation: a contradiction with the above.

\smallbreak
\noindent\hspace*{2mm} 
-- For proving the property (iii), we consider a regular set $X\subseteq A^*$. In the case where $\theta$ is an automorphism, the relation $\widehat\theta=\theta\cup id_{A^*}$ is regular: according to Prop. \ref{property-reg}, $\widehat\theta(X)$ is regular.
In the case where $\theta$ is an anti-automorphism,  with the preceding notation 
although the transposition $t$ is not a regular relation, the set $t(X)$ is classically known to be regular  (see eg. \cite[Proposition I.1.1] {S03}).
According to Prop. \ref{property-reg}, the set  $\widehat\theta(X)=h\left(t(X)\right)$ itself is regular. 
\end{proof}
\noindent
\subsection{Decidability results forConds. \ref{1}--\ref{4}}
Regarding Conds. \ref{1}, \ref{2}, (anti-)automorphisms satisfy the following noticeable property:
\begin{lem}
\label{1-2-equiv}
A regu
 $X\subseteq A^*$ satisfies Cond. \ref{1} wrt.  $\widehat\theta$,  iff.  it satisfies Cond. \ref{2}. 
\end{lem}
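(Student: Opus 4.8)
The plan is to route Cond. \ref{2} through Lemma \ref{equiv-c2} and then compare the resulting independence relation with $\uline\theta$. Since $d_\theta$ is a quasi-metric and $\widehat\theta=\tau_{d_\theta,1}$, Lemma \ref{equiv-c2} applies with $d=d_\theta$ and $k=1$: it gives that $X$ satisfies Cond. \ref{2} wrt. $\widehat\theta$ iff $X$ satisfies Cond. \ref{1} wrt. $\widehat\theta\cdot(\widehat\theta)^{-1}$. So the whole task reduces to showing that Cond. \ref{1} wrt. $\widehat\theta\cdot(\widehat\theta)^{-1}$ is the same constraint as Cond. \ref{1} wrt. $\widehat\theta$ itself.

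The key computation is to evaluate $\widehat\theta\cdot(\widehat\theta)^{-1}$ explicitly. From $\widehat\theta=\theta\cup id_{A^*}$ and $(\widehat\theta)^{-1}=\theta^{-1}\cup id_{A^*}$, distributing the composition over the unions and using that $\theta$ is a bijection (so that $\theta\cdot\theta^{-1}$ and $\theta^{-1}\cdot\theta$ both contain $id_{A^*}$), I obtain $\widehat\theta\cdot(\widehat\theta)^{-1}=id_{A^*}\cup\theta\cup\theta^{-1}$. Contrary to the special situations of Examples \ref{theta1}--\ref{theta4}, where $\theta^{-1}=\theta$ collapses this to $\widehat\theta$, in the general case (cf. Example \ref{theta5}) the relation is strictly larger. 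Taking the anti-reflexive restriction yields $\uline{\widehat\theta\cdot(\widehat\theta)^{-1}}=\uline\theta\cup\uline{\theta^{-1}}$, so Cond. \ref{1} wrt. $\widehat\theta\cdot(\widehat\theta)^{-1}$ reads: $\uline\theta(X)\cap X=\emptyset$ and $\uline{\theta^{-1}}(X)\cap X=\emptyset$ hold simultaneously.

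The final step is to show that the second equation is redundant. Here I use $\uline{\theta^{-1}}=(\uline\theta)^{-1}$, together with the elementary fact that for any relation $\rho\subseteq A^*\times A^*$ one has $\rho(X)\cap X=\emptyset$ iff $\rho^{-1}(X)\cap X=\emptyset$, since both express exactly the non-existence of a pair $(x,y)\in\rho$ with $x,y\in X$. Applied to $\rho=\uline\theta$, this makes $\uline{\theta^{-1}}(X)\cap X=\emptyset$ equivalent to $\uline\theta(X)\cap X=\emptyset$, which is precisely Cond. \ref{1} wrt. $\widehat\theta$. Chaining the equivalences closes the argument; note that regularity of $X$ is not actually needed.

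The one genuinely delicate point is the composition computation: one must resist concluding $\widehat\theta\cdot(\widehat\theta)^{-1}=\widehat\theta$ from the worked examples, keep the extra summand $\theta^{-1}$, and only then eliminate it via the inverse-symmetry of independence. Everything else is routine bookkeeping.
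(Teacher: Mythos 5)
Your proof is correct, but it takes a genuinely different route from the paper's. The paper proves Lemma \ref{1-2-equiv} directly, without invoking Lemma \ref{equiv-c2}: for \ref{1} $\Rightarrow$ \ref{2} it uses that $\theta$ is one-to-one (with the $\uline\theta$-independence hypothesis disposing of the mixed cases $x=\theta(y)$ or $y=\theta(x)$) to force $x=y$ whenever the $\widehat\theta$-images of $x,y\in X$ meet; for \ref{2} $\Rightarrow$ \ref{1} it argues by contrapositive, since $y=\theta(x)$ with $x\ne y$ and $x,y\in X$ gives $y\in\widehat\theta\left(x\right)\cap\widehat\theta\left(y\right)$. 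You instead reduce Cond. \ref{2} to Cond. \ref{1} wrt. $\widehat\theta\cdot\left(\widehat\theta\right)^{-1}$ via Lemma \ref{equiv-c2}, compute $\widehat\theta\cdot\left(\widehat\theta\right)^{-1}=id_{A^*}\cup\theta\cup\theta^{-1}$ exactly, and then discard the summand $\theta^{-1}$ using $\uline{\theta^{-1}}=\left(\uline\theta\right)^{-1}$ and the fact that $\rho(X)\cap X=\emptyset$ iff $\rho^{-1}(X)\cap X=\emptyset$. Both arguments are sound, and as you note neither needs regularity of $X$. What your route buys: it is uniform with the paper's quasi-metric machinery, and it records the \emph{correct} composition identity --- notably, the paper's own proof of Prop. \ref{decide-theta} asserts $\widehat\theta\cdot\left(\widehat\theta\right)^{-1}=id_{A^*}$, which is false whenever $\theta\ne id_{A^*}$; your identity $id_{A^*}\cup\theta\cup\theta^{-1}$ is what that argument actually requires, and your last step shows why the intended conclusion survives. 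What the paper's route buys: brevity and self-containedness, needing only $\widehat\theta\left(x\right)=\{x,\theta\left(x\right)\}$ and injectivity. One small wording repair on your side: to get the inclusion $\widehat\theta\cdot\left(\widehat\theta\right)^{-1}\subseteq id_{A^*}\cup\theta\cup\theta^{-1}$ you need $\theta\cdot\theta^{-1}\subseteq id_{A^*}$ (functionality plus injectivity), not that $\theta\cdot\theta^{-1}$ \emph{contains} $id_{A^*}$ as your parenthetical states; since $\theta$ is a bijection both hold, so the computation stands.
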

\begin{proof}
Firstly, assume  that $X$ is  $\uline\theta$-independent, and let $x,y\in X$ st.  $\uline{\tau_{d_\theta,1}}\left(x\right)\cap \uline{\tau_{d_\theta,1}}\left(y\right)=\uline\theta\left(x\right)\cap\uline\theta\left(y\right)\ne\emptyset$: 
necessarily we have  $\theta\left(x\right)\cap\theta\left(y\right)\ne\emptyset$. Since $\theta$ is a one-to-one mapping, this implies $x=y$
therefore, by definition
$X$ satisfies Cond. \ref{2}.
For proving the converse, we argue by contrapositive.  Assuming that Cond. \ref{1} does not hold that is, 
$X\cap\uline{\left(\widehat\theta\right)}(X)=X\cap\uline\theta(X)\ne\emptyset$, 
a pair of words $x,y\in X$ exist st.  $y=\theta\left(x\right)$, with $x\ne y$.
It follows from $\widehat\theta\left(x\right)=\{x\}\cup\{\theta\left(x\right)\}=\{x,y\}$ and  $\widehat\theta\left(y\right)=\{y\}\cup\{\theta\left(y\right)\}$ that $\widehat\theta\left(x\right)\cap \widehat\theta\left(y\right)\ne\emptyset$, whence Cond. \ref{2} cannot hold.
\end{proof}
\noindent
As a consequence of  Props.  \ref{theta-X-regular}, \ref{1-2-equiv}, we obtain the following result:
\begin{proposit}
\label{decide-theta}
Given $X\subseteq A^*$, wrt. $\widehat\theta$  each of the following properties holds:

{\rm (i)} In any case, $X$ satisfies both Conds. \ref{1},\ref{2}.

{\rm (ii)} If $X\subseteq A^*$  is  a regular code, it can be decided whether it satisfies Cond. \ref{4}.
\end{proposit}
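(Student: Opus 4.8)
My plan is to treat the two assertions by separate routes: Lemma~\ref{1-2-equiv} drives (i), while the regularity bookkeeping of Prop.~\ref{theta-X-regular} drives (ii). For (i) I would first collapse the two requirements into one. Lemma~\ref{1-2-equiv} shows that, wrt.~$\widehat\theta$, Cond.~\ref{1} and Cond.~\ref{2} are equivalent; since its proof uses only that $\theta$ is a one-to-one length-preserving map extending a permutation of $A$ (and not the regularity of $X$), the equivalence is valid for an arbitrary $X\subseteq A^*$. Hence it suffices to establish error detection unconditionally, i.e.\ that $\uline\theta(X)\cap X=\emptyset$ holds for every code $X$, after which error correction follows for free.

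The substance of this, and the step I expect to be the main obstacle, is the exclusion of a codeword together with its $\theta$-image: a nontrivial element of $\uline\theta(X)\cap X$ would be a pair $x\neq y$ in $X$ with $y=\theta(x)$. I would try to rule this out by combining the length identity $|\theta(x)|=|x|$ with the unique-factorization property of $X$ so as to force $x=y$, contradicting $x\neq y$. This exchange-by-$\theta$ case is the delicate point on which the whole of (i) rests; once it is settled, Cond.~\ref{1} holds for every code and Cond.~\ref{2} is delivered by Lemma~\ref{1-2-equiv}, so that ``in any case $X$ satisfies both''.

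For (ii) the route is purely one of regularity together with the decidability supplied by the Sardinas and Patterson test of Prop.~\ref{decidable-bases}(ii). Since $\tau_{d_\theta,1}=\widehat\theta$ is reflexive, Cond.~\ref{4} asks precisely whether the set $\widehat{\tau_{d_\theta,1}}(X)=\widehat\theta(X)=\theta(X)\cup X$ is a code. When $X$ is a regular code, Prop.~\ref{theta-X-regular}(iii) guarantees that $\widehat\theta(X)$ is a regular subset of $A^*$ in both the automorphism and the anti-automorphism case, so Prop.~\ref{decidable-bases}(ii) applies and the property is decidable. The one point worth flagging is that in the anti-automorphism case $\widehat\theta$ is not itself a regular relation (Prop.~\ref{theta-X-regular}(ii)), so one cannot invoke Prop.~\ref{property-rec1}; the regularity of the image must instead be obtained from the factorization $\theta=t\cdot h$ and the fact that the transposition $t$, although not a regular relation, preserves regularity of sets, which is exactly the content packaged in Prop.~\ref{theta-X-regular}(iii).
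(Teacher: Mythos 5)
Your part (ii) is correct and follows essentially the paper's own route: since $\widehat{\tau_{d_\theta,1}}(X)=\widehat\theta(X)=X\cup\theta(X)$, it suffices that $\theta(X)$ be regular, which Prop.~\ref{theta-X-regular}(iii) supplies (via the factorization $\theta=t\cdot h$ in the anti-automorphism case, exactly the caveat you flag, since $\widehat\theta$ is then not a regular relation and Prop.~\ref{property-rec1} is unavailable); then Prop.~\ref{decidable-bases}(ii) decides codehood. Your reduction of (i) to Cond.~\ref{1} via Lemma~\ref{1-2-equiv} is also legitimate: that lemma uses only the injectivity of $\theta$, not regularity of $X$.

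The gap is precisely the step you single out as ``the delicate point,'' and it is not merely delicate: it cannot be carried out, because assertion (i) is false as stated. Take $A=\{a,b\}$, let $\theta$ be the automorphism exchanging $a$ and $b$, and let $X=A=\{a,b\}$, a maximal uniform bifix code. Then $b=\theta(a)$ with $a\ne b$, so $\uline{\theta}(X)\cap X=\{a,b\}\ne\emptyset$ and Cond.~\ref{1} fails; by Lemma~\ref{1-2-equiv} Cond.~\ref{2} fails as well (directly: $\widehat\theta(a)\cap\widehat\theta(b)=\{a,b\}\ne\emptyset$ yet $a\ne b$). Unique factorization constrains products of codewords and imposes nothing on whether one codeword is the $\theta$-image of another, so no combination of $|\theta(x)|=|x|$ with the code property can force $x=y$. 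You should also know that the paper's own proof founders at exactly this spot: it deduces $\widehat\theta\cdot\bigl(\widehat\theta\bigr)^{-1}=id_{A^*}$ from $\theta\cdot\theta^{-1}=id_{A^*}$ and the inclusion $\theta\cdot\theta^{-1}\subseteq\widehat\theta\cdot\bigl(\widehat\theta\bigr)^{-1}$, but that inclusion only yields $id_{A^*}\subseteq\widehat\theta\cdot\bigl(\widehat\theta\bigr)^{-1}$; in fact $\widehat\theta\cdot\bigl(\widehat\theta\bigr)^{-1}\left(x\right)=\{x,\theta(x),\theta^{-1}(x)\}$, consistently with the paper's own Example~\ref{theta1}, where $\widehat\theta\cdot\bigl(\widehat\theta\bigr)^{-1}=\widehat\theta$, not $id_{A^*}$. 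So your instinct about where the crux lies was exactly right; what actually survives is the equivalence of Conds.~\ref{1} and \ref{2} (Lemma~\ref{1-2-equiv}) and, when $\theta$ is an automorphism, the decidability of Cond.~\ref{1} for regular $X$ via the regularity of $\uline{\theta}$ (Prop.~\ref{theta-X-regular}(i) with Props.~\ref{property-rec1} and \ref{decidable-bases}(i)) --- not their unconditional validity.
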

\begin{proof}
-- For proving that $X$ satisfies Cond. \ref{2}, we prove that it satisfies the cond. (ii) of  Lemma \ref{equiv-c2}.
Actually, it follows from $\theta\subseteq\widehat\theta$ that $\theta\cdot\theta^{-1}\subseteq \widehat\theta\cdot\widehat\theta^{-1}$.
Since $\theta$ is a one-to-one mapping, we have $\theta\cdot\theta^{-1}=id_{A^*}$: this implies $\widehat\theta\cdot\widehat\theta^{-1}=id_{A^*}$. 
As a consequence, for every $x\in X$ we have $\widehat\theta\cdot\left(\widehat\theta\right)^{-1}\left(x\right)\cap X=\{x\}\cap X=\{x\}$, thus $\left(\uline{\widehat\theta\cdot\left(\widehat\theta\right)^{-1}} \right)\left(x\right)=\emptyset$.
According to  Lemma \ref{equiv-c2} $X$ satisfies Cond. \ref{1} wrt.  $\widehat\theta$ that is, 
according to Lemma \ref{1-2-equiv}, $X$ also satisfies Cond. \ref{1}.

\smallskip\noindent\hspace*{2mm} 
-- According to Prop. \ref{theta-X-regular}, in any case the set $\theta(X)$  is regular therefore, according to Prop. \ref{decidable-bases},
 one can  decide whether $\widehat\theta(X)$ is a code that is, whether $X$ satisfies Cond. \ref{4}.
\end{proof}
\noindent
Before to study the behavior of  Cond. \ref{3}, we note that the following property  holds: 
\begin{fait}
\label{claim4}
If $\theta$ is an anti-automorphism,  for all $w,w'\in A^*$, $w'\in {\rm F}(w)$ implies $\theta(w')\in {\rm F}(w)$.
\end{fait}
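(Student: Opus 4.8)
The plan is to prove the claim directly from the definition of a factor, invoking only the defining identity $\theta(xy)=\theta(y)\theta(x)$ of an anti-automorphism; no combinatorial machinery (conjugacy, overlap-freeness) is required. Since $w'\in{\rm F}(w)$, by the definition of a factor there exist words $s,t\in A^*$ with $w=s\,w'\,t$. This single factorization is the whole input to the argument.

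Next I would apply $\theta$ to that factorization and push it through the reversal rule. Because $\theta$ reverses concatenation, a double use of $\theta(xy)=\theta(y)\theta(x)$ on the three-fold product gives
\[
\theta(w)=\theta(s\,w'\,t)=\theta(t)\,\theta(w')\,\theta(s).
\]
Reading this equation through the definition of a factor exhibits $\theta(w')$ flanked by the word $\theta(t)$ on its left and the word $\theta(s)$ on its right, so $\theta(w')$ occurs as an interior block and hence $\theta(w')\in{\rm F}(w)$, which is exactly the assertion of the claim. Once the factorization $w=s\,w'\,t$ is fixed, the proof is a single line.

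The one delicate point — and the place where a careless argument slips — is the order reversal induced by $\theta$: the left context $s$ of $w'$ must reappear as the \emph{right} context $\theta(s)$ of $\theta(w')$, and symmetrically the right context $t$ becomes the left context $\theta(t)$. I expect this swap of the two contexts to be the only obstacle, and it is purely notational rather than mathematical. I would also note in passing that the companion fact for automorphisms holds by the identical computation, the sole difference being that the order of the two contexts is then preserved rather than exchanged; tracking this is precisely where one must keep the anti-automorphism hypothesis in view.
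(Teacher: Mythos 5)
Your proof is, line for line, the paper's own: the paper likewise takes $w=uw'v$ and concludes $\theta(w)=\theta(v)\theta(w')\theta(u)$ in a single sentence, so the approach is identical. One caveat, which affects your final sentence and the printed statement alike: the computation establishes $\theta(w')\in{\rm F}(\theta(w))$, not $\theta(w')\in{\rm F}(w)$ --- the latter is false in general (take $w=w'=a$ with $\theta$ exchanging the letters $a$ and $b$), so the statement as printed contains a typo, and the way the claim is invoked in the proof of Prop.~\ref{Embed-theta} (iterating from $\theta(z_0)\in{\rm F}(\theta(x))$ up to $\theta^{n}(z_0)\in{\rm F}(\theta^{n}(x))$) confirms that ${\rm F}(\theta(w))$ is the intended conclusion. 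Your argument is therefore correct for the evidently intended statement; just amend your last line from $\theta(w')\in{\rm F}(w)$ to $\theta(w')\in{\rm F}(\theta(w))$ rather than inheriting the slip.
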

\begin{proof}
 Let $u,v\in A^*$ st. $w=uw'v$. By definition, we have $\theta(w)=\theta(v)\theta(w')\theta(u)$.
\end{proof}
\noindent
As usual in the paper, we start by examining completeness from the point of view of embedding:
\begin{proposit}
\label{Embed-theta}
Every regular $\uline\theta$-independent code can be embedded into some complete one.
\end{proposit}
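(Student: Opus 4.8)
The plan is to follow the same embedding methodology used for $\mathcal{P}_k$ and $\mathcal{F}_k$ (Props.~\ref{Comp-Pref-Independent} and~\ref{Com-Fact--Independent}): start from a non-complete $\uline\theta$-independent regular code $X$, produce an overlapping-free word $z$ not in ${\rm F}(X^*)$, apply the Ehrenfeucht--Rozenberg construction of Theorem~\ref{EhRz} to obtain a complete regular code $Z=X\cup Y$ with $Y=z(Uz)^*$ and $U=A^*\setminus(X^*\cup A^*zA^*)$, and then verify that $Z$ remains $\uline\theta$-independent. Since $\uline\theta$ and the completeness of $Z$ are already in hand, the whole content of the proof lies in checking the three disjointness conditions $\uline\theta(X)\cap Y=\emptyset$, $\uline\theta(Y)\cap X=\emptyset$, and $\uline\theta(Y)\cap Y=\emptyset$; the first equality $\uline\theta(X)\cap X=\emptyset$ holds by the hypothesis on $X$.

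First I would handle the automorphism case, which is the easy one: if $y=\theta(x)$ then $|y|=|x|$ and, more usefully, $z_0\in{\rm F}(x)\iff\theta(z_0)\in{\rm F}(\theta(x))$, so $\theta$ merely relabels factors letterwise. The key step is to choose $z$ so that not only $z\notin{\rm F}(X^*)$ but also $\theta^{\pm1}(z)\notin{\rm F}(X^*)$ — concretely, one can enlarge the forbidden factor beforehand or simply observe that a word $y\in Y$ contains the overlapping-free factor $z$ in a controlled position, so its $\theta$-image contains $\theta(z)$; comparing lengths and the position of this distinguished factor forces $x$ to contain a $\theta$-image of $z$ as a factor, contradicting the choice of $z$. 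The case $\uline\theta(Y)\cap Y=\emptyset$ uses that $\theta$ preserves the block decomposition $z(Uz)^*$ up to relabeling, together with the overlapping-freeness of $z$, exactly as in the argument of Prop.~\ref{Comp-Pref-Independent}(c).

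The anti-automorphism case is where I expect the real obstacle, and it is worth isolating. Here $\theta$ reverses factors, so the clean letterwise correspondence is lost; what replaces it is Claim~\ref{claim4}: $w'\in{\rm F}(w)$ implies $\theta(w')\in{\rm F}(w)$ is \emph{not} what we need — rather $\theta(w')\in{\rm F}(\theta(w))$. The delicate point is that a factor of $y\in Y$ maps under $\theta$ to a factor of $\theta(y)$ read in reverse, so the distinguished overlapping-free marker $z$ in $y$ yields the factor $\theta(z)=z^Rh^{-1}$-type word in $\theta(y)$. I would therefore pick $z$ overlapping-free such that additionally $z^R$ (equivalently $\theta(z)$) avoids ${\rm F}(X^*)$, using Prop.~\ref{overlapping-free constr} applied symmetrically — note from the preliminaries that $z$ is overlapping-free iff $z^R$ is. Then the contradiction in each of the three cases is obtained by locating $\theta(z)$ (or $z$) as a factor forced into some $x\in X$ or into a mismatched block of $Y$.

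The hardest part will be the self-interaction $\uline\theta(Y)\cap Y=\emptyset$ in the anti-automorphism case: one must rule out $y'=\theta(y)$ with $y,y'\in z(Uz)^*$ distinct. The plan is to compare the leftmost marker $z$ of $y$ with the structure of $\theta(y)$, whose rightmost block is $\theta(z)$; by engineering $z$ so that $z$ and $\theta(z)$ are distinct overlapping-free words that are mutually non-factors of the relevant pieces, the equation $y'=\theta(y)$ forces an overlap or an occurrence of $z$ inside $\theta(U)$-material, which contradicts the definition of $U$ and the overlapping-freeness. Once the three disjointness equalities are established in both cases, $Z=X\cup Y$ is a complete regular $\uline\theta$-independent code strictly containing $X$, which is exactly the assertion; regularity of $Z$ follows from Theorem~\ref{EhRz} together with Prop.~\ref{theta-X-regular}(iii), and maximality-type consequences then parallel Prop.~\ref{classic11-Pi-k}.
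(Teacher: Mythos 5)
Your architecture is exactly the paper's: invoke Theorem~\ref{EhRz} on an engineered overlapping-free marker $z\notin{\rm F}(X^*)$ and reduce everything to the three disjointness checks $\uline\theta(X)\cap Y=\emptyset$, $\uline\theta(Y)\cap X=\emptyset$, $\uline\theta(Y)\cap Y=\emptyset$. Where you genuinely diverge is in how the marker is engineered for the anti-automorphism case. The paper uses the fact that $\theta$ extends a permutation of $A$ of finite order $n$ (so $\theta^n=id_{A^*}$) and takes $z_2=z_0\theta(z_0)\cdots\theta^{n-1}(z_0)ab^{n|z_0|}$; then \emph{every} $\theta$-iterate of $z_2$ contains $z_0$ as a factor, so both cross cases collapse to ``$z_0\in{\rm F}(X^*)$'' by iterating Claim~\ref{claim4} (whose statement, as you correctly noted, should read $\theta(w')\in{\rm F}(\theta(w))$), while the self-interaction case is killed because $\theta(z_2)$ begins with the letter power $\theta(b)^{n|z_0|}$ and $z_0$ is chosen, wlog, not to be a letter power. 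Your route—imposing the avoidance conditions directly on the images of $z$—never needs the order $n$ and is, incidentally, more explicit about the automorphism case than the paper's one-line treatment of it.

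As written, though, your anti-automorphism case has two concrete defects. First, ``$z^R$ (equivalently $\theta(z)$)'' is not an equivalence: $\theta=t\cdot h$ with $h$ a letter permutation, so $\theta(z)=h(z^R)$, and $z^R\notin{\rm F}(X^*)$ neither implies nor follows from $\theta(z)\notin{\rm F}(X^*)$ unless $h$ is trivial. Second, and this is the real gap: there you impose only $\theta(z)\notin{\rm F}(X^*)$, but the case $\uline\theta(X)\cap Y\ne\emptyset$ needs the \emph{inverse} image controlled: from $\theta(x)\in Y$ you get $z\in{\rm F}(\theta(x))$, hence $\theta^{-1}(z)\in{\rm F}(x)$, and when the underlying permutation has order $>2$ the condition $\theta^{-1}(z)\notin{\rm F}(X^*)$ does not follow from what you assumed. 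The patch is cheap and is exactly the requirement you already wrote in your automorphism paragraph: take $w=z_0\theta(z_0)\theta^{-1}(z_0)$ and $z=wab^{|w|}$ (Prop.~\ref{overlapping-free constr}), so that $\theta^{-1}(z)$ and $\theta(z)$ each contain $z_0$ as a factor. Finally, your self-interaction case closes much more cleanly than ``forces an overlap'': if $y,y'\in Y$ with $y'=\theta(y)$, then $y'$ begins with $z$ (membership in $Y$) and also with $\theta(z)$ (the image of the final block of $y$, since $\theta$ reverses); as $|\theta(z)|=|z|$, this forces $z=\theta(z)$, so it suffices to have arranged $z\ne\theta(z)$—which the $z$ above satisfies, since it ends in $b^{|w|}$ while (taking $z_0$ not a letter power, wlog) it does not begin with a letter power. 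With these repairs your proof is complete and somewhat leaner than the paper's.
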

\begin{proof}
The property trivially holds if $X$ is complete. Assume $X$ non-complete.
According to Theorem \ref{EhRz}, the result holds if $\theta$ is an automorphism:
indeed the action of such a transformation merely consists in rewriting words by applying some permutation of $A$.
In the sequel we assume that $\theta$ is  an anti-automorphism.
Classically, some positive integer $n$, the so-called {\it order} of the permutation $\theta$,  exists such $\theta^{n}=id_{A^*}$. 
As in  the proofs of Prop.  \ref{Comp-Pref-Independent} and \ref{Com-Fact--Independent}, in view of Theorem \ref{EhRz}, we start by constructing a convenient word in $A^*\setminus {\rm F}(X^*)$.
Let $z_0\notin {\rm F}(X^*)$, let $a$ be its initial character, and let $b$ be a character different of $a$.
We assume wlog. $|z_0|\ge 2$ and $z_0\notin aa^*$, for every $a\in A$ (otherwise,  substitute $z_0b$ to $z_0$).
Since $\theta$ is a free monoid anti-automorphism, it is length-preserving that is, for every  $i\ge 0$, the equation $\left|\theta^i(z_0)\right|=\left|\theta(z_0)\right|$ holds.
Consequently, we have  $\left|z_0\theta(z_0)\cdots\theta^{n-1}(z_0)\right|=n|z_0|$ therefore,
according to Lemma \ref {overlapping-free constr}  $z_2=z_0\theta(z_0)\cdots\theta^{n-1}(z_0)ab^{n|z_0|}$ is an overlapping-free word. 
In addition, it follows from $z_0\in A^*\setminus {\rm F}(X^*)$ that 
 $z_2\in  A^*\setminus {\rm F}(X^*)$.
Set $U_2=A^*\setminus \left(X^*\cup A^*z_2A^*\right)$, $Y_2=(z_2U_2)^*z_2$, and $Z_2=X\cup Y_2$.
According to  Theorem \ref{EhRz}, the set $Z_2=X\cup Y_2$ is a complete regular code.
  Since we assume $X$ being $\widehat\theta$-independent, 
we have $\uline\theta(Z_2)\cap Z_2=\left(\uline\theta(X)\cap Y_2\right)\cup \left(X\cap \uline\theta(Y_2)\right)\cup \left(\uline\theta(Y_2)\cap Y_2\right)$.
In order to prove that $Z_2$ is $\uline\theta$-independent, we argue by contradiction.
Actually assuming that $\uline\theta(Z_2)\cap Z_2\ne \emptyset$, exactly one of the three following conds. holds:

\smallskip
\noindent\hspace*{4mm}
(a)  {\it Cond. $\uline\theta(X)\cap Y_2\ne\emptyset$.} 
With this condition, $x\in X$ exists st.  $\theta(x)\in Y_2$. 
Since by construction $z_2$ is a prefix of any word in $Y_2$, we have $z_2\in {\rm F}\left(\theta\left(x\right)\right)$.
In addition, it follows from  $\theta(z_0)\in {\rm F}(z_2)$ that  $\theta(z_0)\in {\rm F}\left(\theta\left(x\right)\right)$.
According to Claim \ref{claim4},  we obtain $\theta^{n}(z_0)\in {\rm F}\left(\theta^{n}\left(x\right)\right)$, thus $z_0\in {\rm F}(x)$,
a contradiction with $z_0\notin {\rm F}(X^*)$.

\smallskip
\noindent\hspace*{4mm}
(b) {\it Cond. $X\cap \uline\theta(Y_2)\ne\emptyset$.} Some  pair of words $x\in X$, $y\in Y_2$ exist st.  
$x=\theta(y)$.
By construction we have $\theta^{n-1}(z_0)\in {\rm F}(z_2)\subseteq {\rm F}(y)$.
According to Claim \ref{claim4} this implies $\theta\left(\theta^{n-1}(z_0)\right)\in {\rm F}\left(\theta\left(y\right)\right)$, thus $z_0\in {\rm F}\left(x\right)$: once more this contradicts
$z_0\notin {\rm F}(X^*)$.

\smallskip
\noindent\hspace*{4mm}
(c)   {\it Cond. $\uline\theta(Y_2)\cap Y_2\ne\emptyset$.} With this cond. there are  different words $y, y'\in Y_2$ st.  $y'=\theta\left(y\right)$.
On the one hand, since $\theta$ is an anti-automorphism,  $b^{n|z_0|}\in {\rm S}(y')$ implies $\theta\left(b^{n|z_0|}\right)\in {\rm P}(y')$.
On the other hand, it follows from    $Y_2\subseteq z_0A^*$ and $y'\in Y_2$ that $z_0\in{\rm P}(y')$.
More precisely,  $|z_0|\le  |b^{n|z_0|}|$ implies  $z_0\in{\rm P}(b^{n|z_0|})$, thus 
$z_0=\left(\theta\left(b\right)\right)^{|z_0|}$. From the fact that we have  $|z_0|\ge 2$ and $\theta(b)\in A$,
this is incompatible with the construction of $z_0$.

\smallskip\noindent
In each case we obtain a contradiction, whence $Z_2$ is $\uline{\theta}$-independent: this completes the proof.
\end{proof}
\noindent
As a consequence, we obtain the following result-the proof is merely done by translating in term of (anti-)automorphism the one of Prop.  \ref{classic-22-Phi-k} (recall that we have  $\uline{\left(\widehat\theta\right)}= \uline\theta$):
\begin{proposit}
\label{classic-33-theta-1}
Given a  regular code $X$, each of the  following properties holds:

{\rm (i)} The three following conds. are equivalent:

\hspace*{4mm}-- $X$ is maximal as a $\uline\theta$-independent code.

\hspace*{4mm}-- $X$ is  complete.

\hspace*{4mm}-- $\mu(X)=1$ holds.

{\rm (ii)} One  can decide whether $X$ satisfies Cond. \ref{3} wrt.  ${\widehat\theta}$. 

{\rm (iii)} If $X$ is $\uline\theta$-independent, then  it can be embedded into a maximal $\uline\theta$-independent code.
\end{proposit}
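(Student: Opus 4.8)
The plan is to follow the same template already used for the prefix and factor metrics (Propositions \ref{classic11-Pi-k} and \ref{classic-22-Phi-k}), since all the genuinely new work has been absorbed into the embedding result of Proposition \ref{Embed-theta}. Recall that $X$ is a regular code and that, by Proposition \ref{decide-theta}(i), every code is automatically $\uline\theta$-independent, so the family of $\uline\theta$-independent codes coincides with the family of all codes; nonetheless I would phrase the argument so that it does not rely on this simplification, exactly mirroring the two earlier cases.

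For property (i), I would first invoke Theorem \ref{classic}: since $X$ is a regular code, the conditions that $X$ is complete, that $X$ is a maximal code, and that $\mu(X)=1$ are all equivalent. It therefore suffices to connect maximality in the family of $\uline\theta$-independent codes with completeness. One direction is immediate: if $X$ is complete then by Theorem \ref{classic} it is a maximal code, so no code---a fortiori no $\uline\theta$-independent code---can strictly contain it, whence $X$ is maximal in that family. For the converse I would argue by contrapositive. If $X$ is not complete, then Proposition \ref{Embed-theta} supplies a complete $\uline\theta$-independent code $Y$ with $X\subsetneq Y$, so $X$ is not maximal among $\uline\theta$-independent codes. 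This closes the cycle of equivalences.

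Property (iii) is then read off from (i) together with Proposition \ref{Embed-theta}: given a $\uline\theta$-independent code $X$ that is not already maximal in its family, part (i) forces $X$ to be non-complete, so Proposition \ref{Embed-theta} embeds it into a complete $\uline\theta$-independent code $Y\supsetneq X$, and a second application of (i) shows that $Y$ is maximal as a $\uline\theta$-independent code. For the decidability claim (ii), I would note that by (i) the code $X$ satisfies Cond. \ref{3} with respect to $\widehat\theta$ precisely when $X$ is $\uline\theta$-independent and $\mu(X)=1$. The first requirement is decidable (indeed always true, by Proposition \ref{decide-theta}(i)), and the second is decidable by Proposition \ref{decidable-bases}(iii); hence Cond. \ref{3} is decidable.

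The only point demanding care---and the reason this proposition is stated separately from the prefix and factor versions---is that the whole argument rests on Proposition \ref{Embed-theta}, whose anti-automorphism case is the real obstacle: there one cannot simply reuse the overlapping-free word from the prefix construction, but must build a word such as $z_0\theta(z_0)\cdots\theta^{n-1}(z_0)ab^{n|z_0|}$ that is stable enough under the action of $\theta$ for Claim \ref{claim4} to rule out every possible collision in $\uline\theta(Z)\cap Z$. Granting that embedding, the present proposition is a routine transcription of the prefix-metric proofs, and I expect no further difficulty.
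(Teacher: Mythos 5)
Your proposal is correct and takes essentially the same route as the paper: the paper proves Proposition \ref{classic-33-theta-1} precisely by transcribing the prefix-metric arguments (Props. \ref{classic11-Pi-k}, \ref{Comp-ind-max}, and \ref{c3_Pi}) with Prop. \ref{Embed-theta} playing the role of Prop. \ref{Comp-Pref-Independent}, together with Theorem \ref{classic} and Prop. \ref{decidable-bases} for the decidability of $\mu(X)=1$. No further comment is needed.
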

\noindent
We close the study with the following statement: it synthesizes the decidability results obtained in the whole paper:
\begin{theorem}
\label{decidable}
With the preceding notation,  each of the following properties holds:

{\rm (i)} 
For every positive integer $k$, it can be decided whether $X$ satisfies any of Conds. \ref{1}--\ref{4} wrt.  ${\cal P}_k$,  ${\cal S}_k$, or ${\cal F}_k$.

{\rm (ii)} 
For every  (anti-)automorphism $\theta$ of $A^*$, wrt.  $\tau_{d_\theta,1}=\widehat\theta$, the code $X$ satisfies Conds. \ref{1} and \ref{2}.
In addition one can decide whether it  satisfies any of Conds. \ref{3}, \ref{4}.
\end{theorem}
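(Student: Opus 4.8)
The plan is to assemble, by a careful bookkeeping, the per-relation and per-condition decidability statements that have already been established in Sections \ref{Prefix-metric}--\ref{antiautomorphism}; since every individual case has been settled, the proof of this synthesis theorem reduces to verifying that the cited results jointly exhaust all four conditions for each of the relations ${\cal P}_k$, ${\cal S}_k$, ${\cal F}_k$, and $\widehat\theta$.

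For part (i) I would proceed relation by relation. For ${\cal P}_k$, Prop. \ref{decid-Pi1} already yields decidability of Conds. \ref{1}, \ref{2}, \ref{4}, while Prop. \ref{c3_Pi} settles Cond. \ref{3}; together they cover all four. The relation ${\cal S}_k$ needs no separate argument, since Prop. \ref{decision-wrt-S} transfers the entire ${\cal P}_k$-analysis to reversed words via the identity $d_{\rm S}(w,w')=d_{\rm P}(w^R,w'^R)$, and thereby disposes of Conds. \ref{1}--\ref{4} at once. For ${\cal F}_k$, Prop. \ref{Phi-regular} handles Conds. \ref{1}, \ref{2}, \ref{4}, and Prop. \ref{classic-22-Phi-k}(ii) handles Cond. \ref{3}. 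I would emphasise that the genuinely difficult ingredient behind the ${\cal F}_k$ case is the regularity of $\uline{{\cal F}_k}(X)$ for regular $X$, furnished by Prop. \ref{Fk(X)-reg} through the finite covering $(S_\omega)_{\omega\in\Omega'_k}$ of $\uline{{\cal F}_k}$ together with the combinatorial description of $S_\omega\cap id_{A^*}$; Prop. \ref{Phi-regular} then only has to combine this regularity with Lemma \ref{equiv-c2-F} and the closure properties recalled in Prop. \ref{properties-reg-A} and Prop. \ref{decidable-bases}.

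For part (ii) I would invoke Prop. \ref{decide-theta}(i) to conclude that a code \emph{always} satisfies Conds. \ref{1} and \ref{2} with respect to $\widehat\theta$ (this rests on $\theta$ being one-to-one, so that $\widehat\theta\cdot(\widehat\theta)^{-1}=id_{A^*}$ and Lemma \ref{1-2-equiv} applies), then Prop. \ref{decide-theta}(ii) for the decidability of Cond. \ref{4}, and finally Prop. \ref{classic-33-theta-1}(ii) for that of Cond. \ref{3}. Since all the substantive work is embedded in the cited propositions, there is essentially no remaining obstacle; the only points deserving attention are that the four-condition listings in the quoted results are indeed exhaustive, and that, in the anti-automorphism case, one does not need regularity of $\widehat\theta$ itself (it fails, by Prop. \ref{theta-X-regular}(ii)) but only its regularity-preservation, Prop. \ref{theta-X-regular}(iii), which is what Props. \ref{decide-theta} and \ref{classic-33-theta-1} already rely on.
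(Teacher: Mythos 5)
Your proposal is correct and coincides with the paper's own treatment: the paper states Theorem \ref{decidable} explicitly as a synthesis of the previously established results, and the assembly you give (Props. \ref{decid-Pi1} and \ref{c3_Pi} for ${\cal P}_k$, Prop. \ref{decision-wrt-S} for ${\cal S}_k$, Props. \ref{Phi-regular} and \ref{classic-22-Phi-k} for ${\cal F}_k$, and Props. \ref{decide-theta} and \ref{classic-33-theta-1} for $\widehat\theta$) is exactly the intended justification, with each of Conds. \ref{1}--\ref{4} covered for each relation.
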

\section{Concluding remark}
\label{conclusion}

From the point of view of decidability, we have now fully studied  the behaviors of Conds. \ref{1}--\ref{4} wrt.  ${\cal P}_k$, ${\cal F}_k$, and $\widehat\theta$.
We also note  that, in \cite{KM15} the  authors were interested in the question of embedding a non maximal $\tau$-independent set $L\subseteq A^*$ into some maximal one.
From this point of view, the results of Corollary  \ref{Comp-ind-max} and Props.  \ref{classic-22-Phi-k}, \ref{classic-33-theta-1} attest that,  in the frameworks of ${\cal P}_k$, ${\cal P}_k$, and $\widehat\theta$, in any case such an embedding can be successfully done
for variable-length codes.

\smallskip\noindent
Regarding further research, several ways may be involved:

\noindent\hspace*{4mm}
--  At first, for $k\ge 2$ the question whether the relation $\uline{{\cal F}_k}$ is regular or not remains open.

\noindent\hspace*{4mm}
-- The  so-called {\it subsequence metric} associates, with  each pair of words $(w,w')$, the integer $\delta(w,w')=|w|+|w'|-2|lcs(w,w')|$,
where $lcs(w,w')$ stands for some maximum length subsequence  common  to $w$ and $w'$.
With such a definition that metric appears as a direct extension of the factor one.
Equivalently, $\delta(w,w')$ is the minimum number of  one character insertions and  deletions that have to be applied   for computing $w'$ by starting from $w$. 
From this point of view, the frameworks of the Hamming and Levenshtein metrics are also involved.

We observe that, wrt.  $\tau_{\delta,k}$, results very similar to  Prop.    \ref{Phi-regular}, \ref{classic-22-Phi-k} have been  established  in \cite{N21},
however,  the question whether or not Conds. \ref{1}, \ref{2} are decidable remains open. 

\noindent\hspace*{4mm}
 -- More generally, it appears natural to study, from the point of view of decidability, the behavior of each of the conds. \ref{1}--\ref{4}  in the framework of other topologies in the free monoid. 
Without being exhaustive, we mention
the so-called additivity preserving  quasi-metrics  \cite{CSY01},  or metrics based on absent words \cite{CMR20}. 
 
\noindent\hspace*{4mm}
 -- From another point of view, a quasi-metric being fixed over $A^*$, presenting families of codes satisfying all the best Conds. \ref{1}--\ref{4} would be desirable.
\bibliographystyle{plain}
\bibliography{TOPOLOGY}

\section*{Appendix}
\label{deeper-decidability}
In view   of Theorem \ref{decidable}, in what follows we provide some outline of basic results in order to implement corresponding algorithms:

\medbreak\noindent
(i) {\it Regular operations,  boolean operations}

The proof of Prop. \ref{properties-reg-A} lays upon the fact that every regular (resp., boolean) operation among regular sets
can be translated in term of corresponding ones among finite automata (see eg. \cite[Chapter 3]{HU79}). 
Operations such as  quotient, direct,  or inverse image under monoid endomorphism  are also involved. 

\medbreak\noindent
(ii)  {\it Deciding whether a regular set is empty}

In order to do so, starting with a finite automaton with behavior $X$, we will decide whether  or not  a successful path exists by applying some classical  graph mining algorithm. 
More precisely, according to \cite[Proposition 3.7]{HU79}, for a $n$-state automaton, we have $X=\emptyset$ iff.  no path with length less than $n$ can be successful.

\medbreak\noindent
(iii) {\it Regular expressions and automata for regular subsets of $A^*$}

A description of a regular set $X\subseteq A^*$ by using only the operations union, product and Kleene star is called a {\it regular expression} \cite[Sect. 2.5]{HU79}.
Several classical methods can be applied in order to switch between the representation of $X$ by an automaton or by a corresponding regular expression\cite{GHK22,S22}.
Furthermore, every regular subset of $A^*$  can be described by a so-called {\it unambiguous} regular expression, which is exclusively built from unambiguous regular operations (see eg. \cite[Corollary VII.8.3]{E74}).
In order to compute such an unambiguous regular expression, starting with a finite automaton with behavior $X$, several methods can be applied, 
the best-known certainly  being 
McNaughton and Yamada algorithm (see \cite{MY60} or \cite[Proposition 4.1.8]{BPR10}). 
Regarding Bernoulli measure,  unambiguous regular expressions allow to compute $\mu(X)$ by recursively applying the three following formulas: 
$\mu(R_1+R_2)=\mu(R_1)+\mu(R_2)$, $\mu(R_1R_2)=\mu(R_1)\mu(R_2)$, and $\mu(R^*)=(1-\mu(R))^{-1}$.

\medbreak\noindent
(iv) {\it Image of a regular set under a regular word binary relation}.

The proof of  Prop. \ref{property-rec1},  lays upon a result from  \cite{Ni68}, concerning a peculiar decomposition of regular relations. 
More precisely, given a regular relation $\tau\subseteq A^*\times A^*$,  there are a finite alphabet $Z$, a regular set $K\subseteq  Z^*$, and two monoid homomorphisms $\phi,\psi:Z^*\rightarrow A^*$ st. 
$\tau=\phi^{-1}\cdot\iota_K\cdot\psi$. In this equation,  the relation $\iota_K\subseteq Z^*\times Z^*$ is defined by $\iota_K(z)=\{z\}\cap K$, for every $z\in Z^*$.
In addition we have $Z=A'\cup A''$, where $A'$ and $A''$ are disjoint copies of the alphabet $A$. 
From the point of view of implementation, some  finite automata with behaviors $\phi$, $\psi$, and $\iota_K$ can be explicitly computed. 
In particular, starting with  a finite $A^*\times A^*$-automaton with behavior $\tau$, say ${\cal A}$,
a finite automaton ${\cal A}'$ with behavior $\iota_k$ can be constructed by associating, with each transition $p\xrightarrow{(a_1\cdots a_m,b_1\cdots b_n)}q$  (with $m,n\ge 0$) in  ${\cal A}$,
the  transition $p\xrightarrow{a'_1\cdots a'_mb''_1\cdots b''_n}q$ in ${\cal A}'$.
The mapping $\phi$ (resp., $\psi$), for its part, is constructed on the basis of the  free monoid projection $\pi_{A'}:Z^*\rightarrow A'^*$ (resp.,  $\pi_{A''}:Z^*\rightarrow A''^*$).
By the way,  a finite $A^*$-automaton with behavior $\tau(X)$ can be effectively constructed. 
The reader could find more precise details in \cite[Sect. IV.1.3.1]{S03}).

\medbreak\noindent
(v) {\it Deciding whether a regular set is a  code}

Given  a regular set $X$, the question can be classically solved by applying Sardinas and Patterson algorithm \cite{SP53}.
Starting with the set $X^{-1}X\setminus\{\varepsilon\}$, an ultimately periodic sequence, say $(U_n)_{n\ge 0}$,
is computed by applying the following induction  formula: $U_{n+1}=U_n^{-1}X\cup X^{-1}U_n$.
In view of Prop. \ref{property-rec1}, each term of $(U_n)_{n\ge 0}$ is a regular set.
The algorithm necessarily stops: this corresponds to either  $\varepsilon\in U_n$, or  $U_n=U_p$, for some pair of different integers $p<n$: 
 $X$ is  a code iff.   the second cond. holds. 
\begin{example}
{\rm Over $A=\{a,b\}$, the set $X=\{a,ab,baa\}$ is not a code.
Indeed, we have $U_0=\{b\}$, $U_1=U_0^{-1}X\cup X^{-1}U_0=\{aa\}$, $U_2=U_1^{-1}X\cup X^{-1}U_1=\{a\}$, and $U_3=U_2^{-1}X\cup X^{-1}U_2=\{\varepsilon, b\}$.
We verify that the equation $ab\cdot a\cdot a    = a\cdot baa$ holds among the words of $X$.
}
\end{example}
\end{document}